\numberwithin{equation}{section}
\DeclareMathOperator{\Tr}{Tr}
\DeclareMathOperator{\OpW}{Op_\hbar^w}
\DeclareMathOperator{\supp}{supp}
\DeclareMathOperator{\dist}{dist}
\DeclareMathOperator{\Real}{\mathrm{Re}}
\newcommand{\scp}[2]{\langle #1 , #2 \rangle}
\newcommand{\norm}[1]{\lVert #1 \rVert}
\newcommand{\abs}[1]{\left| #1 \right|}
\newcommand{\R}{\mathbb{R}}
\newcommand{\N}{\mathbb{N}}
\newtheorem{thm}{Theorem}[section]
\newtheorem{lemma}[thm]{Lemma}
\newtheorem{prop}[thm]{Proposition}
\newtheorem{corollary}[thm] {Corollary}
\theoremstyle{definition}
\newtheorem{assumption}[thm]{Assumption}
\theoremstyle{remark}
\newtheorem{remark}[thm]{Remark}
\begin{document}

\title{An optimal semiclassical bound on certain commutators}

\author{S{\o}ren Fournais and S{\o}ren Mikkelsen}

\affil{\small{Department of Mathematics, Aarhus University\\ Ny
    Munkegade 118\\ DK-8000 Aarhus C\\ Denmark}}

\date{\today}

\maketitle

\begin{abstract}
  We prove an optimal semiclassical bound on the trace norm of the
  following commutators $[\boldsymbol{1}_{(-\infty,0]}(H_\hbar),x]$,
  $[\boldsymbol{1}_{(-\infty,0]}(H_\hbar),-i\hbar\nabla]$ and
  $[\boldsymbol{1}_{(-\infty,0]}(H_\hbar),e^{itx}]$, where $H_\hbar$
  is a Schr\"odinger operator with a semiclassical parameter $\hbar$,
  $x$ is the position operator and $-i\hbar\nabla$ is the momentum
  operator. These bounds corresponds to a mean-field version of bounds
  introduced as an assumption by N. Benedikter, M. Porta and B. Schlein in a
  study of the mean-field evolution of a fermionic system.
\end{abstract}


\section{Introduction and main result}

We consider a Schr\"odinger operator $H_\hbar=-\hbar^2\Delta + V$
acting in $L^2(\R^d)$ with $d\geq2$. Here $\Delta$ is the Laplacian
acting in $L^2(\R^d)$ and $V$ is a real valued function. We will be
interested in the following trace norms of commutators:
\begin{align*}
  \norm{[\boldsymbol{1}_{(-\infty,0]}(H_\hbar),x_j]}_{1}, &&
  \norm{[\boldsymbol{1}_{(-\infty,0]}(H_\hbar),Q_j]}_{1} && \text{and}
  && \norm{[\boldsymbol{1}_{(-\infty,0]}(H_\hbar),e^{itx}]}_{1},
\end{align*}
where $Q_j=-i\hbar\partial_{x_j}$ and $x_j$ is the position operator
for $j \in\{1,\dots,d\}$. Moreover $\boldsymbol{1}_A$ denotes the
characteristic function of a set $A$ and $\norm{\cdot}_1$ denotes the
trace norm. The main theorem will be the bound for the first two
commutators and the bound on the last will follow as a corollary.

Let us specify the assumptions on the function $V$ for which we study
the operator $H_\hbar$.
\begin{assumption}\label{A.general_assumptions_on_V}
  Let $V:\R^d\rightarrow\R$ be a function for which there exists an
  open set $\Omega_V\subset\R^d$ and $\varepsilon>0$ such that
  \begin{enumerate}[$1)$]
  \item $V$ is in $C^\infty(\Omega_V)$.
	
  \item There exists an open bounded set $\Omega_\varepsilon$ such
    that $\overline{\Omega}_\varepsilon \subset \Omega_V$ such that
    $V\geq\varepsilon$ for all $x\in\Omega_\varepsilon^c$.

  \item $V\boldsymbol{1}_{\Omega_V^c}$ is an element of
    $L^1_{loc}(\R^d)$.
  \end{enumerate}
\end{assumption}
The assumption of smoothness in the set $\Omega_V$ is needed in order
to use the theory of pseudo-differential operators. The second
assumption is needed to ensure that we have non continuous spectrum in
$(-\infty,0]$ and enable us to localise the operator. The last
assumption is just to ensure that we can define the operator $H_\hbar$
by a Friedrichs extension of the associated form. We can now state our
main theorem:
\begin{thm}\label{A.Main_Theorem}
  Let $H_\hbar=-\hbar^2 \Delta + V$ be a Schr\"odinger operator acting
  in $L^2(\R^d)$ with $d\geq2$, where $V$ satisfies
  Assumption~\ref{A.general_assumptions_on_V} and let
  $Q_j=-i\hbar\partial_{x_j}$ for $j\in\{1,\dots,d\}$ futhermore, let $\hbar_0$ be a strictly positive number. Then the
  following bounds hold
  \begin{equation}\label{A.Bounds_to_prove}
    \norm{[\boldsymbol{1}_{(-\infty,0]}(H_\hbar),x_j]}_{1} \leq C \hbar^{1-d}
    \quad\text{and}\quad 
    \norm{[\boldsymbol{1}_{(-\infty,0]}(H_\hbar),Q_j]}_{1} \leq C \hbar^{1-d} ,
  \end{equation}
  for all $\hbar$ in $(0,\hbar_0]$, where $C$ is a positive constant.
\end{thm}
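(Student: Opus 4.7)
My plan is to split $\boldsymbol{1}_{(-\infty,0]}(H_\hbar)$ into a smooth piece and a thin energy layer of width $O(\hbar)$ around the Fermi energy $0$, control the smooth piece by the Helffer--Sj\"ostrand formula combined with pseudodifferential calculus, and control the layer by a semiclassical Weyl counting estimate. First I would carry out a spatial localisation: since $V\geq\varepsilon$ on $\Omega_\varepsilon^c$, standard Agmon-type exponential decay shows that eigenstates of $H_\hbar$ with eigenvalue in $(-\infty,0]$ are concentrated near $\overline{\Omega}_\varepsilon$, so for a cut-off $\chi\in C_c^\infty(\Omega_V)$ equal to $1$ in a neighbourhood of $\overline{\Omega}_\varepsilon$ one has $\boldsymbol{1}_{(-\infty,0]}(H_\hbar)=\chi\,\boldsymbol{1}_{(-\infty,0]}(H_\hbar)\,\chi$ up to an error of trace-norm $O(\hbar^\infty)$. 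Choosing $\widetilde\chi$ with $\widetilde\chi\chi=\chi$, the commutator with $x_j$ may be replaced modulo such errors by the commutator with the bounded, compactly supported operator $\widetilde\chi\, x_j$, and $V$ can be smoothly extended outside $\Omega_V$ without affecting the projector to any polynomial order in $\hbar$.

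\textbf{Smooth part.} Pick $f\in C^\infty(\R)$ with $f\equiv1$ on $(-\infty,-\hbar]$, $f\equiv0$ on $[\hbar,\infty)$, and $|f^{(k)}|\leq C_k\hbar^{-k}$, and decompose $\boldsymbol{1}_{(-\infty,0]}(H_\hbar)=f(H_\hbar)+r(H_\hbar)$ with $\supp r\subseteq[-\hbar,\hbar]$ and $|r|\leq1$. The Helffer--Sj\"ostrand formula then gives, for $A\in\{\widetilde\chi\, x_j,Q_j\}$,
\[
[f(H_\hbar),A]=-\frac{1}{\pi}\int_\C\bar\partial\widetilde f(z)\,(z-H_\hbar)^{-1}[H_\hbar,A](z-H_\hbar)^{-1}\,dL(z),
\]
with $\widetilde f$ an almost analytic extension of $f$. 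Since $[H_\hbar,A]$ equals $\hbar$ times an $\hbar$-pseudodifferential operator whose symbol is compactly supported in phase space (thanks to the spatial cut-off $\widetilde\chi$ and the energy localisation carried by $\bar\partial\widetilde f$), standard semiclassical trace-norm estimates for $\hbar$-pseudodifferential operators yield $\norm{[f(H_\hbar),A]}_1\leq C\hbar^{1-d}$. The $\hbar$ comes from the single commutator bracket and the $\hbar^{-d}$ from the phase-space volume.

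\textbf{Layer part.} For the remainder I would use
\[
\norm{[r(H_\hbar),A]}_{1}\leq 2\,\norm{A\,\boldsymbol{1}_{(-\infty,1]}(H_\hbar)}\,\norm{r(H_\hbar)}_1,
\]
where $\norm{\cdot}$ with no index denotes the operator norm. This operator norm is $O(1)$: trivially for $A=\widetilde\chi\, x_j$, and for $A=Q_j$ via the form bound $\hbar^2\norm{\nabla\psi}^2\leq\scp{\psi}{(H_\hbar+C_0)\psi}$, valid because Assumption~\ref{A.general_assumptions_on_V} implies $V\geq-C_0$. It then remains to prove
\[
\norm{r(H_\hbar)}_1\leq\Tr\boldsymbol{1}_{[-\hbar,\hbar]}(H_\hbar)=O(\hbar^{1-d}),
\]
which is the sharp Weyl remainder around the non-critical energy $0$.

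\textbf{Main obstacle.} The principal difficulty is precisely the counting estimate $\Tr\boldsymbol{1}_{[-\hbar,\hbar]}(H_\hbar)=O(\hbar^{1-d})$: under the weak hypotheses of Assumption~\ref{A.general_assumptions_on_V} this sharp remainder is not automatic, and I would expect to obtain it either by invoking Ivrii--H\"ormander-type results on the spectral function or, more elementarily, by comparing $\boldsymbol{1}_{[-\hbar,\hbar]}(H_\hbar)$ with an $\hbar$-pseudodifferential model operator on the energy shell $\{h=0\}$ and estimating its trace by stationary phase. Combining the smooth and layer estimates proves Theorem~\ref{A.Main_Theorem}, and the corresponding bound on $\norm{[\boldsymbol{1}_{(-\infty,0]}(H_\hbar),e^{itx_j}]}_1$ follows from the Duhamel identity
\[
[\boldsymbol{1}_{(-\infty,0]}(H_\hbar),e^{itx_j}]=i\int_0^t e^{i(t-s)x_j}\,[\boldsymbol{1}_{(-\infty,0]}(H_\hbar),x_j]\,e^{isx_j}\,ds,
\]
yielding a bound of order $|t|\,\hbar^{1-d}$.
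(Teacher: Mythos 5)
Your decomposition is genuinely different from the paper's. You split $\boldsymbol{1}_{(-\infty,0]}$ into a smooth, $\hbar$-dependent step $f$ and a remainder $r$ supported in the window $[-\hbar,\hbar]$, controlling the smooth piece by Helffer--Sj\"ostrand and pseudodifferential calculus, and the thin layer by a Weyl remainder count. The paper instead works with the sharp projector directly: it writes $[\boldsymbol{1}_{(-\infty,0]}(\mathcal{H}),\varphi]=\boldsymbol{1}_{(-\infty,0]}(\mathcal{H})\varphi\boldsymbol{1}_{(0,\infty)}(\mathcal{H})-\boldsymbol{1}_{(0,\infty)}(\mathcal{H})\varphi\boldsymbol{1}_{(-\infty,0]}(\mathcal{H})$, performs a \emph{two-sided} $\hbar$-dependent dyadic decomposition in energy shells $[-4^{n}\hbar,-4^{n-1}\hbar]$ and $[4^{m-1}\hbar,4^{m}\hbar]$, and bounds each cross-term by inserting two commutators to exploit the spectral gap together with the shell-wise Weyl count of Proposition~\ref{A.Weyl_used_local}, followed by a multiscale rescaling argument to remove the non-critical hypothesis, and finally the Agmon-type localisation which you also use. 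The Duhamel identity for the corollary is essentially the same.

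There are two genuine gaps in your argument. First, the smooth-part bound $\norm{[f(H_\hbar),A]}_1\leq C\hbar^{1-d}$ does not follow from ``standard semiclassical trace-norm estimates'': your $f$ has $|f^{(k)}|\sim\hbar^{-k}$, so $f(H_\hbar)$ is \emph{not} an $\hbar$-$\Psi$DO in any $\hbar$-uniform symbol class, and in the Helffer--Sj\"ostrand integral $\bar\partial\widetilde f$ is concentrated at distance $O(\hbar)$ from the real axis with magnitude $O(\hbar^{-1})$, precisely where the resolvent bounds degenerate. Note also that, given the layer bound $\norm{r(H_\hbar)}_1=O(\hbar^{1-d})$, the smooth-part estimate is essentially as strong as the theorem you want to prove, so asserting it as routine begs the question. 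Establishing this quantitatively is exactly what the paper's two-sided dyadic decomposition and double-commutator insertion in the proof of Theorem~\ref{A.Main_Local_Theorem _wnc} accomplish; a one-sided smooth/layer split does not by itself yield the summable geometric gain that the two-sided cross-terms provide.

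Second, you flag the Weyl layer estimate $\Tr\boldsymbol{1}_{[-\hbar,\hbar]}(H_\hbar)=O(\hbar^{1-d})$ as the ``principal difficulty'' but then treat it as a citable input. Under Assumption~\ref{A.general_assumptions_on_V} the energy $0$ may be a critical, possibly degenerate, value of $p^2+V$; the sharp $O(\hbar^{1-d})$ remainder can genuinely fail there, and Ivrii--H\"ormander-type results will not rescue it without further hypotheses. The paper removes the non-critical assumption by the multiscale argument of Theorem~\ref{A.main_local_nc}: cover $\supp\psi$ by balls of radius $f(x_k)\sim A^{-1}(V^2+|\nabla V|^4+\hbar^2)^{1/4}$, rescale each ball to unit size so that the effective Planck constant becomes $h_k=\hbar/f(x_k)^2\leq A^2$ and a non-critical bound holds on the rescaled ball, apply the local theorem, and sum using the finite-overlap property of the covering. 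Without such an argument, even the layer estimate is out of reach under the stated assumptions.
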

From Theorem~\ref{A.Main_Theorem} we get the corollary:
\begin{corollary}\label{A.cor_Main_Theorem}
  Let $H_\hbar=-\hbar^2 \Delta + V$ be a Schr\"odinger operator acting
  in $L^2(\R^d)$ with $d\geq2$, where $V$ satisfies
  Assumption~\ref{A.general_assumptions_on_V} futhermore, let $\hbar_0$ be a strictly positive number. Then the following
  bound holds
  \begin{equation}\label{A.Bound_to_prove_2}
    \norm{[\boldsymbol{1}_{(-\infty,0]}(H_\hbar),e^{i \langle t,x\rangle}]}_{1} \leq C \abs{t} \hbar^{1-d},
  \end{equation}
  for all $t$ in $\R^d$ and all $\hbar$ in $(0,\hbar_0]$, where $\langle t,x\rangle$ is the Euclidean
  inner product and $C$ is a positive constant.
\end{corollary}
Theorem~\ref{A.Main_Theorem} and
Corollary~\ref{A.cor_Main_Theorem} are semiclassical in the sense that they are of most interest in the cases where the semiclassical parameter $\hbar$ is small. The upper bound $\hbar_0$ on the semiclassical parameter is needed in order to control the constants as we do not have uniformity for $\hbar$ tending to infinity.   

The proofs of Theorem~\ref{A.Main_Theorem} and
Corollary~\ref{A.cor_Main_Theorem} are given in
section~\ref{A.Proof_of_Main}.  The proof of
Theorem~\ref{A.Main_Theorem} is divided into three parts. First a
local version of the theorem (see Theorem~\ref{A.Main_Local_Theorem
  _wnc}) is proven with a noncritical assumption
\eqref{A.non_critical_assumption}. This proof is based on local
Weyl-asymptotics proven in the paper \cite{MR1343781} and an $\hbar$
dependent dyadic decomposition which will be introduced in
the proof. In the first part we will not be considering the operator
$H_\hbar$ directly but an abstract operator $\mathcal{H}$ which satisfies
Assumption~\ref{A.local_assumptions} below. The abstract version is needed for the later multiscale argument.
 
The second part is to remove the non-critical condition by a
multiscale argument as in \cite{MR1343781} (see also
\cite{MR1631419,MR1240575}). The main idea is to make a partition of
unity and on each partition scale the operator in such a way that a
non-critical assumption is achieved and then use the theorem with the
non-critical condition. The final step in this part is to remove the
dependence of the partition by integration.

The third part is to first note that the theorem obtained in the
second part gives the desired estimate in the classically allowed
region $\{V<\varepsilon\}$ and then prove that the classically
forbidden region $\{V>\varepsilon\}$ contributes less to the error
term than the desired estimate. This is done by applying an Agmon type
bound on the eigenfunctions of the operator $H_\hbar$.

Commutator bounds of the type considered in this paper were introduced as assumptions in a series of papers by
N. Benedikter, M. Porta and B. Schlein
et.~al. \cite{MR3570479,MR3248060,MR3202863,MR3381147} where they
considered mean-field dynamics of fermions in different settings. The
bounds considered here are a first step to verifying their assumption,
since the bounds proven here correspond to a mean field version of the
bounds they need. The assumption reappeared in the paper
\cite{MR4009687}.

Already the mean-field version of the bounds, treated in this paper,
is non-trivial as they are optimal in terms of the semiclassical
parameter $\hbar$, which is easily seen by the calculus of
pseudo-differential operators. 


\section{Preliminaries}

\subsection{Assumptions and notation}

First we will describe the operators we are working with. Under
Assumption~\ref{A.general_assumptions_on_V} we can define the operator
$H_\hbar=-\hbar^2\Delta + V$ as the Friedrichs extension of the
quadratic form given by
\begin{equation*}
  \mathfrak{h}[f,g] = \int_{\R^d} \hbar^2\sum_{i=1}^d \partial_{x_i}f(x) \overline{\partial_{x_i}g(x)}  + V(x)f(x)\overline{g(x)}\;dx, \qquad f,g \in \mathcal{D}(\mathfrak{h}),
\end{equation*}
where
\begin{equation*}
  \mathcal{D}(\mathfrak{h}) = \left\{ f\in L^2(\R^d) | \int_{\R^d} \abs{p}^2 \abs{\hat{f}(p)}^2 \;dp<\infty \text{ and } \int_{\R^d} \abs{V(x)}\abs{f(x)}^2 \;dx <\infty \right\}.
\end{equation*}
In this set up the Friedrichs extension will be unique and
self-adjoint see e.g. \cite{MR0493420}. Moreover, we will also consider
operators that satisfy the following assumption
\begin{assumption}\label{A.local_assumptions}
  Let $\mathcal{H}$ be an operator acting in $L^2(\R^d)$ such that
  \begin{enumerate}[$1)$]
  \item $\mathcal{H}$ is selfadjoint and lower semibounded.

  \item There exists an open set $\Omega \subset \R^d$ and a
    realvalued function $V_{loc}$ in $C_0^ \infty(\R^d)$ such that
    $C_0^\infty(\Omega) \subset \mathcal{D}(\mathcal{H})$ and
    \begin{equation*}
      \mathcal{H} u = H_\hbar^{loc} u
    \end{equation*}
    for all $u$ in $C_0^\infty(\Omega)$, where
    $H_\hbar^{loc}=-\hbar^2 \Delta + V_{loc}$.
  \end{enumerate}
\end{assumption}
The above assumption is exactly the same as in \cite{MR1343781}. It is
important to note that the assumptions made on the the operator
$H_\hbar$ in Theorem~\ref{A.Main_Theorem} imply that $H_\hbar$
satisfies Assumption~\ref{A.local_assumptions} for a suitable
$V_{loc}$. When referring to this assumption further on we will omit
the $loc$ on the operator $H_\hbar^{loc}$ and the function $V_{loc}$
when we only consider an operator satisfying the assumption.

The construction of the operator via a Friedrichs extension will also
work for the local Schr\"odinger operator, where $V_{loc}$ is
$C_0^\infty(\R)$. But in this case the operator can also be
constructed as the closure of an $\hbar$-pseudo-differential operator
($\hbar$-$\Psi$DO) defined on the Schwarz space.  By an
$\hbar$-$\Psi$DO, $A = \OpW(a)$ we mean the operator with Weyl symbol
$a$, that is
\begin{equation*}
  \OpW(a)\psi(x) = \frac{1}{(2\pi \hbar)^{d}} \int_{\R^d} \int_{\R^d} e^{i \hbar^{-1} \scp{x-y}{p} } 
  a\left( \tfrac{x+y}{2},p\right) \psi(y) \;d{y}\;d{p},
\end{equation*}
for $\psi \in \mathcal{S}(\R^d)$ (the Schwarz space). The symbol $a$
is assumed to be in $C^\infty(\R_x^d \times \R_p^d)$ and to satisfy
the condition
\begin{equation}\label{A.sym_bound}
  \abs{\partial_x^\alpha \partial_p^\beta a(x,p)} \leq C_{\alpha,\beta} m (x,p),
\end{equation}
for all multi-index $\alpha$ and $\beta$ and some tempered weight
function $m$. The above integrals should be understod as oscillating
integrals. We need this as the results on Weyl-asymptotics 
needed is based on ($\hbar$-$\Psi$DOs). For more details see e.g. the
monographs \cite{MR897108,MR1735654,MR2952218}.

We call a number $E$ in $\R$ a non-critical value for a symbol $a$ if
\begin{equation*}
  (\nabla_x a(x,p),\nabla_p a(x,p)) \neq 0 \qquad \forall (x,p) \in a^{-1}(\{E\}). 
\end{equation*}
In the case where $a(x,p)=p^2+V(x)$ the non-critical condition can be
expressed only in terms of the function $V$ by assuming that
\begin{equation*}
  \abs{\nabla_x V(x)}^2 + \abs{E-V(x)}> 0, \qquad \forall (x,p) \in a^{-1}(\{E\}), 
\end{equation*}
since it is immediate that
\begin{equation*}
  \abs{\nabla_x a(x,p)}^2+\abs{\nabla_p a(x,p)}^2 = \abs{\nabla_x V(x)}^2 + 4\abs{E-V(x)}, \qquad \forall (x,p) \in a^{-1}(\{E\}).
\end{equation*}
%


\subsection{Optimal Weyl-asymptotics}

We are interested in optimal Weyl-asymptotics for an operator
$\mathcal{H}$ acting in $L^2(\R^d)$ satisfying
Assumption~\ref{A.local_assumptions}. When we only have one operator
we will not write the $loc$ subscript on the operator. In the
following we will denote the open ball with radius $R$ by
$B(0,R)$. For this kind of operators we have from \cite[Theorem 4.1]{MR1343781} the
following theorem:
\begin{thm}\label{A.Sobolev_4.1}
  Suppose the operator $\mathcal{H}$ acting in $L^2(\R^d)$ with
  $d\geq2$ obeys Assumption~\ref{A.local_assumptions} with
  $\Omega=B(0,4R)$ for $R>0$ and
  \begin{equation}\label{A.non_critical_assumption}
    \abs{V(x)} + \abs{\nabla V(x)}^2 + \hbar \geq c,
  \end{equation}
  for all $x$ in $B(0,2R)$ furthermore, let $\hbar_0$ be a strictly positive number. For $\varphi$ in $C_0^\infty(B(0,R/2))$ it
  holds that
  \begin{equation*}
    \Big|\Tr[\boldsymbol{1}_{(-\infty,0]}(\mathcal{H}) \varphi] -  \frac{1}{(2\pi\hbar)^d} 
    \int_{\R^d} \int_{\R^d} \boldsymbol{1}_{\{p^2 + V(x) \leq 0\}} (x,p) \varphi(x) \;d{x}\;d{p} \Big| \leq C\hbar^{1-d},
  \end{equation*}
  for $C$ a positive constant and all $\hbar$ in $(0,\hbar_0]$. The
  constant $C$ depends on the numbers $R$, $\hbar_0$ and $c$ in  \eqref{A.non_critical_assumption} and on the bounds on the derivatives of $V$ and $\varphi$.
\end{thm}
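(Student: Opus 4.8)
\emph{Proof sketch (along the lines of \cite{MR1343781}).}
This is the sharp semiclassical Weyl law with a spatial cut-off --- H\"ormander's counting-function estimate in semiclassical form --- and I would prove it by a Tauberian argument fed by a short-time parametrix of the propagator. First, two reductions. We may assume $\varphi\ge0$: write $\varphi=\varphi_1-\varphi_2$ with $\varphi_1,\varphi_2\in C_0^\infty(B(0,R/2))$ nonnegative ($\varphi_1$ a smooth bump dominating $\max(\varphi,0)$, $\varphi_2:=\varphi_1-\varphi$) and treat the two pieces separately. Next, a standard localisation based on finite propagation speed and Assumption~\ref{A.local_assumptions} ($\mathcal H$ coincides with $H_\hbar^{loc}=-\hbar^2\Delta+V_{loc}$ on $C_0^\infty(B(0,4R))$, which contains a neighbourhood of $\supp\varphi$) gives $\Tr[\boldsymbol{1}_{(-\infty,0]}(\mathcal H)\varphi]=\Tr[\boldsymbol{1}_{(-\infty,0]}(P)\varphi]+O(\hbar^\infty)$ for $P:=H_\hbar^{loc}$, a self-adjoint $\hbar$-$\Psi$DO with symbol $a(x,p)=p^2+V_{loc}(x)$ and $V_{loc}=V$ on $\supp\varphi$, so the Weyl term is unchanged. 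Fix $\psi\in C_0^\infty(\R)$ with $\psi=1$ on $[\inf\sigma(P),1]$; then $\boldsymbol{1}_{(-\infty,\lambda]}(P)\varphi=\boldsymbol{1}_{(-\infty,\lambda]}(P)\psi(P)\varphi$ for $\lambda\le1$ and $\psi(P)\varphi$ is trace class, so the quantities below are well defined. The $\hbar$ in \eqref{A.non_critical_assumption} is only active when $\hbar$ is comparable to $c$, a range in which the claimed estimate is trivial (both sides bounded); so assume $\hbar$ small, and \eqref{A.non_critical_assumption} then shows that $a$ has no critical value over $\supp\varphi$ in a fixed neighbourhood $I\subset\R$ of $0$. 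Hence $\{a=\lambda\}$ is a smooth hypersurface over $\supp\varphi$ for $\lambda\in I$, and (using $d\ge2$) the Weyl function
\[
  W(\lambda):=\frac1{(2\pi\hbar)^d}\int_{\R^d}\int_{\R^d}\boldsymbol{1}_{\{a(x,p)\le\lambda\}}\varphi(x)\,dx\,dp
\]
is $C^1$ with $0\le W'(\lambda)\le C\hbar^{-d}$ on $I$, with constants controlled by $c$ and the $C^2$-norm of $V$.

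Next I would set up the Tauberian scheme. Let $N(\lambda):=\Tr[\boldsymbol{1}_{(-\infty,\lambda]}(P)\varphi]$, nondecreasing and vanishing for $\lambda$ below $\inf\sigma(P)$; the theorem becomes $\abs{N(0)-W(0)}\le C\hbar^{1-d}$. Fix $\chi\in\mathcal S(\R)$ with $\widehat\chi\in C_0^\infty((-T_0,T_0))$, $\widehat\chi(0)=1$, $T_0>0$ small, and set $\chi_\hbar(\mu):=\hbar^{-1}\chi(\mu/\hbar)$. The two inputs are: \emph{(a)} the rough window bound $N(\lambda+\hbar)-N(\lambda)\le C\hbar^{1-d}$ uniformly for $\lambda$ in a fixed bounded range, which follows from the standard Tauberian lemma once one has $\abs{\Tr[e^{-itP/\hbar}\psi(P)\varphi]}\le C\hbar^{-d}$ for $\abs t\le T_0$ (a crude parametrix, or coherent states); and \emph{(b)} the smoothed asymptotics $(dN\ast\chi_\hbar)(\lambda)=W'(\lambda)+O(\hbar^{1-d})$ for $\lambda\in I$. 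Granting these, monotonicity together with (a) gives $\abs{N(0)-(N\ast\chi_\hbar)(0)}\le C\hbar^{1-d}$ (because $\int(1+\abs\mu/\hbar)\,\abs{\chi_\hbar(\mu)}\,d\mu$ is bounded uniformly in $\hbar$); and integrating the smoothed density from an energy far below $0$, where $N$ and $W$ vanish modulo $O(\hbar^\infty)$ (Agmon estimate), up to $0$ --- using (b) at each non-critical energy, the uniform bound $(dN\ast\chi_\hbar)(\lambda)\le C\hbar^{-d}$ everywhere, and the fact that the critical values of $a$ form a Lebesgue-null set --- yields $(N\ast\chi_\hbar)(0)=W(0)+O(\hbar^{1-d})$. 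This is H\"ormander's Tauberian argument; it uses only a short-time parametrix, for which $\hbar^{1-d}$ is the optimal power, a long-time non-periodicity analysis being needed only to sharpen the remainder further.

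The heart of the matter is input (b). By Fourier inversion,
\[
  (dN\ast\chi_\hbar)(\lambda)=\frac1{2\pi\hbar}\int_{\R}e^{it\lambda/\hbar}\,\widehat\chi(t)\,\Tr\!\big[e^{-itP/\hbar}\psi(P)\varphi\big]\,dt+O(\hbar^\infty),
\]
and here the energy cut-off $\psi(P)$ commutes with $e^{-itP/\hbar}$ and microlocally confines matters to $\abs p\le C$ over $\supp\varphi$, so that for $\abs t\le T_0$ the Hamilton flow of $a$ stays in a fixed compact set and admits a generating function. One then constructs the short-time WKB / Fourier-integral parametrix for $e^{-itP/\hbar}\psi(P)$, with phase the generating function of the flow of $a$ and an amplitude carrying a classical expansion, valid modulo $O(\hbar^\infty)$ for $\abs t\le T_0$ (at $t=0$ the phase vanishes and the amplitude is $1$). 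Taking the trace, substituting into the $t$-integral and applying stationary phase in $(t,x,p)$: the phase is stationary exactly on $\{t=0\}\cap\{a(x,p)=\lambda\}$, the exclusion of stationary points for $0<\abs t\le T_0$ being precisely where non-criticality of $\lambda\in I$ is used (a critical point of $a$ at energy $\lambda$ is a fixed point of the flow and would contribute for every $t$); the leading term is then $(2\pi\hbar)^{-d}\int_{\{a=\lambda\}}\varphi\,dS/\abs{\nabla a}=W'(\lambda)$, and the remaining terms are $O(\hbar^{1-d})$, uniformly over $\lambda\in I$ by the quantitative non-critical bounds. This proves (b), and combining everything, $\Tr[\boldsymbol{1}_{(-\infty,0]}(\mathcal H)\varphi]=N(0)+O(\hbar^\infty)=W(0)+O(\hbar^{1-d})$, as claimed.

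The main obstacle is this third step: producing a usable short-time parametrix for the propagator $e^{-itP/\hbar}$ and carrying out the stationary-phase evaluation of the resulting wave trace with a remainder that is \emph{uniform} for $\lambda$ near $0$. The non-critical hypothesis is essential there --- it keeps the energy hypersurface $\{a=\lambda\}$, and hence $W'$, quantitatively controlled, and, more importantly, it prevents the space--time phase from having stationary points away from $t=0$, i.e.\ rules out fixed points of the flow at energy $\lambda$. By contrast, the Tauberian passage from the smoothed to the sharp counting function is routine, but it is exactly what pins the remainder exponent at $\hbar^{1-d}$ and no smaller.
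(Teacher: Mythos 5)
The paper does not prove Theorem~\ref{A.Sobolev_4.1}; it is imported verbatim as \cite[Theorem 4.1]{MR1343781}. The only original material the paper adds is the remark immediately after the statement explaining the role of the extra $\hbar$ in \eqref{A.non_critical_assumption} (the dichotomy $\abs{V}+\abs{\nabla V}^2\geq c/2$ versus $\hbar\geq c/2$), which you have in fact reproduced correctly in your reduction. So there is no in-paper proof for a direct comparison; what follows assesses your sketch on its own terms.

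Your Tauberian--parametrix scheme is the right general strategy and is in the spirit of the cited work: localise to the model operator $P=H_\hbar^{loc}$, build a short-time FIO parametrix for $e^{-itP/\hbar}\psi(P)$, apply stationary phase to the smoothed spectral density, and then pass to the sharp counting function via the Tauberian lemma, with non-criticality ruling out fixed points of the flow on $\{a=\lambda\}$ and controlling $W'$. Two places where the sketch as written does not quite close. First, the reduction $\Tr[\boldsymbol{1}_{(-\infty,0]}(\mathcal H)\varphi]=\Tr[\boldsymbol{1}_{(-\infty,0]}(P)\varphi]+O(\hbar^\infty)$ is not a direct consequence of finite propagation speed, since $\boldsymbol{1}_{(-\infty,0]}$ is not a short-time function of the propagator; the replacement result available (the analogue of Theorem~\ref{A.change_operator}) applies to $f\in C_0^\infty$, and the extension to the sharp projection already requires the Tauberian estimate you are trying to prove, so this localisation has to be interleaved with the argument rather than performed as a preliminary step. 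Second, and more substantively, the step \enquote{$(N\ast\chi_\hbar)(0)=W(0)+O(\hbar^{1-d})$ by integrating the smoothed density from far below $0$, using (b) at non-critical energies, $(dN\ast\chi_\hbar)\le C\hbar^{-d}$ everywhere, and Sard} does not follow. Estimate (b) is uniform only on the fixed window $I$ around the \emph{single} energy $0$ where \eqref{A.non_critical_assumption} gives non-criticality; outside $I$ you control neither $(dN\ast\chi_\hbar)-W'$ better than $O(\hbar^{-d})$ nor even $W'$ itself, which can blow up at critical values of $a$. That the critical \emph{values} are Lebesgue-null (Sard) does not bound the measure of the set where the pointwise comparison degrades, and so the integral of the error over $[-M,0]$ is not $O(\hbar^{1-d})$ by this reasoning. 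The standard way to close this is to estimate $(N\ast\chi_\hbar)(0)$ directly as $\Tr[g_\hbar(P)\varphi]$ for the corresponding smoothed step $g_\hbar$, obtaining its semiclassical expansion with the Weyl term $\int g_\hbar\,dW$ from the short-time parametrix, and then to use the non-criticality \emph{only near $0$} to show $\int g_\hbar\,dW=W(0)+O(\hbar^{1-d})$ via $W'\in L^\infty(I)$ and the concentration of $\chi_\hbar$ at scale $\hbar$. With that correction, and with the routine but nontrivial Tauberian lemma handled as in the references, your outline is a valid route to the theorem.
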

One can note that in our \enquote{non-critical}
assumption~\eqref{A.non_critical_assumption} in the above theorem
there has appeared an $\hbar$. This assumption would either imply that
$ \abs{V(x)} + \abs{\nabla V(x)}^2 \geq c/2$ or $\hbar\geq c/2$. In
the first case the assumption gives us our noncritical assumption. In
the second both sides will be finite and the formula can be made true
by an appropriate choice of constants.
\begin{prop}\label{A.Weyl_used_local}
  Suppose the operator $\mathcal{H}$ acting in $L^2(\R^d)$ with
  $d\geq2$ obeys Assumption~\ref{A.local_assumptions} with
  $\Omega=B(0,4R)$ for $R>0$. Moreover suppose there is an
  $\varepsilon>0$ such that
  \begin{equation}\label{A.non_critical_assumption_prop}
    \abs{V(x)-E} + \abs{\nabla V(x)}^2 + \hbar \geq c,
  \end{equation}
  for all $x$ in $B(0,2R)$ and all $E$ in
  $[-2\varepsilon,2\varepsilon]$ furthermore let $\hbar_0$ be a strictly positive number. For $\varphi$ in
  $C_0^\infty(B(0,R/2))$ and two numbers $a$ and $b$  such that
  \begin{equation*}
    -\varepsilon < a \leq b <\varepsilon,
  \end{equation*}
  it holds that
  \begin{equation*}
    \Tr[\boldsymbol{1}_{[a,b]}(\mathcal{H}) \varphi] \leq C_1 | b-a| \hbar^{-d} + C_2\hbar^{1-d},
  \end{equation*}   
  for two positive constants $C_1$ and $C_2$ and all $\hbar$ in $(0,\hbar_0]$.  The
  constants $C_1$ and $C_2$ depend only on the numbers $R$, $\hbar_0$ and $c$ in  \eqref{A.non_critical_assumption_prop} and on the bounds on the derivatives of $V$ and $\varphi$.\end{prop}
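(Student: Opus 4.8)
The plan is to deduce Proposition~\ref{A.Weyl_used_local} from Theorem~\ref{A.Sobolev_4.1} by exploiting the fact that the left-hand side $\Tr[\boldsymbol{1}_{[a,b]}(\mathcal{H})\varphi]$ counts (weighted by $\varphi$) eigenvalues in an interval rather than a half-line, and that the interval $[a,b]$ sits strictly inside $[-2\varepsilon,2\varepsilon]$ where the noncritical condition~\eqref{A.non_critical_assumption_prop} holds for every shift $E$. First I would write the spectral projection onto $[a,b]$ as a difference of two half-line projections,
\begin{equation*}
  \boldsymbol{1}_{[a,b]}(\mathcal{H}) = \boldsymbol{1}_{(-\infty,b]}(\mathcal{H}) - \boldsymbol{1}_{(-\infty,a)}(\mathcal{H}),
\end{equation*}
and absorb the measure-zero discrepancy between $\boldsymbol{1}_{(-\infty,a)}$ and $\boldsymbol{1}_{(-\infty,a]}$ (harmless since the classical phase-space volume of $\{p^2+V(x)=a\}$ vanishes under the noncritical hypothesis; alternatively one just works with closed intervals throughout). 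Then $\Tr[\boldsymbol{1}_{[a,b]}(\mathcal{H})\varphi]$ equals the difference of the two traces $\Tr[\boldsymbol{1}_{(-\infty,b]}(\mathcal{H})\varphi]$ and $\Tr[\boldsymbol{1}_{(-\infty,a]}(\mathcal{H})\varphi]$.

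Next I would apply Theorem~\ref{A.Sobolev_4.1} to each of these two traces, but with the operator $\mathcal{H}$ replaced by the shifted operator $\mathcal{H}-b$ (respectively $\mathcal{H}-a$). The point is that $\boldsymbol{1}_{(-\infty,b]}(\mathcal{H}) = \boldsymbol{1}_{(-\infty,0]}(\mathcal{H}-b)$, and $\mathcal{H}-b$ still satisfies Assumption~\ref{A.local_assumptions} with the same $\Omega=B(0,4R)$ and the shifted local potential $V-b$; moreover the noncritical assumption~\eqref{A.non_critical_assumption} for $V-b$ reads $\abs{V(x)-b}+\abs{\nabla V(x)}^2+\hbar\geq c$, which is exactly~\eqref{A.non_critical_assumption_prop} at $E=b\in[-2\varepsilon,2\varepsilon]$. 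The bounds on derivatives of $V-b$ are the same as for $V$, and $b$ ranges over a bounded set, so the constant $C$ in Theorem~\ref{A.Sobolev_4.1} can be chosen uniformly in $a,b\in(-\varepsilon,\varepsilon)$. Applying the theorem to both shifts gives
\begin{equation*}
  \Big| \Tr[\boldsymbol{1}_{[a,b]}(\mathcal{H})\varphi] - \frac{1}{(2\pi\hbar)^d}\int_{\R^d}\int_{\R^d} \big( \boldsymbol{1}_{\{p^2+V(x)\leq b\}} - \boldsymbol{1}_{\{p^2+V(x)\leq a\}}\big)(x,p)\,\varphi(x)\,dx\,dp \Big| \leq 2C\hbar^{1-d}.
\end{equation*}

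Finally I would estimate the classical phase-space integral. The difference of indicators is supported on $\{a<p^2+V(x)\leq b\}$, and for each fixed $x$ in the (bounded) support of $\varphi$ the $p$-volume of $\{a-V(x)<p^2\leq b-V(x)\}$ is at most the volume of a spherical shell, which is bounded by a constant (depending on the diameter of $\supp\varphi$, on $\|V\|_{L^\infty(\supp\varphi)}$, and on $\varepsilon$) times $\abs{b-a}$; integrating against $\varphi$ over the bounded set $\supp\varphi$ yields a bound $C_1\abs{b-a}$ for the integral, hence $C_1\abs{b-a}\hbar^{-d}$ after the prefactor. Combining with the $2C\hbar^{1-d}$ error term and renaming constants gives the claimed estimate $\Tr[\boldsymbol{1}_{[a,b]}(\mathcal{H})\varphi]\leq C_1\abs{b-a}\hbar^{-d}+C_2\hbar^{1-d}$. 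I do not expect any serious obstacle here; the only point requiring a little care is the uniformity of the constant in Theorem~\ref{A.Sobolev_4.1} as the energy shift varies over $[-2\varepsilon,2\varepsilon]$ — this is why the hypothesis~\eqref{A.non_critical_assumption_prop} is stated uniformly in $E$ and over a slightly larger interval $[-2\varepsilon,2\varepsilon]$ than the interval $[a,b]\subset(-\varepsilon,\varepsilon)$ where it will actually be used — together with checking that the spherical-shell volume estimate is genuinely uniform for $x$ in the compact set $\supp\varphi$.
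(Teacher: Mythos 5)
Your proof is correct and follows the paper's first two steps verbatim: write $\boldsymbol{1}_{[a,b]}$ as a difference of half-line spectral projections, then apply Theorem~\ref{A.Sobolev_4.1} to the shifted operators $\mathcal{H}-b$ and $\mathcal{H}-a$, noting that the shifted noncritical hypothesis holds uniformly for $E\in[-2\varepsilon,2\varepsilon]$. The one technicality you pass over silently is that $V-b$ is no longer compactly supported, which Assumption~\ref{A.local_assumptions} nominally requires; the paper handles this by replacing the constant $b$ with $b\chi(x)$ for a cutoff $\chi\in C_0^\infty(B(0,4R))$ equal to $1$ on $B(0,3R)$, which changes nothing after localizing to $B(0,2R)$. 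That is worth a line, but it is not a gap in the argument.

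Where you genuinely diverge from the paper is in the final estimate of the classical phase-space integral. The paper invokes the coarea formula, rewriting the integral as $\int_a^b \int_{\{p^2+V(x)=E\}} \varphi(x)\,\abs{(\nabla_x V,\nabla_p p^2)}^{-1}\,dS\,dE$, and bounds the inner surface integral uniformly in $E$; this requires the genuine noncritical condition $\abs{V(x)-E}+\abs{\nabla V(x)}^2\geq c/2$ to keep $\abs{(\nabla_x V,\nabla_p p^2)}$ away from zero on the level set, which in turn forces the paper to split off the degenerate case $\hbar> c/2$ (where \eqref{A.non_critical_assumption_prop} holds trivially with no geometric content) and treat it separately by brute force. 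Your approach instead fixes $x\in\supp\varphi$ and computes the $p$-volume of the spherical shell $\{a-V(x)<\abs{p}^2\leq b-V(x)\}$ directly, bounding it by $C\abs{b-a}$ via the mean value theorem applied to $t\mapsto t^{d/2}$ on a bounded interval; this is where $d\geq 2$ enters, ensuring the derivative $\tfrac{d}{2}t^{d/2-1}$ stays bounded near $t=0$. The shell argument uses only Fubini and the explicit form of the symbol $p^2+V(x)$, never the noncritical condition, so the case split disappears entirely. This is a real simplification; the trade-off is that it is specific to the kinetic term $p^2$, whereas the coarea computation would survive a more general principal symbol. For the proposition as stated, your version is the cleaner one.
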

\begin{remark}
We suppose we have an operator $\mathcal{H}$ acting in $L^2(\R^d)$ with
  $d\geq2$, which obeys Assumption~\ref{A.local_assumptions} with
  $\Omega=B(0,4R)$ for $R>0$. If it is assumed that there exists a $c>0$ for which 
\begin{equation*}
    \abs{V(x)} + \abs{\nabla V(x)}^2 + \hbar \geq c,
 \end{equation*}
for all $x$ in $B(0,2R)$, then by continuity this would imply the existence of a $\tilde{c}>0$ and an $\varepsilon>0$ such that
  \begin{equation*}
    \abs{V(x)-E} + \abs{\nabla V(x)}^2 + \hbar \geq c,
  \end{equation*}
  for all $x$ in $B(0,2R)$ and all $E$ in $[-2\varepsilon,2\varepsilon]$. That is we could generalise the assumptions in the proposition. But we have chosen this form of the proposition due to later applications. 
\end{remark}
\begin{proof}
  We can rewrite the trace of interest as
  \begin{equation}\label{A.Prop_1}
    \begin{aligned}
      \MoveEqLeft \Tr[\boldsymbol{1}_{[a,
        b]}(\mathcal{H})\varphi] =
      \Tr[\boldsymbol{1}_{(-\infty,b]}(\mathcal{H})\varphi]
      - \Tr[\boldsymbol{1}_{(-\infty,a)}(\mathcal{H})\varphi].
    \end{aligned}
  \end{equation}      
  If we consider the trace
  $\Tr[\boldsymbol{1}_{(-\infty,b]}(\mathcal{H}) \varphi]$
  then we can rewrite this in the following way
  \begin{equation*}
    \Tr[\boldsymbol{1}_{(-\infty,b]}(\mathcal{H}) \varphi]=\Tr[\boldsymbol{1}_{(-\infty,0]}(\mathcal{H}-b) \varphi].
  \end{equation*}
  The operator $\mathcal{H}-b$ satisfies
  Assumption~\ref{A.local_assumptions} with $V$ replaced by
  $V-b$ and by assumption we have
  \begin{equation}\label{A.non_critical_assumption_prop_1}
    \abs{V(x)-b} + \abs{\nabla V(x)}^2 + \hbar \geq c,
  \end{equation}
  for all $x$ in $B(0,2R)$. The $b$ should be understood as $b\chi(x)$ where $\chi$ is $C_0^\infty( B(0,4R))$ and $\chi(x)=1$ for $x$ in $B(0,3R)$. Hence we can omit the $\chi$ when we are localised to $B(0,2R)$.  By Theorem~\ref {A.Sobolev_4.1} we
  have the following identity
  \begin{equation}\label{A.Prop_2}
    \begin{aligned}
      \MoveEqLeft
      \Tr[\boldsymbol{1}_{(-\infty,b]}(\mathcal{H})
      \varphi] = \Tr[\boldsymbol{1}_{(-\infty,0]}(\mathcal{H} -
      b) \varphi]
      \\
      &= \frac{1}{(2\pi\hbar)^d} \int_{\R^d} \int_{\R^d}
      \boldsymbol{1}_{\{p^2 + V(x) -b \leq 0\}} (x,p)
      \varphi(x) \;d{x}d{p} + \mathcal{O}( \hbar^{1-d})
      \\
      &= \frac{1}{(2\pi\hbar)^d} \int_{\R^d} \int_{\R^d}
      \boldsymbol{1}_{\{p^2 + V(x) \leq b\}} (x,p)
      \varphi(x) \;d{x}d{p} + \mathcal{O}( \hbar^{1-d}),
    \end{aligned}
  \end{equation}
  where the error term is independent of $b$. Analogously we get that
  \begin{equation}\label{A.Prop_3}
    \Tr[\boldsymbol{1}_{(-\infty,a]}(\mathcal{H}) \varphi] 
    =  \frac{1}{(2\pi\hbar)^d} \int_{\R^d} \int_{\R^d} \boldsymbol{1}_{\{p^2 + V(x) \leq a\}} 
    (x,p) \varphi(x) \;d{x}d{p} +  \mathcal{O}( \hbar^{1-d}).
  \end{equation}
  Since the two error terms are of the same order we can, when
  subtracting the two traces, add the two error terms and obtain a new
  error term of order $\hbar^{1-d}$. Hence we will consider the
  integral
  \begin{equation}\label{A.prop_phase_space_int}
    \int_{\R^d}\int_{\R^d} \boldsymbol{1}_{\{ p^2 +V(x) \leq b\}}(x,p) \varphi(x) 
    - \boldsymbol{1}_{\{ p^2 +V(x) \leq a\}}(x,p) \varphi(x) \;d{x}d{p}. 
  \end{equation}
  By assumption this integral is finite. In order to evaluate these
  integrals we note that by assumption we are in one of the following two
  cases
  \begin{equation}\label{A.prop_first_case}
    \hbar>\frac{c}{2}
  \end{equation}
  or
  \begin{equation}\label{A.non_critical_assumption_prop_2}
    \abs{V(x)-E} + \abs{\nabla V(x)}^2  \geq \frac{c}{2},
  \end{equation}
  for all $x$ in $B(0,2R)$ and all $E$ in
  $[-2\varepsilon,2\varepsilon]$.  In the first case
  \eqref{A.prop_first_case} we can estimate the integrals by a
  constant and replace $\hbar^{-d}$ by $\hbar^{1-d}$ at the cost of
  $\frac{2}{c}$. For the second case
  \eqref{A.non_critical_assumption_prop_2} we have, by the Coarea formula,
  the equality
  \begin{align}\label{A.Prop_4}
    \begin{split}
      \int_{\R^d}\int_{\R^d} \boldsymbol{1}_{\{ p^2 +V(x) \leq
        b\}}&(x,p) \varphi(x) - \boldsymbol{1}_{\{ p^2
        +V(x) \leq a\}}(x,p) \varphi(x) \;d{x}d{p}
      \\
      &= \int_{a}^{b } \int_{\{p^2+V(x)=E \}}
      \varphi(x) \frac{1}{\abs{(\nabla_x V(x),\nabla_p p^2)}} \;d{S}
      d{E},
    \end{split}
  \end{align}
  where $S$ is the surface measure. By support properties of $\varphi$ and \eqref{A.non_critical_assumption_prop_2} we have that
  \begin{equation}\label{A.Prop_6}
  	\sup_{E\in[-\varepsilon,\varepsilon]} \int_{\{p^2+V(x)=E \}}
      \varphi(x) \frac{1}{\abs{(\nabla_x V(x),\nabla_p p^2)}} \;d{S} \leq C.
  \end{equation}	
 Using \eqref{A.Prop_6} we get
  \begin{align}\label{A.Prop_5}
    \begin{split}
      \int_{a}^{b } &\int_{\{p^2+V(x)=E \}}
      \varphi(x)\frac{1}{\abs{(\nabla_x V(x),\nabla_p p^2)}} \;d{S}
      d{E} \leq    \int_{a}^{b} C d{E} \leq C |b-a|,
    \end{split}
  \end{align}
  where $C$ is the constant from \eqref{A.Prop_6}, which is independent of $a$, $b$ and $\hbar$. By
  combining \eqref{A.Prop_1}, \eqref{A.Prop_2}, \eqref{A.Prop_3},
  \eqref{A.Prop_4} and \eqref{A.Prop_5} we get
  \begin{equation*}
    \Tr[\boldsymbol{1}_{[a,b]}(\mathcal{H}) \varphi] \leq  C_1 | b-a| \hbar^{-d} + C_2\hbar^{1-d} .
  \end{equation*} 
  Which is the desired estimate and this ends the proof.
\end{proof}
The previous proposition gives that we can get the right order in
$\hbar$ of the trace if we consider sufficiently small intervals. This will be
a crucial point in the proof of Theorem~\ref{A.Main_Local_Theorem
  _wnc}.

Furthermore we will be needing a corollary to the
Cwikel-Lieb-Rosenbljum (CLR) bound. This corollary is stated in
\cite[Chapter 4]{MR2583992}.
\begin{corollary}\label{A.clr_cor}
  Let $d\geq1$, $\gamma>0$, $\lambda>0$ and $H=-\Delta + V$ be a
  Schr\"odinger operator acting in $L^2(\R^d)$ with
  $(V+\frac{\lambda}{2})_{-}$ in $L^{\frac{d}{2}+\gamma}(\R^d)$ and
  $V_{+}$ in $L^1_{loc}(\R^d)$. Then
  \begin{equation*}
    \Tr(\boldsymbol{1}_{(-\infty,-\lambda]}(H)) \leq \frac{2^\gamma}{\lambda^\gamma } \frac{1}{(4\pi)^{\frac{d}{2}}} \frac{\Gamma(\gamma)}{\Gamma(\tfrac{d}{2}+\gamma)} \int_{\R^d} (V(x)+\tfrac{\lambda}{2})_{-}^{\frac{d}{2}+\gamma} \,dx,
  \end{equation*}
  where $\Gamma$ is the gamma function.
\end{corollary}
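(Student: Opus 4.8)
The plan is to combine the Birman--Schwinger principle with the Kato--Seiler--Simon estimate on Schatten norms, after a harmless shift of the spectral parameter. This route is what I expect is intended, since it reproduces exactly the constant $\tfrac{1}{(4\pi)^{d/2}}\tfrac{\Gamma(\gamma)}{\Gamma(d/2+\gamma)}$, whereas a soft reduction to a Lieb--Thirring inequality would not. Write $N:=\Tr(\boldsymbol{1}_{(-\infty,-\lambda]}(H))$ and $W:=(V+\tfrac{\lambda}{2})_-\in L^{d/2+\gamma}(\R^d)$. First I would use the identity $\boldsymbol{1}_{(-\infty,-\lambda]}(H)=\boldsymbol{1}_{(-\infty,-\lambda/2]}(H+\tfrac{\lambda}{2})$ together with the form inequality $H+\tfrac{\lambda}{2}=-\Delta+(V+\tfrac{\lambda}{2})\geq -\Delta-W$. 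Since the counting function below a fixed level is monotone under such operator inequalities (min--max), this yields $N\leq \Tr(\boldsymbol{1}_{(-\infty,-\lambda/2]}(-\Delta-W))$. Here one must already have checked that $-\Delta-W$ is a well-defined self-adjoint, lower semibounded operator with discrete negative spectrum; for $\gamma>0$ and $d\geq 2$ the potential $W\in L^{d/2+\gamma}$ is infinitesimally form-bounded relative to $-\Delta$, which makes this standard (this is also the place where, for $d=1$, the exponent $\tfrac{1}{2}+\gamma$ must be large enough).

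Next I would apply the Birman--Schwinger principle at energy $-\tfrac{\lambda}{2}$: the number of eigenvalues of $-\Delta-W$ lying below $-\tfrac{\lambda}{2}$ is at most $\#\{k:\mu_k(K)\geq 1\}$, where $K:=W^{1/2}(-\Delta+\tfrac{\lambda}{2})^{-1}W^{1/2}\geq 0$. Using the elementary pointwise bound $\boldsymbol{1}_{[1,\infty)}(s)\leq s^p$, valid for all $s\geq 0$ and any $p>0$, with the choice $p:=\tfrac{d}{2}+\gamma$, one gets
\[
  N\leq \#\{k:\mu_k(K)\geq 1\}\leq \sum_k \mu_k(K)^p=\Tr(K^p)=\norm{K}_{S^p}^p,
\]
once $K$ is known to lie in the Schatten class $S^p$ (which comes out of the next step).

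For the Schatten bound I would write $K=T^*T$ with $T:=(-\Delta+\tfrac{\lambda}{2})^{-1/2}W^{1/2}=g(-i\nabla)f(x)$, where $g(\xi)=(|\xi|^2+\tfrac{\lambda}{2})^{-1/2}$ and $f(x)=W(x)^{1/2}$, so that $\norm{K}_{S^p}=\norm{T}_{S^{2p}}^2$. Since $2p=d+2\gamma\geq 2$, the Kato--Seiler--Simon inequality applies and gives $\norm{T}_{S^{2p}}\leq (2\pi)^{-d/(2p)}\norm{f}_{L^{2p}}\norm{g}_{L^{2p}}$, in particular $K\in S^p$. It remains to evaluate the two $L^{2p}$-integrals: $\norm{f}_{L^{2p}}^{2p}=\int_{\R^d}(V(x)+\tfrac{\lambda}{2})_-^{d/2+\gamma}\,dx$, and, by scaling together with the beta-type identity $\int_{\R^d}(1+|\eta|^2)^{-p}\,d\eta=\pi^{d/2}\Gamma(p-\tfrac{d}{2})/\Gamma(p)$ (finite since $p>\tfrac{d}{2}$), $\norm{g}_{L^{2p}}^{2p}=\int_{\R^d}(|\xi|^2+\tfrac{\lambda}{2})^{-p}\,d\xi=(\tfrac{\lambda}{2})^{-\gamma}\pi^{d/2}\Gamma(\gamma)/\Gamma(\tfrac{d}{2}+\gamma)$. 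Combining,
\[
  N\leq \norm{T}_{S^{2p}}^{2p}\leq (2\pi)^{-d}\pi^{d/2}\Big(\tfrac{\lambda}{2}\Big)^{-\gamma}\frac{\Gamma(\gamma)}{\Gamma(\tfrac{d}{2}+\gamma)}\int_{\R^d}(V(x)+\tfrac{\lambda}{2})_-^{d/2+\gamma}\,dx,
\]
and the asserted constant appears upon noting $(2\pi)^{-d}\pi^{d/2}=(4\pi)^{-d/2}$ and $(\lambda/2)^{-\gamma}=2^{\gamma}\lambda^{-\gamma}$.

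I expect the main obstacle to be the soft part rather than the computation, which is essentially forced once the right tools are fixed: justifying that $-\Delta-W$ is self-adjoint and lower semibounded and that the Birman--Schwinger correspondence (including monotonicity at the boundary value of the spectral parameter) holds under the sole hypothesis $W\in L^{d/2+\gamma}$, with no pointwise control; and ensuring $2p=d+2\gamma\geq 2$ so that Kato--Seiler--Simon is available, which is automatic for $d\geq 2$ and constrains $\gamma$ only in dimension one. As a cruder alternative, whenever the Lieb--Thirring inequality $\Tr((-\Delta-W)_-^{\gamma})\leq L_{\gamma,d}\int W_-^{\gamma+d/2}$ is available for the pair $(\gamma,d)$, one can bypass Birman--Schwinger entirely: using $(E_n+\tfrac{\lambda}{2})_-\geq\tfrac{\lambda}{2}$ for every eigenvalue $E_n\leq-\lambda$ of $H$ gives $N\leq (2/\lambda)^{\gamma}\Tr((H+\tfrac{\lambda}{2})_-^{\gamma})$, after which Lieb--Thirring finishes the job — though this version does not yield the explicit constant displayed in the statement.
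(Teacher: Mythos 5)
The paper does not prove this corollary itself; it cites it as a known result from \cite{MR2583992} (Lieb--Seiringer, Chapter~4), so there is no in-paper proof to compare against. Your argument is the standard one for obtaining exactly this constant, and it is the route taken in that reference: shift by $\lambda/2$, compare $H+\tfrac\lambda2$ with $-\Delta-W$ via min--max, apply the Birman--Schwinger principle at energy $-\tfrac\lambda2$, dominate the counting function of the Birman--Schwinger operator $K$ by $\Tr K^p$ with $p=\tfrac d2+\gamma$, factor $K=T^*T$ and invoke Kato--Seiler--Simon for $\norm{T}_{S^{2p}}$. The bookkeeping is right: $\Tr K^p=\norm{T}_{S^{2p}}^{2p}\le(2\pi)^{-d}\norm{W^{1/2}}_{L^{2p}}^{2p}\norm{(\lvert\xi\rvert^2+\tfrac\lambda2)^{-1/2}}_{L^{2p}}^{2p}$, and the momentum integral evaluates via the beta identity to $(\tfrac\lambda2)^{-\gamma}\pi^{d/2}\Gamma(\gamma)/\Gamma(\tfrac d2+\gamma)$, giving precisely $\tfrac{2^\gamma}{\lambda^\gamma}(4\pi)^{-d/2}\Gamma(\gamma)/\Gamma(\tfrac d2+\gamma)$ as claimed.

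Two small remarks. First, you are right to flag that Kato--Seiler--Simon requires the Schatten exponent $2p=d+2\gamma\ge 2$, so this argument as written does not cover $d=1$ with $\gamma<\tfrac12$; that regime is genuinely outside the reach of the KSS route and would need Cwikel-type weak-Schatten estimates with a different (and worse) constant. In the present paper this is immaterial, since the corollary is only invoked in Remark~\ref{A.clr_bound_remark} with $d\ge2$ and $\gamma=1$, so $2p=d+2\ge4$. Second, the crude alternative you mention at the end (shift, then apply a Lieb--Thirring inequality to $\Tr(H+\tfrac\lambda2)_-^\gamma$) is logically sound but, as you say, would not reproduce the displayed constant, since the sharp Birman--Schwinger--Schatten constant $\Gamma(\gamma)/\Gamma(\tfrac d2+\gamma)$ is not a known admissible Lieb--Thirring constant for general $(\gamma,d)$.
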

We will use this corollary in the following way.
\begin{remark}\label{A.clr_bound_remark}
  Let $H_\hbar=-\hbar^2\Delta+V$ be a Schr\"{o}dinger operator acting
  in $L^2(\R^d)$ and suppose it satisfies
  Assumption~\ref{A.local_assumptions}. We will later need an a priori
  estimate on the number
  $\Tr(\boldsymbol{1}_{(-\infty,\frac{\varepsilon}{4}]}(H_\hbar))$. To
  obtain this we will consider the operator
  $\widetilde{H}_\hbar= -\hbar^2\Delta+V -
  \tfrac{\varepsilon}{2}$. Clearly,
  \begin{equation}\label{A.local_eq14}
    \Tr(\boldsymbol{1}_{(-\infty,-\tfrac{\varepsilon}{4}]}(H_\hbar-\tfrac{\varepsilon}{2})) = \Tr(\boldsymbol{1}_{(-\infty,-\tfrac{\varepsilon}{4\hbar^2}]}(-\Delta + \tfrac{V}{\hbar^2}-\tfrac{\varepsilon}{2\hbar^2})).
  \end{equation}
  If we apply Corollary~\ref{A.clr_cor} to the right hand side of
  \eqref{A.local_eq14} with $\gamma=1$ and
  $\lambda=\tfrac{\varepsilon}{4\hbar^2}$ we get
  \begin{align}\label{A.local_eq15}
    \begin{split}
      \Tr(\boldsymbol{1}_{(-\infty,-\tfrac{\varepsilon}{4\hbar^2}]}(-\Delta
      + \tfrac{V}{\hbar^2}-\tfrac{\varepsilon}{2\hbar^2})) &\leq c_d
      \frac{\hbar^{2}}{\varepsilon} \int_{\R^d}
      (\tfrac{V(x)}{\hbar^2}-\tfrac{3\varepsilon}{8
        \hbar^2})_{-}^{\frac{d}{2}+1} \,dx
      \\
      &= \frac{ c_d}{\varepsilon \hbar^d} \int_{\R^d} (V(x)-
      \tfrac{3\varepsilon}{8})_{-}^{\frac{d}{2}+1} \,dx.
    \end{split}
  \end{align}
  The last integral in \eqref{A.local_eq15} is finite by
  Assumption~\ref{A.local_assumptions} since the support of
  $(V(x)-\tfrac{3\varepsilon}{8})_{-}$ is compact and the function is
  continuous. Combining \eqref{A.local_eq14} with \eqref{A.local_eq15}
  we get the bound
  \begin{equation}\label{A.local_eq16}
    \Tr(\boldsymbol{1}_{(-\infty,\tfrac{\varepsilon}{4}]}(H_\hbar)) \leq \frac{C}{\hbar^{d}}.
  \end{equation}
  where we have absorbed the integral and $\varepsilon$ into the
  constant.
\end{remark}


\subsection{Trace norm estimates of operators}

In this subsection we will list some results on trace norms and
estimates of trace norms for operators. First recall that for an
operator $A$ the trace norm is
\begin{equation*}
  \norm{A}_1=\Tr\left([AA^*]^\frac{1}{2}\right)
\end{equation*}
and the Hilbert-Schmidt norm is
\begin{equation*}
  \norm{A}_2=\sqrt{\Tr\left(AA^*\right)}
\end{equation*}
Moreover we will use the convention that $\norm{A}$ is the operator
norm of $A$.  The following lemma is a modification of
\cite[Lemma~3.9]{MR1343781}. The proofs are completely analogous.
\begin{lemma}\label{A.trace_norm_est_H}
  Let $H_\hbar=-\hbar^2\Delta + V$ be a Schr\"odinger operator acting
  in $L^2(\R^d)$ with $V$ in $ C_0^\infty(\R^d)$. Let $f$ be in
  $C_0^\infty(\R)$ and $\varphi$ in $C_0^\infty(\R^d)$. We let
  $r\in\{0,1\}$, $\hbar_0>0$ and $Q_j=-i\hbar\partial_{x_j}$ for
  $j \in\{1,\dots,d\}$. Then
  \begin{equation*}
    \norm{\varphi Q_j^r f(H_\hbar)}_1 \leq C \hbar^{-d},
  \end{equation*}
  for all $\hbar$ in $(0,\hbar_0]$. If $\psi$ is a bounded function from $C^\infty(\R^d)$ and $c>0$ such
  that
  \begin{equation}\label{A.lemme_dep_c}
    \mathrm{dist}[\supp(\varphi),\supp(\psi)] \geq c.
  \end{equation}
  Then for any $N$ in $\N$
  \begin{equation*}
    \norm{\varphi Q_j^r f(H_\hbar)\psi}_1 \leq C_N \hbar^{N},
  \end{equation*}	
  for all $\hbar$ in $(0,\hbar_0]$. Both constants $C$ and $C_N$ depend on the dimension, the
  functions $\varphi$ and $\psi$, the numbers $\hbar_0$, $\norm{\partial^\alpha V}_\infty$ for $\alpha$ in $\N^d$,
  $\norm{\partial^j f}_\infty$ for $j$ in $\N$, $c$ in \eqref{A.lemme_dep_c} and $\sup(\supp(f))$.
\end{lemma}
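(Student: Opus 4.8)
The plan is to reduce everything to standard $\hbar$-pseudodifferential calculus on a compact region of phase space, via the following steps.

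\medskip

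\textbf{Step 1: localisation by functional calculus.} First I would use that $f(H_\hbar)$ has a symbol supported (up to $\mathcal{O}(\hbar^\infty)$ errors) near the energy shell $\{p^2+V(x)\in\supp(f)\}$, which is a compact subset of $\R^d\times\R^d$ because $V\in C_0^\infty(\R^d)$ and $f$ has compact support. Concretely, since $H_\hbar$ can be realised as the closure of the $\hbar$-$\Psi$DO with symbol $p^2+V(x)$, the Helffer–Sjöstrand formula applied to an almost-analytic extension of $f$ gives $f(H_\hbar)=\OpW(a_f)+\hbar^\infty$-smoothing, where $a_f\in C_0^\infty(\R^d\times\R^d)$ with an asymptotic expansion $a_f\sim f(p^2+V(x))+\hbar^2(\dots)+\cdots$ and all terms supported in a fixed compact set; this is exactly the kind of statement available in the monographs cited (\cite{MR897108,MR1735654,MR2952218}) and used implicitly in \cite{MR1343781}. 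The factors $\varphi$ and $Q_j$ are then composed into this calculus: $\varphi Q_j^r f(H_\hbar)=\OpW(b)+\hbar^\infty$, with $b(x,p)=\varphi(x)p_j^r a_f(x,p)+\hbar(\dots)$, again compactly supported in phase space.

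\medskip

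\textbf{Step 2: trace-norm bound $\lesssim\hbar^{-d}$.} For a compactly supported symbol $b$, the operator $\OpW(b)$ is trace class and one has the standard estimate
\begin{equation*}
  \norm{\OpW(b)}_1 \leq C\hbar^{-d}\sum_{|\alpha|+|\beta|\leq 2d+1}\norm{\partial_x^\alpha\partial_p^\beta b}_\infty,
\end{equation*}
which can be proven by writing $\OpW(b)=\OpW(b)(1-\hbar^2\Delta+x^2)^{N}\cdot(1-\hbar^2\Delta+x^2)^{-N}$, estimating the first factor in Hilbert–Schmidt norm and the second in trace norm (using that $(1-\hbar^2\Delta+x^2)^{-N}$ has $\norm{\cdot}_1\lesssim\hbar^{-d}$ for $N>d$), or by a direct coherent-state/Weyl-quantisation argument. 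Applying this to the symbol $b$ from Step 1, and noting that its derivatives are bounded uniformly in $\hbar\in(0,\hbar_0]$ by quantities controlled by the data listed in the statement, gives $\norm{\varphi Q_j^r f(H_\hbar)}_1\leq C\hbar^{-d}$. The $\hbar^\infty$-smoothing remainder is trivially trace class with a much better bound.

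\medskip

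\textbf{Step 3: the disjoint-support estimate.} When $\dist(\supp\varphi,\supp\psi)\geq c>0$, the symbol $b$ of $\varphi Q_j^r f(H_\hbar)$ is supported in $\supp\varphi\times\R^d$, while $\psi$ is supported away from $\supp\varphi$ in the $x$-variable. The composition $\OpW(b)\psi=\OpW(b\#\psi)$ then has symbol $b\#\psi$, and the stationary-phase / oscillatory-integral analysis of the Moyal product $b\#\psi$ shows that away from the diagonal $\{x=y\}$ the integral kernel is $\mathcal{O}(\hbar^\infty)$ with all seminorms controlled; equivalently $b\#\psi=\mathcal{O}(\hbar^\infty)$ as a symbol, with constants depending on $c$. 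Hence $\norm{\varphi Q_j^r f(H_\hbar)\psi}_1\leq C_N\hbar^N$ for every $N$, again applying the trace-norm estimate of Step 2 to the resulting rapidly-decaying symbol (now even the $\hbar^{-d}$ prefactor is harmless).

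\medskip

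\textbf{Main obstacle.} The genuinely nontrivial point is Step 1 together with the uniform (in $\hbar$) control of all symbol seminorms of $a_f$ and of the remainders, i.e. making the functional calculus quantitative with constants depending only on the stated data ($\norm{\partial^\alpha V}_\infty$, $\norm{\partial^j f}_\infty$, $\sup(\supp f)$, $\hbar_0$, the dimension). Once the compactly-supported symbol with $\hbar$-uniform seminorms is in hand, Steps 2 and 3 are routine pseudodifferential bookkeeping. Since the paper states that this lemma is a modification of \cite[Lemma~3.9]{MR1343781} with "completely analogous" proofs, I would in practice simply invoke that reference for Step 1 and present Steps 2–3 as the short adaptation needed to accommodate the extra factor $Q_j^r$ (which only changes the symbol by the polynomial weight $p_j^r$, still compactly supported once multiplied by $a_f$).
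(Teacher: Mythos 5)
Your proposal is correct and takes essentially the same route as the paper: the paper proves nothing directly here, deferring entirely to \cite[Lemma~3.9]{MR1343781} with the remark that the proofs are ``completely analogous,'' and Sobolev's argument is precisely the pseudodifferential one you outline (compactly supported effective symbol for $f(H_\hbar)$ via functional calculus, the $\hbar^{-d}$ trace-Schatten bound, and pseudolocality for the disjoint-support estimate). One small imprecision worth noting: the Weyl symbol of $\varphi Q_j^r \OpW(a_f)$ is only \emph{asymptotically} supported in $\supp\varphi\times\R^d$ (via the Moyal expansion), so Step~3 is cleanest at the Schwartz-kernel level, where $\varphi(x)K(x,y)\psi(y)$ literally vanishes for $|x-y|<c$ and integration by parts in $p$ yields the $\hbar^\infty$ decay directly.
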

The following theorem is an extension of Theorem~3.12 from the paper
\cite{MR1343781} as an extra operator has been added. It is less
general in the sense that we only consider compactly supported,
smooth functions applied to the operator, whereas in the paper more
general functions are considered. Again we omit the easy modifications of the proof in \cite{MR1343781}.
\begin{thm}\label{A.change_operator}
  Let $\mathcal{H}$ satisfy Assumption~\ref{A.local_assumptions} with
  $\Omega=B(0,4R)$ for an $R>0$. Let $f$ be in $C_0^\infty(\R)$ and
  let $r\in\{0,1\}$, $\hbar_0>0$ and $Q_j=-i\hbar\partial_{x_j}$ for
  $j \in\{1,\dots,d\}$. If $\varphi$ is in $C_0^\infty(B(0,3R))$ then
  for any $N\geq0$
  \begin{align*}
    \norm{\varphi Q_j^r[f(\mathcal{H})-f(H_\hbar)]}_1 &\leq C_N
    \hbar^N \intertext{and} \norm{\varphi Q_j^r f(\mathcal{H})}_1
    &\leq C\hbar^{-d}
  \end{align*}
 for all $\hbar$ in $(0,\hbar_0]$, where the constants $C_N$ and $C$ only depend on the dimension and the numbers $\hbar_0$,
  $\norm{\partial^j f}_\infty$ for $j$ in $\N$ and $\norm{\partial^\alpha V}_\infty$, $\norm{\partial^\alpha\varphi}_\infty$ for $\alpha$ in $\N^d$.
\end{thm}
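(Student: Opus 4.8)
The plan is to deduce the second estimate from the first together with Lemma~\ref{A.trace_norm_est_H}, and to prove the first estimate by a Helffer--Sj\"ostrand argument which replaces the abstract operator $\mathcal{H}$ by the honest Schr\"odinger operator $H_\hbar=-\hbar^2\Delta+V$ (here $V=V_{loc}\in C_0^\infty(\R^d)$, so $H_\hbar$ is exactly of the type treated in Lemma~\ref{A.trace_norm_est_H}) up to an $O(\hbar^\infty)$ error, using crucially that the two operators agree on $C_0^\infty(\Omega)$ with $\Omega=B(0,4R)$. For the reduction, the triangle inequality gives
\[
  \norm{\varphi Q_j^r f(\mathcal{H})}_1\le\norm{\varphi Q_j^r\big[f(\mathcal{H})-f(H_\hbar)\big]}_1+\norm{\varphi Q_j^r f(H_\hbar)}_1,
\]
where the first term on the right is $\le C_N\hbar^N$ by the first assertion of the theorem and the second is $\le C\hbar^{-d}$ by the first bound of Lemma~\ref{A.trace_norm_est_H} (which applies since $\supp\varphi\subset B(0,3R)$ is compact). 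So it suffices to prove the first assertion.

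For this, choose an almost analytic extension $\tilde f\in C_0^\infty(\C)$ of $f$ with $\bar\partial\tilde f$ vanishing to infinite order on $\R$, so that $|\bar\partial\tilde f(z)|\le C_M|\mathrm{Im}\,z|^M$ for every $M$, and write
\[
  f(\mathcal{H})-f(H_\hbar)=\frac{1}{\pi}\int_\C\bar\partial\tilde f(z)\big[(z-\mathcal{H})^{-1}-(z-H_\hbar)^{-1}\big]\,dL(z).
\]
The key input is a localised resolvent identity. Fix $\chi\in C_0^\infty(\Omega)$ with $\chi\equiv1$ on a neighbourhood of $\supp\varphi$. Since $\mathcal{H}$ acts as the elliptic operator $-\hbar^2\Delta+V$ on $C_0^\infty(\Omega)$, interior elliptic regularity gives, for $g\in L^2(\R^d)$, that $u:=(z-\mathcal{H})^{-1}g$ lies in $H^2_{\mathrm{loc}}(\Omega)$ and $\chi u\in\mathcal{D}(H_\hbar)=H^2(\R^d)$, and a computation of $(z-H_\hbar)(\chi u)$ then yields
\[
  \chi(z-\mathcal{H})^{-1}=(z-H_\hbar)^{-1}\chi+\hbar\,(z-H_\hbar)^{-1}T_\chi\,(z-\mathcal{H})^{-1},
\]
where $T_\chi:=\hbar(\Delta\chi)+2i\sum_k(\partial_{x_k}\chi)Q_k$ is a first-order $\hbar$-differential operator with coefficients supported in $\supp\nabla\chi\subset\subset\Omega$ and with $\norm{T_\chi(z-H_\hbar)^{-1}}\le C_\chi(1+|z|)^{1/2}|\mathrm{Im}\,z|^{-1}$. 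This identity is the bridge from the abstract $\mathcal{H}$ to the $\Psi$DO $H_\hbar$, and it is where Assumption~\ref{A.local_assumptions} is used.

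Next I would iterate: pick a finite chain $\chi=\chi_0,\chi_1,\dots,\chi_M$ in $C_0^\infty(\Omega)$ with $\chi_\ell\equiv1$ on $\supp\nabla\chi_{\ell-1}$ and substitute repeatedly for $(z-\mathcal{H})^{-1}$, peeling off one factor $\hbar$ and one $T_{\chi_\ell}$ at each step. Since $\varphi=\varphi\chi_0$, this writes $\varphi\big[(z-\mathcal{H})^{-1}-(z-H_\hbar)^{-1}\big]$ as a sum of: (i) the single term $\varphi(z-H_\hbar)^{-1}(\chi_0-1)$, in which $\varphi$ and $\chi_0-1$ have disjoint supports; (ii) ``chain'' terms $\hbar^\ell\,\varphi(z-H_\hbar)^{-1}T_{\chi_0}(z-H_\hbar)^{-1}\cdots T_{\chi_{\ell-1}}(z-H_\hbar)^{-1}\chi_\ell$ for $1\le\ell\le M-1$; and (iii) the remainder $\hbar^M\,\varphi(z-H_\hbar)^{-1}T_{\chi_0}\cdots(z-H_\hbar)^{-1}T_{\chi_{M-1}}(z-\mathcal{H})^{-1}$. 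Inserting this into the Helffer--Sj\"ostrand integral (with the extra $Q_j^r$ on the left) and integrating in $z$, term (i) produces exactly $\varphi Q_j^r f(H_\hbar)(\chi_0-1)$, which is $\le C_N\hbar^N$ by the off-diagonal (second) bound of Lemma~\ref{A.trace_norm_est_H}; for (ii) and (iii) one estimates the $z$-dependent operators in trace norm, using the explicit prefactors $\hbar^\ell$ resp.\ $\hbar^M$, the bound on $T_\chi(z-H_\hbar)^{-1}$ and $\norm{(z-\mathcal{H})^{-1}}\le|\mathrm{Im}\,z|^{-1}$, and absorbing the resulting polynomial growth in $(1+|z|)$ and negative powers of $|\mathrm{Im}\,z|$ against the rapid decay of $\bar\partial\tilde f$; choosing $M$ large compared with $N+d$ then leaves an arbitrarily large power of $\hbar$.

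The step I expect to be the main obstacle is the trace-class bookkeeping in the chain terms (ii): for $d\ge2$ even $\varphi(z-H_\hbar)^{-1}\chi$ fails to be trace class, and the chains with small $\ell$ carry too few resolvents to be individually trace class. One has to regroup --- commuting the cut-offs inward and resumming products of resolvents of $H_\hbar$ against $\bar\partial\tilde f$ into operators of the form $\varphi Q_j^r g(H_\hbar)\chi$ with $g\in C_0^\infty(\R)$, which \emph{are} trace class with $\hbar^{-d}$ bounds by Lemma~\ref{A.trace_norm_est_H}, times honestly bounded factors that carry the $\hbar^\ell$ gains --- and this resummation is precisely the routine-but-tedious modification of the corresponding argument in \cite{MR1343781}. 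A secondary point requiring care is the justification of the domain and elliptic-regularity manipulations underlying the localised resolvent identity for the abstract operator $\mathcal{H}$.
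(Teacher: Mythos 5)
The paper gives no proof of Theorem~\ref{A.change_operator}; it simply declares it to be an easy modification of Theorem~3.12 in \cite{MR1343781}, so there is nothing in the text to compare you against directly. Your overall plan is nonetheless the right one and matches what that reference does: reduce the second bound to the first by the triangle inequality plus Lemma~\ref{A.trace_norm_est_H}, and prove the first via Helffer--Sj\"ostrand together with the localised resolvent identity
\[
  \chi(z-\mathcal{H})^{-1}=(z-H_\hbar)^{-1}\chi+\hbar\,(z-H_\hbar)^{-1}T_\chi\,(z-\mathcal{H})^{-1},
\]
iterated along a chain of cut-offs, term~(i) being disposed of by the off-diagonal bound of Lemma~\ref{A.trace_norm_est_H}. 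That identity, its derivation from elliptic regularity and Assumption~\ref{A.local_assumptions}, and the handling of~(i) are all correct.

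There is, however, a genuine gap in your treatment of the chain terms~(ii). Taking $M$ large only helps the remainder~(iii); the $\ell$-th chain term is what it is once $\ell$ is fixed, and carries only the factor $\hbar^\ell$. The resummation you sketch, into operators of the form $\varphi Q_j^r g(H_\hbar)\chi$ with trace norm $O(\hbar^{-d})$, would then give at best $O(\hbar^{\ell-d})$ for the $\ell$-th term, which for $\ell<N+d$ is nowhere near $O(\hbar^N)$. The ingredient your write-up omits is that $\varphi$ and the coefficients of $T_{\chi_0}$ have \emph{disjoint} supports (since $\chi_0\equiv1$ near $\supp\varphi$, the set $\supp\nabla\chi_0$ stays a positive distance from $\supp\varphi$), and this disjointness, not the prefactor $\hbar^\ell$, is what makes \emph{every} correction term $O(\hbar^\infty)$. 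Concretely, commute $T_{\chi_0}$ leftwards through $(z-H_\hbar)^{-1}$: each commutation kills the leading term because $\varphi T_{\chi_0}=0$ and replaces $T_{\chi_0}$ by $[H_\hbar,T_{\chi_0}]$, which is still supported in $\supp\nabla\chi_0$ and carries an extra factor of $\hbar$, so that after $K$ steps one gains $\hbar^K$ at the cost of $K$ further resolvents (harmless once absorbed by the decay of $\bar\partial\tilde f$). With this gain inserted, the trace-class bookkeeping you flag becomes routine: enough resolvents of $H_\hbar$ sandwiched between compactly supported cut-offs are trace class with norm $O(\hbar^{-d})$, and the commutator gain beats any fixed $\hbar^{-d}$ loss. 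Without explicitly invoking this off-diagonal mechanism, your ``choose $M$ large'' sentence does not establish the bound for the low-order chain terms.
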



\section{Local case}

In this section we will present the first step in the proof of
Theorem~\ref{A.Main_Theorem} where we prove a local version of the
theorem under a non-critical condition. It should be noted that we are
not trying to get optimal constants in the following.

\subsection{Auxiliary bounds}
Before we proceed we will consider a simple case where the function
applied to the operator is a smooth function with compact
support. Moreover we will prove a bound on a Hilbert-Schmidt norm
which will prove to be useful.

The first auxiliary result is a simple case of
Theorem~\ref{A.Main_Local_Theorem _wnc}, where we consider the same
commutators as in the theorem but we apply a smooth, compactly
supported function to our operator instead of the characteristic
function.

\begin{lemma}\label{A.simpel_case}
  Suppose the operator $\mathcal{H}$ obeys
  Assumption~\ref{A.local_assumptions} with $\Omega=B(0,4R)$ for $R>0$
  and let $f$ be in $C_0^\infty(\R)$ and $\hbar_0>0$. For $\varphi$ in
  $C_0^\infty(B(0,3R))$ and $Q_j=-i\hbar\partial_{x_j}$ for
  $j \in\{1,\dots,d\}$ it holds that
  \begin{equation*}
    \norm{[f(\mathcal{H}), \varphi]}_{1} \leq C \hbar^{1-d} \qquad\text{and}\qquad  \norm{[f(\mathcal{H}), \varphi Q_j]}_{1} \leq C\hbar^{1-d},
  \end{equation*}
  for all $\hbar$ in $(0,\hbar_0]$ and a positive constant $C$,
  where $C$ only depend on the dimension, the function $\varphi$, the
  numbers $\hbar_0$, $\norm{\partial^\alpha V}_\infty$ for $\alpha$ in $\N^d$, $\norm{\partial^j f}_\infty$ for $j$
  in $\N$ and $\sup(\supp(f))$.
\end{lemma}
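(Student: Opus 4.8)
The plan is to reduce both commutators to trace-norm estimates that are already available via Lemma~\ref{A.trace_norm_est_H} and Theorem~\ref{A.change_operator}, using the Helffer--Sj\"ostrand formula to convert $f(\mathcal H)$ into a resolvent integral. First I would pick an almost-analytic extension $\tilde f\in C_0^\infty(\C)$ of $f$ with $\abs{\bar\partial \tilde f(z)}\leq C_N \abs{\Im z}^N$ for every $N$, so that
\begin{equation*}
  f(\mathcal H) = \frac{1}{\pi}\int_{\C} \bar\partial\tilde f(z)\,(\mathcal H - z)^{-1}\,dL(z).
\end{equation*}
Then, for a bounded operator $B$ (to be taken as $\varphi$ or $\varphi Q_j$, each commuting issues aside), one has the standard identity
\begin{equation*}
  [f(\mathcal H), B] = -\frac{1}{\pi}\int_{\C} \bar\partial\tilde f(z)\,(\mathcal H - z)^{-1}\,[\mathcal H, B]\,(\mathcal H - z)^{-1}\,dL(z),
\end{equation*}
valid whenever $[\mathcal H,B]$ makes sense on a suitable core. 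The key point is that $[\mathcal H, B]$ is a \emph{lower order} operator: since $\mathcal H$ acts as $-\hbar^2\Delta + V$ on $C_0^\infty(\Omega)$ and $\varphi$ is supported in $B(0,3R)\subset\Omega$, we get $[\mathcal H,\varphi] = -\hbar^2[\Delta,\varphi] = \hbar(\text{first order in }\hbar\nabla)\cdot\hbar(\cdots)$, i.e. a sum of terms of the form $\hbar\, (\partial^\alpha\varphi)\, Q_k$ and $\hbar^2(\partial^\beta\varphi)$; similarly $[\mathcal H,\varphi Q_j]$ is a sum of terms $\hbar\,g\,Q_k$ and $\hbar^2 g$ and $\hbar\,g$ (the last coming from $[V,\varphi Q_j] = -\varphi\hbar(\partial_{x_j}V)$, which uses the smoothness of $V$ on $\Omega$), all with coefficients $g\in C_0^\infty(B(0,3R))$. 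Each such term carries an explicit power $\hbar^1$ (never $\hbar^0$), which is exactly the source of the gain from $\hbar^{-d}$ to $\hbar^{1-d}$.

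The second step is to bound the trace norm of each resulting piece. Write $\varphi_1$ for a function in $C_0^\infty(B(0,3R))$ equal to $1$ near $\supp\varphi$ and supported where $\mathcal H = H_\hbar^{loc}$; one can then insert $\varphi_1$ and handle the remainder by the off-diagonal decay in Lemma~\ref{A.trace_norm_est_H} / Theorem~\ref{A.change_operator}. Using the resolvent identity to trade one resolvent factor against $f(\mathcal H)$-type operators, the model estimate needed is of the form
\begin{equation*}
  \norm{(\mathcal H - z)^{-1} g\, Q_k\, (\mathcal H - z)^{-1}}_1 \leq C\,\abs{\Im z}^{-M}\,\hbar^{-d}
\end{equation*}
for some fixed finite $M$, with $C$ uniform in $z$ on the (bounded) support of $\tilde f$; multiplying by the leading $\hbar$ from $[\mathcal H, B]$ gives $\hbar^{1-d}$, and the factor $\abs{\Im z}^{-M}$ is absorbed by the almost-analyticity of $\bar\partial\tilde f$ upon integrating in $z$. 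To get the displayed model estimate I would factor $Q_k(\mathcal H-z)^{-1} = Q_k (H_\hbar^{loc})^{1/2}(H_\hbar^{loc}+c)^{-1/2}\cdots$ — more cleanly, write $g\,Q_k\,(\mathcal H-z)^{-1} = g\,Q_k\,\chi(H_\hbar^{loc})\,(\mathcal H-z)^{-1} + (\text{negligible})$ for a suitable $\chi\in C_0^\infty(\R)$ that is $1$ on $\supp\tilde f$, so that the trace-class input is precisely $\norm{g\,Q_k\,\chi(H_\hbar^{loc})}_1\leq C\hbar^{-d}$ from Lemma~\ref{A.trace_norm_est_H} (with $r=1$), and the remaining bounded factors $(\mathcal H-z)^{-1}$ contribute only operator-norm powers of $\abs{\Im z}^{-1}$. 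The term $[V,\varphi Q_j]$ requires that $\partial_{x_j}V$ be a smooth bounded function on a neighbourhood of $\supp\varphi$, which holds since $\supp\varphi\subset\Omega$ and $V\in C^\infty(\Omega)$; after multiplying by a cutoff this is an admissible coefficient $g$.

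The main obstacle, and the place where care is genuinely required, is the second bullet above: making rigorous the passage from $[\mathcal H, B]$ (an unbounded, only densely defined object) to a bona fide bounded — indeed trace-class — operator, and justifying the Helffer--Sj\"ostrand commutator identity in that setting. The clean way is to localise first: choose $\chi\in C_0^\infty(\R)$ with $\chi\equiv 1$ on $\supp f$, replace $f(\mathcal H)$ by $f(\mathcal H)\chi(\mathcal H)$, and work with $f(\mathcal H) = \varphi_2 f(\mathcal H) + (1-\varphi_2)f(\mathcal H)$ where $(1-\varphi_2)f(\mathcal H)$ commutes with $\varphi$ up to $O(\hbar^N)$ by the off-diagonal bound of Theorem~\ref{A.change_operator}; on the localised piece $\mathcal H$ may be replaced by the globally-defined $\hbar$-$\Psi$DO $H_\hbar^{loc}$ (again up to $O(\hbar^N)$ by Theorem~\ref{A.change_operator}), and for $H_\hbar^{loc}$ the commutator with $\varphi$ or $\varphi Q_j$ is computed honestly on $\mathcal S(\R^d)$. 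Once everything is phrased in terms of $H_\hbar^{loc}$ the estimates are exactly those of Lemma~\ref{A.trace_norm_est_H}. I would also remark that for the $\varphi Q_j$ commutator one should check that the error terms produced along the way are still multiplied by at least one power of $\hbar$; the only delicate contribution is $[V,\varphi Q_j]$, and it is first order in $\hbar$ as noted, so no term of order $\hbar^{-d}$ survives and the bound $\hbar^{1-d}$ stands.
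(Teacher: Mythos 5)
Your high-level idea matches the paper's in spirit (gain one power of $\hbar$ from $[\mathcal H,B]$, pay the expected $\hbar^{-d}$ from Lemma~\ref{A.trace_norm_est_H}, dispose of remainders by off-diagonal decay), but the proposed implementation has a genuine gap at the trace-class step, and the paper's actual proof sidesteps it with a simpler algebraic decomposition.

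The problematic claim is the ``model estimate''
\begin{equation*}
  \norm{(\mathcal H-z)^{-1}\,g\,Q_k\,(\mathcal H-z)^{-1}}_1 \leq C\,\abs{\Im z}^{-M}\,\hbar^{-d}.
\end{equation*}
This is false: $(\mathcal H-z)^{-1}g$ is not trace class for a Schr\"odinger operator in dimension $d\geq 2$ (for $-\Delta$, the operator $(-\Delta-z)^{-1}g(-\Delta-z)^{-1}$ is trace class only for $d<4$, and adding a $Q_k$ worsens it), no matter how many powers of $\abs{\Im z}^{-1}$ one allows. The intended remedy, writing $g\,Q_k\,(\mathcal H-z)^{-1} = g\,Q_k\,\chi(H_\hbar^{loc})\,(\mathcal H-z)^{-1}+(\text{negligible})$, does not close the gap. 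First, $\chi(H_\hbar^{loc})$ does not commute with $(\mathcal H-z)^{-1}$, so the remainder $g\,Q_k\,(1-\chi(H_\hbar^{loc}))\,(\mathcal H-z)^{-1}$ is not obviously small; what one can say is that it is \emph{operator-norm} bounded uniformly in $z\in\supp\tilde f$, which is not enough here. Second, even if the remainder were discarded, the left resolvent factor $(\mathcal H-z)^{-1}$ still carries no spectral localisation, and one would need to insert a second compactly supported spectral cutoff on the left and then re-justify the trace-class claim; integrating over $z$ does not rescue this since the trace norm of each integrand is what is being bounded.

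The resolution is to insert the spectral cutoff \emph{before} invoking Helffer--Sj\"ostrand or the resolvent representation, as the paper does. Picking $g\in C_0^\infty(\R)$ with $gf=f$, one writes
\begin{equation*}
  [f(\mathcal H),\varphi] = g(\mathcal H)[f(\mathcal H),\varphi] + [g(\mathcal H),\varphi]f(\mathcal H),
\end{equation*}
so that on each piece a compactly supported function of $\mathcal H$ sits directly against the commutator. Then with $\widetilde\varphi$ a fattened cutoff equal to $1$ on $\supp\varphi$,
\begin{equation*}
  \norm{g(\mathcal H)[f(\mathcal H),\varphi]}_1 \leq \norm{g(\mathcal H)\widetilde\varphi}_1\,\norm{[f(\mathcal H),\varphi]} + \norm{(1-\widetilde\varphi)f(\mathcal H)\varphi}_1,
\end{equation*}
and the only analytic input needed is the \emph{operator-norm} bound $\norm{[f(H_\hbar),\varphi]}\leq C\hbar$ (which is exactly the piece of your argument that does go through cleanly, via functional calculus for $\hbar$-$\Psi$DOs or the Helffer--Sj\"ostrand identity), together with the trace-norm bound $\norm{g(\mathcal H)\widetilde\varphi}_1\leq C\hbar^{-d}$ from Lemma~\ref{A.trace_norm_est_H} and Theorem~\ref{A.change_operator}, and the $O(\hbar^N)$ off-diagonal bound. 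This avoids ever having to assert that a resolvent sandwich is trace class. The same decomposition handles $\varphi Q_j$ with the minor change, which you do correctly identify, that $[\mathcal H,\varphi Q_j]$ also contains a term $\hbar\,\varphi\,(\partial_{x_j}V)$ using the smoothness of $V$ on $\Omega$.
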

\begin{proof}
  We start by proving the first commutator bound. By
  Theorem~\ref{A.change_operator} we note that for any $N\geq0$
  \begin{equation}\label{A.Estimate_1.1}
    \norm{[f(\mathcal{H}), \varphi]}_{1} \leq \norm{[f(H_\hbar), \varphi]}_{1} + C_N\hbar^N,
  \end{equation}
  hence we need only prove the bound for the trace norm of
  $[f(H_\hbar), \varphi]$. Let $g\in C_0^\infty(\R)$ such that
  $g(t)f(t)=f(t)$ and $0\leq g(t) \leq 1$ for all $t$ in $\R$. Then we
  have that
  \begin{align*}
    [f(H_\hbar), \varphi] &= f(H_\hbar) \varphi - \varphi f(H_\hbar)
    \\
    &= g(H_\hbar)f(H_\hbar) \varphi - \varphi g(H_\hbar) f(H_\hbar) +
    g(H_\hbar)\varphi f(H_\hbar) - g(H_\hbar)\varphi f(H_\hbar)
    \\
    &=g(H_\hbar) [f(H_\hbar), \varphi] + [g(H_\hbar), \varphi]
    f(H_\hbar).
  \end{align*}
  These equalities implie that
  \begin{equation}\label{A.Estimate_1.2}
    \norm{[f(H_\hbar), \varphi]}_{1} \leq \norm{g(H_\hbar) [f(H_\hbar), \varphi]}_1 +  \norm{[g(H_\hbar), \varphi] f(H_\hbar)}_1.
  \end{equation}
  We start by considering the first trace norm
  $\norm{g(H_\hbar) [f(H_\hbar), \varphi]}_1$ and the second can be
  treated by an analogous argument. Let $\widetilde{\varphi}$ be in
  $C_0^\infty(\R^d)$ such that $\widetilde{\varphi}\varphi=\varphi$ and
  $0\leq \widetilde{\varphi}\leq1$. Then we have that
  \begin{align}\label{A.Estimate_1.3}
    \begin{split}
      \norm{g(H_\hbar) [f(H_\hbar), \varphi]}_1 &\leq
      \norm{g(H_\hbar)\widetilde{\varphi} [f(H_\hbar), \varphi]}_1 +
      \norm{g(H_\hbar)(1-\widetilde{\varphi}) f(H_\hbar) \varphi}_1
      \\
      &\leq \norm{g(H_\hbar)\widetilde{\varphi} }_1 \norm{[f(H_\hbar),
        \varphi]}+ \norm{(1-\widetilde{\varphi}) f(H_\hbar) \varphi}_1
      \\
      &\leq C\hbar^{-d} \norm{[f(H_\hbar), \varphi]} + C_N \hbar^N,
    \end{split}
  \end{align}
  for all $N\geq 0$, where we have used Lemma~\ref{A.trace_norm_est_H}
  in the last inequality. That
  \begin{equation}
  	\norm{[f(H_\hbar), \varphi]}\leq C\hbar,
\end{equation}
 is a consequence of the
  functional calculus for $\hbar$-$\Psi$DOs presented in
  \cite{MR897108}. It also follows fairly easily from an argument
  using the Helffer-Sj\"ostrand formula \cite{MR1349825} and the
  resolvent identities. The estimate on the second term in
  \eqref{A.Estimate_1.2} is similar and will be left to the
  reader. This estimate concludes the proof.
\end{proof}

The next lemma is very similar to the above lemma.

\begin{lemma}\label{A.Est_help}
  Suppose the operator $\mathcal{H}$ obeys
  Assumption~\ref{A.local_assumptions} with $\Omega=B(0,4R)$ for $R>0$
  and let $f$ be in $C_0^\infty(\R)$ and $\hbar_0>0$. For $\varphi$ in
  $C_0^\infty(B(0,3R))$ it holds that
  \begin{equation*}
    \norm{[\mathcal{H},\varphi]f(\mathcal{H})}_2 \leq C\hbar^{1-\frac{d}{2}},
  \end{equation*}
  for all $\hbar$ in $(0,\hbar_0]$ for a positive constant $C$, where $C$ only depends on the dimension,
  the function $\varphi$, the numbers $\hbar_0$, $\norm{\partial^\alpha V}_\infty$ for $\alpha$ in $\N^d$,
  $\norm{\partial^j f}_\infty$ for $j$ in $\N$ and $\sup(\supp(f))$.
\end{lemma}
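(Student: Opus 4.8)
The plan is to reduce the Hilbert--Schmidt bound on $[\mathcal{H},\varphi]f(\mathcal{H})$ to an explicit computation with an $\hbar$-pseudodifferential operator. First I would use Theorem~\ref{A.change_operator} (or rather the same localisation ideas behind it) to replace $\mathcal{H}$ by the honest local Schr\"odinger operator $H_\hbar = -\hbar^2\Delta + V$ with $V\in C_0^\infty(\R^d)$. Concretely, pick $\widetilde\varphi\in C_0^\infty(B(0,3R))$ with $\widetilde\varphi\varphi = \varphi$; since $[\mathcal{H},\varphi] = [H_\hbar^{loc},\varphi]$ as operators on $C_0^\infty(\Omega)$ and $\varphi$ is supported well inside $\Omega$, one has $[\mathcal{H},\varphi]f(\mathcal{H}) = [H_\hbar,\varphi]\widetilde\varphi f(\mathcal{H}) + [H_\hbar,\varphi](1-\widetilde\varphi)f(\mathcal{H})$, and the second term is $\mathcal{O}(\hbar^N)$ in Hilbert--Schmidt norm by the off-diagonal decay in Lemma~\ref{A.trace_norm_est_H}/Theorem~\ref{A.change_operator}. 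So it suffices to bound $\norm{[H_\hbar,\varphi]f(H_\hbar)}_2$.

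Next I would compute the commutator explicitly. Since $V$ commutes with $\varphi$, $[H_\hbar,\varphi] = [-\hbar^2\Delta,\varphi] = -\hbar^2(\Delta\varphi) - 2\hbar^2(\nabla\varphi)\cdot\nabla = \hbar\,\bigl(-\hbar(\Delta\varphi) + 2i(\nabla\varphi)\cdot Q\bigr)$, where $Q = -i\hbar\nabla$. Thus $[H_\hbar,\varphi]f(H_\hbar) = \hbar\bigl(-\hbar(\Delta\varphi)f(H_\hbar) + 2i(\nabla\varphi)\cdot Q f(H_\hbar)\bigr)$, and it remains to show each of $\norm{(\Delta\varphi)f(H_\hbar)}_2$ and $\norm{(\partial_{x_j}\varphi)Q_j f(H_\hbar)}_2$ is $\mathcal{O}(\hbar^{-d/2})$. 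Both are of the form $\norm{\psi Q_j^r f(H_\hbar)}_2$ with $\psi\in C_0^\infty(\R^d)$ and $r\in\{0,1\}$, so the whole lemma comes down to the Hilbert--Schmidt analogue of the first estimate in Lemma~\ref{A.trace_norm_est_H}, with the expected gain of a square root in the power of $\hbar$ (from $\hbar^{-d}$ in trace norm to $\hbar^{-d/2}$ in Hilbert--Schmidt norm).

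That last estimate is the real content, and I would prove it by the standard $\hbar$-$\Psi$DO functional calculus. Write $f(H_\hbar) = \OpW(b) + \hbar^N R_N$ where $b$ is a symbol supported (in $x$) near $\supp f(V(\cdot) + \cdot)$... more precisely, by the Helffer--Sj\"ostrand formula $f(H_\hbar) = \OpW(b_\hbar)$ with $b_\hbar(x,p) = f(p^2+V(x)) + \mathcal{O}(\hbar)$ in every symbol seminorm, and the remainder is smoothing, so $\hbar^N R_N$ contributes $\mathcal{O}(\hbar^{N-d/2})$ (even the crude bound $\norm{\cdot}_2 \le \hbar^{-d/2}\norm{\cdot}_{\text{something}}$ suffices). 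For the principal part, $\psi Q_j^r \OpW(b_\hbar) = \OpW(c_\hbar)$ with $c_\hbar(x,p) = \psi(x)p_j^r b_\hbar(x,p) + \mathcal{O}(\hbar)$, a symbol that is compactly supported in $x$ and Schwartz-decaying in $p$ (inherited from $f$ having compact support, so $p^2+V(x)$ stays bounded on the support). For such a nice compactly-essentially-supported symbol one has the elementary bound
\begin{equation*}
  \norm{\OpW(c_\hbar)}_2^2 = \frac{1}{(2\pi\hbar)^d}\int_{\R^d}\int_{\R^d} \abs{c_\hbar(x,p)}^2 \,dx\,dp \leq C\hbar^{-d},
\end{equation*}
hence $\norm{\OpW(c_\hbar)}_2 \le C\hbar^{-d/2}$, which is exactly what is needed. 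Multiplying back by the prefactor $\hbar$ gives $\norm{[\mathcal{H},\varphi]f(\mathcal{H})}_2 \le C\hbar^{1-d/2}$.

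The main obstacle is purely bookkeeping: making the $\hbar$-$\Psi$DO calculus rigorous for $H_\hbar$ built as a Friedrichs extension (rather than as a $\Psi$DO), i.e. justifying that $f(H_\hbar)$ has a nice Weyl symbol with controlled remainders and that $\psi Q_j^r f(H_\hbar)$ has the claimed compactly-supported-in-$x$, Schwartz-in-$p$ symbol. This is precisely the step where one invokes the machinery behind Lemma~\ref{A.trace_norm_est_H} and Theorem~\ref{A.change_operator}; since the paper explicitly says those proofs are analogous to \cite{MR1343781}, I would likewise state that the present Hilbert--Schmidt estimate follows by the same arguments, the only change being that one keeps the $L^2$ norm of the symbol in the final step instead of estimating the trace of its absolute value, which is what produces the square-root improvement in the $\hbar$ power.
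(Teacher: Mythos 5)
Your proof is correct but takes a genuinely different route to the paper's, even though both share the same skeleton: extract a factor of $\hbar$ from $[\mathcal{H},\varphi]$, then bound the remaining compactly localised piece in Hilbert--Schmidt norm by $C\hbar^{-d/2}$. The paper extracts the $\hbar$ by inserting $R_{H_\hbar}(-\lambda_0)(H_\hbar+\lambda_0)$ and using the operator-norm bound $\norm{[H_\hbar,\varphi]R_{H_\hbar}(-\lambda_0)}\leq C\hbar$; it then passes to trace norm via $\norm{A}_2^2\leq\norm{AA^*}_1$ and reuses Theorem~\ref{A.change_operator} on the compactly supported symbol $(t+\lambda_0)^2f(t)^2$, with the spurious commutator $[\varphi_1,R_{H_\hbar}(-\lambda_0)]$ coming out of higher order. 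You instead compute $[H_\hbar,\varphi]=-\hbar^2(\Delta\varphi)-2i\hbar(\nabla\varphi)\cdot Q$ explicitly (you have the wrong sign on the first-order term, but that is harmless), which reduces the lemma to $\norm{\psi Q_j^r f(H_\hbar)}_2\leq C\hbar^{-d/2}$, and you get this from the exact formula $\norm{\OpW(c)}_2^2=(2\pi\hbar)^{-d}\norm{c}_{L^2}^2$ for a suitable Weyl symbol. Your version is more transparent (the $\hbar$ appears algebraically and the $\hbar^{-d/2}$ is a normalisation constant), but it forces you to re-establish functional-calculus-with-remainder estimates in Hilbert--Schmidt norm, whereas the paper's version packs all the $\Psi$DO content into Theorem~\ref{A.change_operator} once and then only does elementary resolvent algebra. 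In fact your key bound can also be extracted from the paper's ingredients without any fresh symbol analysis: $\norm{\psi Q_j^r f(H_\hbar)}_2^2=\Tr\bigl[f(H_\hbar)Q_j^r\psi^2Q_j^rf(H_\hbar)\bigr]\leq\norm{f(H_\hbar)Q_j^r\psi}\,\norm{\psi Q_j^rf(H_\hbar)}_1\leq C\hbar^{-d}$, using that $Q_jf(H_\hbar)$ is uniformly bounded because $f$ has compact support. One small imprecision to fix: the decomposition $[\mathcal{H},\varphi]f(\mathcal{H})=[H_\hbar,\varphi]\widetilde\varphi f(\mathcal{H})+[H_\hbar,\varphi](1-\widetilde\varphi)f(\mathcal{H})$ implicitly uses $[\mathcal{H},\varphi]=[H_\hbar,\varphi]$ on all of $L^2(\R^d)$, which Assumption~\ref{A.local_assumptions} does not directly provide; the correct local identity is $[\mathcal{H},\varphi]=[H_\hbar,\varphi]\varphi_1$ for an auxiliary cutoff $\varphi_1$, after which the rest of your argument goes through unchanged.
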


\begin{proof}
  Let $\varphi_1$ be in $C_0^\infty(B(0,3R))$ such that
  $\varphi_1\varphi=\varphi$ and $0\leq \varphi_1\leq1$. Then by
  Assumption~\ref{A.local_assumptions} the commutator
  $[\mathcal{H},\varphi]$ is local in the sense that
  \begin{equation*}
    [\mathcal{H},\varphi]=[H_\hbar,\varphi]\varphi_1,
  \end{equation*}
  where $H_\hbar$ is the operator from
  Assumption~\ref{A.local_assumptions} i.e.
  $H_\hbar=-\hbar^2\Delta + V$, where $V$ is in
  $C_0^\infty(\R^d)$. Therefore there exists a $\lambda_0\geq0$ such
  that $-\lambda_0$ is in the resolvent set of $H_\hbar$ and the
  operator $H_\hbar+\lambda_0$ is positive
  (e.g. $\lambda_0=1+\norm{V}_\infty$) We then have that
  \begin{align}\label{A.Estimate_2.1}
    \begin{split}
      \norm{[\mathcal{H},\varphi]f(\mathcal{H})}_2 =&
      \norm{[H_\hbar,\varphi]\varphi_1R_{H_\hbar}(-\lambda_0)
        (H_\hbar+\lambda_0)f(\mathcal{H})}_2
      \\
      \leq& \norm{[H_\hbar,\varphi] R_{H_\hbar}(-\lambda_0)\varphi_1
        (H_\hbar+\lambda_0) f(\mathcal{H}) }_2
      \\
      \hbox{} &+ \norm{[H_\hbar,\varphi] [\varphi_1 , R_{H_\hbar}
        (-\lambda_0)] (H_\hbar+\lambda_0)f(\mathcal{H})}_2,
    \end{split}
  \end{align}
  where $R_{H_\hbar}(z)=(H_\hbar-z)^{-1}$. If we now consider each of
  the terms separately we can for the first term note that by
  Assumption~\ref{A.local_assumptions} and
  Theorem~\ref{A.change_operator} we have
  \begin{align}\label{A.Estimate_2.2}
    \begin{split}
      \norm{[H_\hbar,\varphi] R_{H_\hbar}(-\lambda_0)\varphi_1
        (H_\hbar+\lambda_0) f(\mathcal{H}) }_2 &\leq
      \norm{[H_\hbar,\varphi] R_{H_\hbar}(-\lambda_0)} \norm{\varphi_1
        (\mathcal{H}+\lambda_0) f(\mathcal{H}) }_2
      \\
      &\leq c \hbar^{-\frac{d}{2}} \norm{[H_\hbar,\varphi]
        R_{H_\hbar}(-\lambda_0)}
      \\
      &\leq C \hbar^{1-\frac{d}{2}},
    \end{split}
  \end{align}
  where we have used the bound
  \begin{equation}\label{A.bound_com_proof_1}
    \norm{[H_\hbar,\varphi] R_{H_\hbar}(-\lambda_0)}\leq \hbar \sum_{j=1}^d \norm{(2\varphi_{x_j} Q_j - i\hbar\varphi_{x_j x_j})R_{H_\hbar}(-\lambda_0) } \leq  c \hbar,
  \end{equation}
  where we have calculated the commutator explicitly. The bound in \eqref{A.bound_com_proof_1} is valid since $\mathcal{D}(H_\hbar)\subset \mathcal{D}(Q_j)$ for all
  $j\in\{1,\dots,d\}$. Moreover in \eqref{A.Estimate_2.2} we have used the
  following estimate
  \begin{align*}
    \norm{\varphi_1 (\mathcal{H}+\lambda_0) f(\mathcal{H}) }_2^2 &=
    \Tr[\varphi_1 (\mathcal{H}+\lambda_0) f(\mathcal{H})^2
    (\mathcal{H}+\lambda_0)\varphi_1 ]
    \\
    &\leq \norm{\varphi_1 (\mathcal{H}+\lambda_0) f(\mathcal{H})^2
      (\mathcal{H}+\lambda_0)\varphi_1}_1 \leq C\hbar^{-d},
  \end{align*}
  by Theorem~\ref{A.change_operator}. For the other term on the right
  hand side of \eqref{A.Estimate_2.1} we note that
  \begin{equation}\label{A.Estimate_2.3}
    \norm{[H_\hbar,\varphi] [\varphi_1 , R_{H_\hbar} (-\lambda_0)] (H_\hbar+\lambda_0)f(\mathcal{H})}_2 = \norm{[H_\hbar,\varphi] R_{H_\hbar}(-\lambda_0) [H_\hbar,\varphi_1 ] f(\mathcal{H})}_2
  \end{equation}
  Let $\varphi_2$ be in $C_0^\infty(B(0,3R))$ such that
  $\varphi_2\varphi_1=\varphi_1$ and $0\leq \varphi_2\leq1$ and note
  that by Theorem~\ref{A.change_operator}
  \begin{align}\label{A.Estimate_2.3}
  	\begin{split}
    \lVert[H_\hbar,\varphi] R_{H_\hbar}(-\lambda_0)
    &[H_\hbar,\varphi_1 ] f(\mathcal{H})\rVert_2
    \\
    &= \norm{[H_\hbar,\varphi] R_{H_\hbar}(-\lambda_0)
      [H_\hbar,\varphi_1 ]\varphi_2 f(\mathcal{H})}_2
    \\
    &\leq \norm{[H_\hbar,\varphi] R_{H_\hbar}(-\lambda_0)^\frac{1}{2}}
    \norm{R_{H_\hbar}(-\lambda_0)^\frac{1}{2} [H_\hbar,\varphi_1
      ]}\norm{\varphi_2 f(\mathcal{H})}_2
    \\
    &\leq C \hbar^{2-\frac{d}{2}},
    \end{split}
  \end{align}
  where we have used that the commutators $[H_\hbar,\varphi]$ and
  $[H_\hbar,\varphi_1]$ can be calculated explicitly and that their domains
  contains the form domain of $H_\hbar$. Combining estimates \eqref{A.Estimate_2.1}, \eqref{A.Estimate_2.2} and \eqref{A.Estimate_2.3}  we
  get the desired bound.
\end{proof}


\subsection{Local case with a non-critical condition}

We will now state and prove the local version of the main theorem
(Theorem~\ref{A.Main_Theorem}) with a non-critical condition. It
should be noted that we are only dealing with open balls as the domain
in Assumption~\ref{A.local_assumptions} since when we extend the
result we will use them to cover a general open set.

\begin{thm}\label{A.Main_Local_Theorem _wnc}
  Suppose the operator $\mathcal{H}$ acting in $L^2(\R^d)$ with
  $d\geq2$ obeys Assumption~\ref{A.local_assumptions} with
  $\Omega=B(0,4R)$ for $R>0$ and
  \begin{equation}\label{A.noncrit.Main_Local_Theorem_wnc}
    \abs{V(x)}+ \abs{\nabla V(x)}^2 + \hbar  \geq c,
  \end{equation}
  for all $x$ in $B(0,2R)$, where $c>0$. Furthermore, let $\hbar_0$ be a strictly positive number. For $\varphi$ in
  $C_0^\infty(B(0,R/2))$ it holds that
  \begin{equation}\label{A.local_bounds}
    \norm{[\boldsymbol{1}_{(-\infty,0]}(\mathcal{H}), \varphi]}_{1} \leq  C \hbar^{1-d}
    \quad\text{and}\quad
    \norm{[\boldsymbol{1}_{(-\infty,0]}(\mathcal{H}), \varphi Q_j]}_{1} \leq  C \hbar^{1-d},
  \end{equation}
  for all $\hbar$ in $(0,\hbar_0]$ and $j\in\{1,\dots,d\}$, where
  $Q_j= -i\hbar\partial_{x_j}$. The constant $C$ only depends on
  the dimension, the numbers $\norm{\partial_x^\alpha V}_\infty$ and $\norm{\partial_x^\alpha \varphi}_\infty$
  for all $\alpha$ in $\N^d$, and the numbers $R$ and $c$ in
  \eqref{A.noncrit.Main_Local_Theorem_wnc}.
\end{thm}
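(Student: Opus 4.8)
The plan is to pass from the characteristic function $\boldsymbol{1}_{(-\infty,0]}$ to a smooth, compactly supported function plus a dyadic tail, and to estimate the tail using the local Weyl bound of Proposition~\ref{A.Weyl_used_local} on exponentially small intervals around the critical energy $0$. More precisely, first I would write $\boldsymbol{1}_{(-\infty,0]}(\mathcal{H}) = f(\mathcal{H}) + (\boldsymbol{1}_{(-\infty,0]}-f)(\mathcal{H})$, where $f\in C_0^\infty(\R)$ equals $1$ on a neighbourhood of $(-\infty,-\delta]$ for a small fixed $\delta$ and $0$ near and above $0$; by Lemma~\ref{A.simpel_case} the contribution of $f(\mathcal{H})$ to both commutators is $O(\hbar^{1-d})$, which is the desired bound. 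So the whole problem reduces to estimating the trace norm of $[g(\mathcal{H}),\varphi]$ and $[g(\mathcal{H}),\varphi Q_j]$ where $g = \boldsymbol{1}_{(-\infty,0]}-f$ is supported in a small interval $[-\delta,0]$ of energies straddling the critical value.

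For this spectrally localised piece I would use the standard trick of writing a commutator with a spectral projection as a telescoping sum over a dyadic decomposition of the energy interval. Split $[-\delta,0]$ (or rather the relevant part $[-\delta,-\hbar^M]$ together with the genuinely small remainder near $0$ of width $\sim\hbar^M$ for $M$ large) into dyadic intervals $I_k = [-2^{-k}\delta,-2^{-k+1}\delta]$ whose lengths decrease geometrically, stopping at scale $\hbar^M$. On each $I_k$ pick $\chi_k\in C_0^\infty$ adapted to $I_k$ and write the commutator $[\boldsymbol{1}_{I_k}(\mathcal{H}),A]$ (for $A=\varphi$ or $A=\varphi Q_j$) using a Helffer–Sjöstrand-type representation or directly via $\boldsymbol{1}_{I_k}(\mathcal{H}) = \boldsymbol{1}_{I_k}(\mathcal{H})\chi_k(\mathcal{H})$ and the resolvent identity, so that one commutator of $\mathcal{H}$ with $A$ appears (gaining one power of $\hbar$ from $[\mathcal{H},\varphi]$ being $O(\hbar)$, cf. \eqref{A.bound_com_proof_1}) times a resolvent factor that is $O(\ell_k^{-1})$ on the support of $\chi_k$ where $\ell_k=|I_k|$. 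The trace norm is then controlled by Cauchy–Schwarz: $\|\boldsymbol{1}_{I_k}(\mathcal{H})\|_2 \le (\Tr\boldsymbol{1}_{I_k}(\mathcal{H})\widetilde\varphi)^{1/2} \le (C_1\ell_k\hbar^{-d}+C_2\hbar^{1-d})^{1/2}$ by Proposition~\ref{A.Weyl_used_local}, paired with the Hilbert–Schmidt bound $\|[\mathcal{H},\varphi]g_k(\mathcal{H})\|_2 \le C\hbar^{1-d/2}$ from Lemma~\ref{A.Est_help} (suitably reweighted by $\ell_k^{-1}$). Summing the resulting estimates, of the shape $\ell_k^{-1}\cdot\hbar\cdot(\ell_k\hbar^{-d}+\hbar^{1-d})^{1/2}\cdot\hbar^{-d/2}$-ish terms, over the $O(\log(1/\hbar))$ dyadic scales, the geometric decay together with the extra power of $\hbar$ should absorb the logarithm and leave $O(\hbar^{1-d})$; the tiny remaining interval of width $\hbar^M$ contributes $O(\hbar^{M-d})$, negligible for $M$ large. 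The $\varphi Q_j$ case is handled identically, using the $Q_j$-versions of Lemmas~\ref{A.trace_norm_est_H}, \ref{A.simpel_case} and Theorem~\ref{A.change_operator}, noting $Q_j$ is relatively bounded with respect to $\mathcal{H}$ so that factors like $\varphi Q_j g(\mathcal{H})$ and $\varphi Q_j (\mathcal{H}+\lambda_0)g(\mathcal{H})$ still obey the trace-ideal bounds with the same powers of $\hbar$.

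The main obstacle I anticipate is bookkeeping the interplay between the dyadic scale $\ell_k$ and the two competing terms in the Weyl bound $C_1\ell_k\hbar^{-d}+C_2\hbar^{1-d}$: one must choose the stopping scale and the splitting so that neither the "volume" term $\ell_k\hbar^{-d}$ (good for large $\ell_k$) nor the "surface" term $\hbar^{1-d}$ (which, summed naively over $\log(1/\hbar)$ scales, would produce an unwanted logarithm) dominates. The resolution is that each commutator step buys a clean extra factor of $\hbar$ from $[\mathcal{H},\varphi]=O(\hbar)$, and the Cauchy–Schwarz pairing of $\|\boldsymbol{1}_{I_k}(\mathcal{H})\widetilde\varphi\|_2$ with a Hilbert–Schmidt quantity that is itself $O(\hbar^{1-d/2})$ makes the surface contribution at scale $k$ of size $\hbar\cdot\hbar^{(1-d)/2}\cdot\hbar^{1-d/2}=\hbar^{5/2-d}$ per scale, so that even after multiplying by $O(\log(1/\hbar))$ it is $o(\hbar^{1-d})$, while the volume contributions sum geometrically. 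Getting the exponents to line up so that the sum over all dyadic scales is genuinely $O(\hbar^{1-d})$ — rather than $O(\hbar^{1-d}\log(1/\hbar))$ — is the delicate quantitative point, and it is exactly here that the non-critical hypothesis \eqref{A.noncrit.Main_Local_Theorem_wnc} is used, through Proposition~\ref{A.Weyl_used_local}, to guarantee that the density of states near energy $0$ has no anomalous concentration.
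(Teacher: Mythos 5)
Your proposal is in the right spirit — you correctly identify Lemma~\ref{A.simpel_case} as disposing of the compactly-supported smooth piece, Proposition~\ref{A.Weyl_used_local} and a dyadic decomposition near the critical energy $0$ as the tools for the remainder, and the commutator trick $[\mathcal{H},\varphi]=O(\hbar)$ as the source of the extra power of $\hbar$ — but there is a structural gap that your sketch does not resolve, and I do not believe the one-sided decomposition you propose can close.

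The problem is where you decide to dyadically decompose. You decompose only the negative-energy side, into intervals $I_k\subset[-\delta,0]$, and then try to estimate $\norm{[\boldsymbol{1}_{I_k}(\mathcal{H}),\varphi]}_1$ by a commutator argument that \enquote{gains $\hbar/\ell_k$}. But the commutator trick that trades a factor of $\varphi$ for $[\mathcal{H},\varphi]$ divided by an energy gap only works when there \emph{is} an energy gap, i.e.\ when $\varphi$ sits between two spectral projections localised in disjoint intervals. The operator $[\boldsymbol{1}_{I_k}(\mathcal{H}),\varphi]$ has $\boldsymbol{1}_{I_k}(\mathcal{H})$ on one side only; on the other side $\varphi$ couples $I_k$-localised states to the entire spectrum, including energies in and adjacent to $I_k$, where there is no gap and hence no gain. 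If you nevertheless bound $\norm{[\boldsymbol{1}_{I_k}(\mathcal{H}),\varphi]}_1$ by $2\norm{\varphi}_\infty\norm{\boldsymbol{1}_{I_k}(\mathcal{H})\widetilde\varphi}_1\lesssim \ell_k\hbar^{-d}+\hbar^{1-d}$, the surface term $\hbar^{1-d}$ occurs once per dyadic scale, and summing the $O(\log(1/\hbar))$ scales produces exactly the logarithm you are trying to avoid. Your heuristic computation $\hbar\cdot\hbar^{(1-d)/2}\cdot\hbar^{1-d/2}=\hbar^{5/2-d}$ silently drops the $\ell_k^{-1}$ you said the resolvent contributes; restoring it, the surface contribution at the smallest scale $\ell_k\sim\hbar^M$ is $\hbar^{-M}\hbar^{5/2-d}$, which for $M\geq 2$ is already worse than $\hbar^{1-d}$.

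The paper avoids this by first replacing the commutator by the pair of \emph{products} $\boldsymbol{1}_{(-\infty,0]}(\mathcal{H})\,\varphi\,\boldsymbol{1}_{(0,\infty)}(\mathcal{H})$ and its adjoint, and then performing an $\hbar$-scaled dyadic decomposition on \emph{both} sides of $0$: $\chi_{n,\hbar}$ supported in $[-4^n\hbar,-4^{n-1}\hbar]$ on the left, $\widetilde\chi_{m,\hbar}$ in $[4^{m-1}\hbar,4^m\hbar]$ on the right. For each block $\chi_{n,\hbar}(\mathcal{H})\varphi\widetilde\chi_{m,\hbar}(\mathcal{H})$ the two spectral supports are genuinely separated by $\gtrsim 4^{\max(m,n)}\hbar$, so the commutator trick applies cleanly. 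Crucially the paper then iterates the trick \emph{twice}, producing $[[\varphi,\mathcal{H}],\mathcal{H}]=O(\hbar^2)$ divided by the gap squared; combined with the Hilbert--Schmidt bounds $\norm{\chi_{n,\hbar}(\mathcal{H})\psi}_2\lesssim (4^n\hbar^{1-d})^{1/2}$ from Proposition~\ref{A.Weyl_used_local}, this yields a summand of size $4^{(n+m)/2}/4^{2(m-1)}\cdot\hbar^{1-d}$ which is absolutely summable over $m\geq n\geq 1$ with no logarithm. A single commutator would not give enough decay in $m$ for the double sum to converge, which is the quantitative reason the second commutator is needed and is the analogue of the point your bookkeeping misses. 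So: the first spectral rewriting into products and the two-sided, two-commutator decomposition are the missing ingredients, and they are not cosmetic — they are what eliminates the logarithm.
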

\begin{proof}
  We start by proving the first bound in \eqref{A.local_bounds}. We
  notice that
  \begin{equation}\label{A.local_eq_1}
    [\boldsymbol{1}_{(-\infty,0]}(\mathcal{H}),\varphi] = \boldsymbol{1}_{(-\infty,0]}(\mathcal{H}) \varphi \boldsymbol{1}_{(0,\infty)}(\mathcal{H}) - \boldsymbol{1}_{(0,\infty)}(\mathcal{H}) \varphi \boldsymbol{1}_{(-\infty,0]}(\mathcal{H}).
  \end{equation}
  We will consider each of the terms in  \eqref{A.local_eq_1} separately and
  they can be handled with analogous arguments. So we only consider
  the term
  $\boldsymbol{1}_{(-\infty,0]}(\mathcal{H}) \varphi
  \boldsymbol{1}_{(0,\infty)}(\mathcal{H})$. By \eqref{A.noncrit.Main_Local_Theorem_wnc} and continuity, there
  exists an $\varepsilon>0$ such that for all $E$ in
  $[-2\varepsilon,2\varepsilon]$ we have
  \begin{equation*}
    \abs{E-V(x)}+ \abs{\nabla V(x)}^2 + \hbar  \geq \frac{c}{2}, 
  \end{equation*}
  for all $x$ in $B(0,2R)$. Without loss of generality we can assume $\varepsilon\leq1$. Let $g_1$ and $g_0$ be two functions such
  that
  \begin{itemize}
  \item
    $g_1(\mathcal{H}) + g_0(\mathcal{H}) =
    \boldsymbol{1}_{(-\infty,0]}(\mathcal{H})$.
	
  \item $\supp(g_0) \subset [-\varepsilon,0]$ and $g_0(t)=1$ for
    $t \in [-\varepsilon/2,0]$.
	
  \item $g_1 \in C_0^\infty(\R)$.
  \end{itemize}
  That $g_1$ can assumed to be compactly supported is due to the fact
  that the spectrum of $\mathcal{H}$ is bounded from below. With these
  functions we get that
  \begin{align}\label{A.local_eq_2}
    \begin{split}
      \boldsymbol{1}_{(-\infty,0]}(\mathcal{H}) \varphi
      \boldsymbol{1}_{(0,\infty)}(\mathcal{H}) &= g_1(\mathcal{H})
      \varphi \boldsymbol{1}_{(0,\infty)}(\mathcal{H}) +
      g_0(\mathcal{H}) \varphi
      \boldsymbol{1}_{(0,\infty)}(\mathcal{H})
      \\
      &= [g_1(\mathcal{H}), \varphi]
      \boldsymbol{1}_{(0,\infty)}(\mathcal{H}) + g_0(\mathcal{H})
      \varphi \boldsymbol{1}_{(0,\infty)}(\mathcal{H}).
    \end{split}
  \end{align}
  For the first term we note that by Lemma~\ref{A.simpel_case} we have
  the estimate:
  \begin{equation}\label{A.local_eq_3}
    \norm{ [g_1(\mathcal{H}), \varphi]  \boldsymbol{1}_{(0,\infty)}(\mathcal{H})  }_{1} \leq \norm{ [g_1(\mathcal{H}), \varphi] }_{1} 
    \leq C \hbar^{1-d}.
  \end{equation}
  In order to estimate the term
  $g_0(\mathcal{H}) \varphi \boldsymbol{1}_{(0,\infty)}(\mathcal{H})$
  we let $f$ be in $C_0^\infty(\R)$ such that $f(t)=1$ on
  $[-\varepsilon,0]$ and
  $\supp(f) \subset[-2\varepsilon,\varepsilon]$. Then we have
  \begin{align*}
    g_0(\mathcal{H}) \varphi \boldsymbol{1}_{(0,\infty)}(\mathcal{H})
    &= g_0(\mathcal{H})f(\mathcal{H}) \varphi
    \boldsymbol{1}_{(\varepsilon,\infty)}(\mathcal{H}) +
    g_0(\mathcal{H}) \varphi
    \boldsymbol{1}_{(0,\varepsilon]}(\mathcal{H})
    \\
    & = g_0(\mathcal{H}) [f(\mathcal{H}), \varphi ]
    \boldsymbol{1}_{(\varepsilon,\infty)}(\mathcal{H}) +
    g_0(\mathcal{H}) \varphi
    \boldsymbol{1}_{(0,\varepsilon]}(\mathcal{H}).
  \end{align*}
  Again from Lemma~\ref{A.simpel_case} we have the estimate:
  \begin{equation}\label{A.local_eq_13}
    \norm{  g_0(\mathcal{H}) [f(\mathcal{H}),  \varphi ] \boldsymbol{1}_{(\varepsilon,\infty)}(\mathcal{H})   }_{1}
    \leq \norm{ [f(\mathcal{H}),  \varphi ] }_{1} 
    \leq C\hbar^{1-d}.
  \end{equation}
  What remains is to get an estimate of the trace norm of the term
  $g_0(\mathcal{H}) \varphi
  \boldsymbol{1}_{(0,\varepsilon]}(\mathcal{H})$. In order to estimate
  this term we define the following $\hbar$ dependent dyadic
  decomposition:
  \begin{equation*}
    \chi_{n,\hbar} (t) = 
    \begin{cases} 
      \boldsymbol{1}_{(\hbar,0]}(t) & n=0
      \\
      \boldsymbol{1}_{(-4^n\hbar,-4^{n-1}\hbar]}(t) & n\in\N .
    \end{cases}
  \end{equation*}
  moreover we let $\widetilde{\chi}_{n,\hbar}(t) = \chi_{n,\hbar} (-t)$.
  Then there exist $N(\hbar)$ in $\N$ such that
  \begin{equation*}
    g_0(\mathcal{H}) = \sum_{n=0}^{N(\hbar)} g_0(\mathcal{H})\chi_{n,\hbar}(\mathcal{H}) 
    \quad\text{and}\quad 
    \boldsymbol{1}_{(0,\varepsilon]}(\mathcal{H}) = \sum_{m=0}^{N(\hbar)} \boldsymbol{1}_{(0,\varepsilon]}(\mathcal{H}).
    \widetilde{\chi}_{m,\hbar}(\mathcal{H}).
  \end{equation*}
  With these equalities we get the following inequality:
  \begin{align}\label{A.Full_sum}
    \begin{split}
      \lVert g_1 &(\mathcal{H}) \varphi
      \boldsymbol{1}_{(0,\varepsilon]}(\mathcal{H})\rVert_{1} \leq
      \sum_{n=0}^{N(\hbar)} \sum_{m=0}^{N(\hbar)}
      \norm{\chi_{n,\hbar}(\mathcal{H}) \varphi
        \widetilde{\chi}_{m,\hbar}(\mathcal{H})}_{1}
      \\
      = & \sum_{n=1}^{N(\hbar)}\sum_{m\geq n}^{N(\hbar)}
      \norm{\chi_{n,\hbar}(\mathcal{H}) \varphi
        \widetilde{\chi}_{m,\hbar}(\mathcal{H})}_{1} +
      \sum_{m=1}^{N(\hbar)}\sum_{n> m}^{N(\hbar)}
      \norm{\chi_{n,\hbar}(\mathcal{H}) \varphi
        \widetilde{\chi}_{m,\hbar}(\mathcal{H})}_{1}
      \\
      & \hbox{} + \sum_{n=1}^{N(\hbar)}
      \norm{\chi_{n,\hbar}(\mathcal{H}) \varphi
        \widetilde{\chi}_{0,\hbar}(\mathcal{H})}_{1}
      +\sum_{m=1}^{N(\hbar)}\norm{\chi_{0,\hbar}(\mathcal{H}) \varphi
        \widetilde{\chi}_{m,\hbar}(\mathcal{H})}_{1} +
      \norm{\chi_{0,\hbar}(\mathcal{H}) \varphi
        \widetilde{\chi}_{0,\hbar}(\mathcal{H})}_{1}.
    \end{split}
  \end{align}
  We will start by considering a term from the first double sum.
  Hence we assume that $m\geq n>0$. The support of
  $\chi_{n,\hbar}(\mathcal{H})$ is
  $[-4^n \hbar ,-4^{n-1} \hbar] = [-2^{2n}\hbar,-2^{2(n-1)}\hbar]$,
  which contains the point $-2^{2n-1}\hbar$, and similarly the support
  of $\widetilde{\chi}_{m,\hbar}(\mathcal{H})$ is
  $[4^{m-1}\hbar,4^{m}\hbar] = [2^{2(m-1)}\hbar,2^{2m}\hbar]$. We note
  that we can make the following estimate, using the spectral theorem.
  \begin{align*}
    \MoveEqLeft[3] \lVert\chi_{n,\hbar}(\mathcal{H}) \varphi
    \widetilde{\chi}_{m,\hbar}(\mathcal{H})\rVert_{1}
    \\
    ={}& \norm{\chi_{n,\hbar}(\mathcal{H}) \varphi (\mathcal{H}
      +2^{2n-1}\hbar) \widetilde{\chi}_{m,\hbar}(\mathcal{H}) (\mathcal{H}
      + 2^{2n-1}\hbar)^{-1}}_{1}
    \\
    \leq{} & \norm{\chi_{n,\hbar}(\mathcal{H})\varphi (\mathcal{H} +
      2^{2n-1}\hbar)
      \widetilde{\chi}_{m,\hbar}(\mathcal{H})}_{1}(2^{2(m-1)}\hbar+
    2^{2n-1}\hbar)^{-1}
    \\
    \leq{} &(2^{2(m-1)}\hbar+ 2^{2n-1}\hbar)^{-1} \big\{
    \norm{\chi_{n,\hbar}(\mathcal{H}) (\mathcal{H} +
      2^{2n-1}\hbar)\varphi \widetilde{\chi}_{m,\hbar}(\mathcal{H})}_{1}
    \\
    &+ \norm{\chi_{n,\hbar} (\mathcal{H}) [\varphi,\mathcal{H}]
      \widetilde{\chi}_{m,\hbar}(\mathcal{H})}_{1} \big\}
    \\
    \leq{} & \frac{2^{2n-1}}{2^{2(m-1)}+ 2^{2n-1}}
    \norm{\chi_{n,\hbar}(\mathcal{H}) \varphi
      \widetilde{\chi}_{m,\hbar}(\mathcal{H})}_{1}
    \\
    &+ (2^{2(m-1)}\hbar+ 2^{2n-1}\hbar)^{-1} \norm{\chi_{n,\hbar}
      (\mathcal{H}) [\varphi,\mathcal{H}]
      \widetilde{\chi}_{m,\hbar}(\mathcal{H})}_{1},
  \end{align*}
  With
  \begin{equation*}
    a \coloneqq  \frac{2^{2n-1}}{2^{2(m-1)}+ 2^{2n-1}} .
  \end{equation*}
  The above calculation implies that
  \begin{equation*}
    (1-a)  \norm{\chi_{n,\hbar}(\mathcal{H}) \varphi \widetilde{\chi}_{m,\hbar}(\mathcal{H})}_{1} \leq (2^{2(m-1)}\hbar+ 2^{2n-1}\hbar)^{-1}  \norm{\chi_{n,\hbar} (\mathcal{H}) 
      [\varphi,	\mathcal{H}] \widetilde{\chi}_{m,\hbar}(\mathcal{H})}_{1}.
  \end{equation*}
  This implies the following estimate
  \begin{align}\label{A.local_eq_4}
    \begin{split}
      \norm{\chi_{n,\hbar}(\mathcal{H}) \varphi
        \widetilde{\chi}_{m,\hbar}(\mathcal{H})}_{1} \leq& (1-a)^{-1}
      (2^{2(m-1)}\hbar+ 2^{2n-1}\hbar)^{-1} \norm{\chi_{n,\hbar}
        (\mathcal{H}) [\varphi,\mathcal{H}]
        \widetilde{\chi}_{m,\hbar}(\mathcal{H})}_{1}
      \\
      \leq& \frac{1}{2^{2(m-1)}\hbar}\norm{\chi_{n,\hbar}
        (\mathcal{H}) [\varphi,\mathcal{H}]
        \widetilde{\chi}_{m,\hbar}(\mathcal{H})}_{1}
    \end{split}
  \end{align}
  Due to the double sum in \eqref{A.Full_sum} we need to repeat the
  argument. By an analogous argument the following estimate holds
  \begin{equation}\label{A.local_eq_5}
    \norm{\chi_{n,\hbar} (\mathcal{H}) [\varphi,\mathcal{H}] \widetilde{\chi}_{m,\hbar}(\mathcal{H})}_{1}
    \leq \frac{1}{2^{2(m-1)}\hbar} \norm{\chi_{n,\hbar} (\mathcal{H})  
      [[\varphi,\mathcal{H}],\mathcal{H}] \widetilde{\chi}_{m,\hbar}(\mathcal{H})}_{1}.
  \end{equation}
  By combining \eqref{A.local_eq_4} and \eqref{A.local_eq_5} we get
  that
  \begin{equation}\label{A.Estimate_1_III}
    \norm{\chi_{n,\hbar} (\mathcal{H}) \varphi \widetilde{\chi}_{m,\hbar}(\mathcal{H})}_{1}
    \leq  \frac{1}{4^{2(m-1)}\hbar^2} \norm{\chi_{n,\hbar} (\mathcal{H})  
      [[\varphi,\mathcal{H}],\mathcal{H}] \widetilde{\chi}_{m,\hbar}(\mathcal{H})}_{1}.
  \end{equation}
  We will now prove that
  \begin{equation}\label{A.local_eq_12}
    \norm{\chi_{n,\hbar} (\mathcal{H}) \varphi \widetilde{\chi}_{m,\hbar}(\mathcal{H})}_{1}
    \leq C  \frac{4^{\frac{m+n}{2}}  \hbar^{3-d}}{4^{2(m-1)}\hbar^2} = \frac{16 C }{4^{\frac{3}{4}m - \frac{1}{2}n}}   \hbar^{1-d},
  \end{equation}
  for $m\geq n \geq1$ is true. By Assumption~\ref{A.local_assumptions}
  we have that
  \begin{equation}\label{A.local_eq_21}
    [[\varphi,\mathcal{H}],\mathcal{H}] = [[\varphi,H_\hbar],H_\hbar],
  \end{equation}
  since we have assumed that the operator $\mathcal{H}$ acts on
  $C_0^\infty(B(0,4R))$ as the operator $H_\hbar$. By a calculation we
  note that
  \begin{equation}\label{A.local_eq_6}
    [[\varphi,H_\hbar],H_\hbar] 
    =   \hbar^2 \sum_{j=1}^d  \sum_{l=1}^d \big[  -2( Q_l \varphi_{x_j x_l}  Q_j  + Q_j \varphi_{x_j x_l}  Q_l) + 2 \varphi_{x_j} V_{x_j} + \hbar^2  \varphi_{x_j x_j x_l x_l } \big],
  \end{equation}
  where $Q_j=-i\hbar\partial_{x_j}$ and
  $\varphi_{x_j}(x)=(\partial_{x_j}\varphi)(x)$. With this form of the
  double commutator we have
  \begin{equation}\label{A.local_eq_7}
    \norm{ R_{H_\hbar}(i) [[\varphi,\mathcal{H}],\mathcal{H}] R_{H_\hbar}(i) } \leq c \hbar^2,
  \end{equation}
  where $R_{H_\hbar}(i)=(H_\hbar-i)^{-1}$ is the resolvent at the
  point $i$, since $\mathcal{D}(H_\hbar)\subset \mathcal{D}(Q_j)$ for
  all $j\in\{1,\dots,d\}$. In order to estimate the right hand side in
  \eqref{A.local_eq_5} let $\psi$ be in $C_0^\infty(\R^d)$ such that
  $\psi(x)=1$ for all $x$ in $\supp(\varphi)$ and
  $\supp(\psi)\subset B(0,R/2)$. As the double commutator is local,
  which follows from \eqref{A.local_eq_21} and \eqref{A.local_eq_6}, we have
  \begin{equation}\label{A.local_eq_8}
    \norm{\chi_{n,\hbar} (\mathcal{H})[[\varphi,\mathcal{H}],\mathcal{H}] \widetilde{\chi}_{m,\hbar}(\mathcal{H})}_{1}
    =
    \norm{\chi_{n,\hbar} (\mathcal{H}) \psi [[\varphi,\mathcal{H}],\mathcal{H}] \psi \widetilde{\chi}_{m,\hbar}(\mathcal{H})}_{1}.
  \end{equation}
  By inserting two resolvents, applying a Cauchy-Schwarz inequality
  and the estimate \eqref{A.local_eq_7}, we have
  \begin{equation}\label{A.local_eq_9}
    \begin{aligned}
      \MoveEqLeft \lVert\chi_{n,\hbar} (\mathcal{H}) \psi
      [[\varphi,\mathcal{H}],\mathcal{H}] \psi
      \widetilde{\chi}_{m,\hbar}(\mathcal{H})\rVert_{1}
      \\
      &= \norm{\chi_{n,\hbar} (\mathcal{H}) \psi(H_\hbar-i)
        R_{H_\hbar}(i) [[\varphi,\mathcal{H}],\mathcal{H}]
        R_{H_\hbar}(i) (H_\hbar-i) \psi
        \widetilde{\chi}_{m,\hbar}(\mathcal{H})}_{1}
      \\
      &\leq c \hbar^2 \norm{\chi_{n,\hbar} (\mathcal{H})
        \psi(H_\hbar-i)}_{2} \norm{(H_\hbar-i) \psi
        \widetilde{\chi}_{m,\hbar}(\mathcal{H})}_{2}.
    \end{aligned}
  \end{equation}
  If we consider the first of the two Hilbert-Schmidt norms we have
  \begin{equation}\label{A.local_eq_10}
    \norm{\chi_{n,\hbar} (\mathcal{H}) \psi(H_\hbar-i)}_{2} \leq  \norm{\chi_{n,\hbar} (\mathcal{H}) (H_\hbar-i) \psi}_{2} +  \norm{\chi_{n,\hbar} (\mathcal{H}) [\psi,H_\hbar]}_{2}.
  \end{equation}
  By Assumption~\ref{A.local_assumptions} and
  Proposition~\ref{A.Weyl_used_local} we have
  \begin{equation}\label{A.local_eq_11}
    \norm{\chi_{n,\hbar} (\mathcal{H}) (H_\hbar-i) \psi}_{2} \leq 2  \norm{\chi_{n,\hbar} (\mathcal{H}) \psi}_{2} = 2 \sqrt{\Tr[\psi \chi_{n,\hbar} (\mathcal{H})^2 \psi ] }\leq 2 \sqrt{C 4^n \hbar^{1-d}}.
  \end{equation}
  For the second term in \eqref{A.local_eq_10} let $f$ be in
  $C_0^\infty(\R)$ such that $f(t)=1$ for
  $t\in[-\frac32 \varepsilon,\frac32 \varepsilon]$ and $f(t)=0$ for
  $\abs{t}\geq 2\varepsilon$. Then we have the bound
  \begin{equation*}
    \norm{\chi_{n,\hbar} (\mathcal{H}) [\psi,H_\hbar] }_{2}= \norm{\chi_{n,\hbar} (\mathcal{H}) f(\mathcal{H}) [\psi,H_\hbar] }_{2} \leq \norm{ f(\mathcal{H}) [\psi,H_\hbar]}_{2} \leq c\hbar^{1-\frac{d}{2}}.
  \end{equation*}
  by Lemma~\ref{A.Est_help}. Combining this estimate with
  \eqref{A.local_eq_10} and \eqref{A.local_eq_11} we get
  \begin{equation}\label{A.local_eq_17}
    \norm{\chi_{n,\hbar} (\mathcal{H}) \psi(H_\hbar-i)}_{2} \leq   \sqrt{\widetilde{C} 4^n \hbar^{1-d}}.
  \end{equation}
  By analogous estimates we also get
  \begin{equation}\label{A.local_eq_18}
    \norm{(H_\hbar-i) \psi \widetilde{\chi}_{m,\hbar}(\mathcal{H})}_{2}  \leq   \sqrt{\widetilde{C} 4^m \hbar^{1-d}}.
  \end{equation}
  Now by combing\eqref{A.local_eq_17} and \eqref{A.local_eq_18}  with \eqref{A.local_eq_8} and
  \eqref{A.local_eq_9} we get
  \begin{equation}\label{A.Estimate_3_III}
    \norm{\chi_{n,\hbar} (\mathcal{H})[[\varphi,\mathcal{H}],\mathcal{H}] \widetilde{\chi}_{m,\hbar}(\mathcal{H})}_{1}
    \leq
    C 4^{\frac{n+m}{2}} \hbar^{1-d}.
  \end{equation}
  By \eqref{A.Estimate_1_III} and \eqref{A.Estimate_3_III} we have
  the estimate \eqref{A.local_eq_12}. Using \eqref{A.local_eq_12} we
  can now estimate the double sum
  \begin{equation*}
    \sum_{n=1}^{N(\hbar)}\sum_{m\geq n}^{N(\hbar)}
    \norm{\chi_{n,\hbar}(\mathcal{H}) \varphi
      \widetilde{\chi}_{m,\hbar}(\mathcal{H})}_{1}\leq \sum_{n=1}^{\infty}\sum_{m\geq n}^{\infty} \frac{C
    }{4^{\frac{3}{4}m - \frac{1}{2}n}} \hbar^{1-d}  \leq \widetilde{C} \hbar^{1-d}.
  \end{equation*}
  The remaining terms in \eqref{A.Full_sum} can be estimated in a
  similar way. The second double sum is estimated by the same argument
  but with the roles of $m$ and $n$ interchanged. To estimate the two
  single sums we only need to introduce one commutator to make the sum
  converge and then use the same arguments as for the double sum. For
  the last term we use a Cauchy-Schwarz inequality.  Adding all our
  estimates up we have the bound
  \begin{equation}\label{A.local_eq_14}
    \norm{g_1(\mathcal{H}) \varphi \boldsymbol{1}_{(0,\varepsilon]}(\mathcal{H})}_{1} \leq C\hbar^{1-d}.
  \end{equation}
  By combining \eqref{A.local_eq_14} with \eqref{A.local_eq_3} and
  \eqref{A.local_eq_13} we get the estimate
  \begin{equation}\label{A.local_eq_19}
    \norm{\boldsymbol{1}_{(-\infty,0]}(\mathcal{H}) \varphi \boldsymbol{1}_{(0,\infty)}(\mathcal{H})}_{1}
    \leq C\hbar^{1-d}.
  \end{equation}
  Since the trace norm satisfies the equality $\norm{A}_1 = \norm{A^{*}}_1$ we also  have the bound,
  \begin{equation}\label{A.local_eq_20}
    \norm{\boldsymbol{1}_{(0,\infty)}(\mathcal{H}) \varphi \boldsymbol{1}_{(-\infty,0]}(\mathcal{H})}_{1}
    \leq C\hbar^{1-d}.
  \end{equation}
  By combining  \eqref{A.local_eq_19} and \eqref{A.local_eq_20} with \eqref{A.local_eq_1} we
  obtain the desired bound for the first part of
  \eqref{A.local_bounds}.

  For the second estimate in \eqref{A.local_bounds} we essentially
  repeat the argument. The main difference occurs when the double
  commutator $[[\varphi,\mathcal{H}],\mathcal{H}]$ is calculated. In
  this case, one has to calculate the commutator
  $[[\varphi Q_i,\mathcal{H}],\mathcal{H}]$. This can be done and one
  obtains the result
  \begin{align*}
    [[\varphi Q_i,&\mathcal{H}],\mathcal{H}] = [[\varphi
    Q_i,H_\hbar],H_\hbar]
    \\
    =& \hbar^2 \sum_{j=1}^d 2 \varphi_{x_j} V_{x_i} Q_j + 2
    \varphi_{x_j} V_{x_j} Q_i - 2 i \hbar \varphi_{x_j} V_{x_j x_i} -
    i \hbar \varphi_{x_j x_j} V_{x_i}
    \\
    &\hbox{} + \hbar^2 \sum_{k=1}^d\Big\{ 2 (\varphi V_{x_i})_{x_k}
    Q_k - i \hbar (\varphi V_{x_i})_{x_k x_k} + \sum_{j=1}^d \big[ - 4
    Q_k \varphi_{x_j x_k} Q_i Q_j
    \\
    &\hbox{} - 4 i \hbar \varphi_{x_j x_k x_k} Q_i Q_j + 2i \hbar
    \varphi_{x_j x_k} Q_i Q_j + 2 \hbar \varphi_{x_j x_j x_k} Q_i Q_k
    + \hbar^2 \varphi_{x_j x_j x_k x_k} Q_i \big] \Big\},
  \end{align*}
  where we have used Assumption~\ref{A.local_assumptions}. From this
  form we can note that again we have a bound of the type
  \begin{equation*}
    \norm{ R_{H_\hbar}(i) [[\varphi Q_i,\mathcal{H}],\mathcal{H}]R_{H_\hbar}(i)  } \leq c\hbar^2,
  \end{equation*}
  since $\mathcal{D}(H_\hbar)\subset \mathcal{D}(Q_j Q_i)$ for all
  $j,i\in\{1,\dots,d\}$.  From here the proof proceeds as above just
  with some extra terms to consider. We omit the details.
\end{proof}

\subsection{Local case without non-critical condition}

In this subsection we will apply the multiscale techniques of
\cite{MR1343781} (see also \cite{MR1631419,MR1240575}). Using this approach
will allow us to remove the non critical assumption on the
potential. Before we state and prove our theorem we will need a lemma
and a remark.
\begin{lemma}\label{A.partition_lemma}
  Let $\Omega\subset\R^d$ be an open set and let $f$ be a function in
  $C^1(\bar{\Omega})$ such that $f>0$ on $\bar{\Omega}$ and assume
  that there exists $\rho$ in $(0,1)$ such that
  \begin{equation}
    \abs{\nabla_x f(x)} \leq \rho,
  \end{equation}
  for all $x$ in $\Omega$. 
  
  Then
  \begin{enumerate}[$i)$]
	
  \item There exists a sequence $\{x_k\}_{k=0}^\infty$ in $\Omega$
    such that the open balls $B(x_k,f(x_k))$ form a covering of
    $\Omega$. Furthermore, there exists a constant $N_\rho$, depending only
    on the constant $\rho$, such that the intersection of more than
    $N_\rho$ balls is empty.
		
  \item One can choose a sequence $\{\varphi_k\}_{k=0}^\infty$ such
    that $\varphi_k \in C_0^\infty(B(x_k,f(x_k)))$ for all $k$ in
    $\N$. Moreover, for all multiindices $\alpha$ and all $k$ in $\N$
    \begin{equation*}
      \abs{\partial_x^\alpha \varphi_k(x)}\leq C_\alpha f(x_k)^{-{\abs{\alpha}}},
    \end{equation*} 	   
    and
    \begin{equation*}
      \sum_{k=1}^\infty \varphi_k(x) = 1,
    \end{equation*}
    for all $x$ in $\Omega$.
		 
  \end{enumerate}
\end{lemma}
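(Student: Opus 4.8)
The plan is to prove this as a standard Besicovitch-type covering lemma adapted to a slowly varying radius function, followed by a smooth partition of unity subordinate to the resulting cover. This is a classical construction (it appears in this form in \cite{MR1343781} and in Hörmander-type multiscale arguments), so I would keep the proof to its essential combinatorial and analytic points.

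\textbf{Part $i$): the covering and the bounded overlap.} First I would construct the sequence $\{x_k\}$ greedily: pick $x_0\in\Omega$ arbitrarily, and having chosen $x_0,\dots,x_{k-1}$, if the balls $B(x_j,f(x_j))$ for $j<k$ already cover $\Omega$ stop; otherwise choose $x_k\in\Omega\setminus\bigcup_{j<k}B(x_j,f(x_j))$. To see that this terminates in a genuine covering (or produces a countable sequence whose balls cover $\Omega$), the key is the Lipschitz bound $\abs{\nabla f}\le\rho<1$: it forces $f$ to be comparable on each ball, namely for $y\in B(x,f(x))$ one has $(1-\rho)f(x)\le f(y)\le(1+\rho)f(x)$ by the mean value theorem. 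Combined with $f>0$, this gives a local lower bound on the radii on compact subsets, so only finitely many centers can land in any compact set, and every point of $\Omega$ is eventually covered. For the bounded overlap, suppose $x\in\bigcap_{j\in J}B(x_j,f(x_j))$. Then all $f(x_j)$ for $j\in J$ are comparable (within a factor depending on $\rho$) to $f(x)$, hence to each other; also the centers $x_j$ all lie within a bounded multiple of $f(x)$ of $x$. On the other hand, the greedy construction guarantees $\abs{x_j-x_k}\ge\max(f(x_j),f(x_k))\cdot c$ for $j\ne k$ (each new center avoids the previous balls), so the $x_j$, $j\in J$, are $c'f(x)$-separated points inside a ball of radius $C f(x)$; a volume/packing argument in $\R^d$ bounds $\abs{J}$ by a constant $N_\rho$ depending only on $\rho$ and $d$.

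\textbf{Part $ii$): the partition of unity with scaled derivative bounds.} For each $k$ choose a fixed bump $\chi\in C_0^\infty(B(0,1))$ with $\chi\equiv1$ on $B(0,1/2)$, and set $\psi_k(x)=\chi\big((x-x_k)/f(x_k)\big)$, so $\psi_k\in C_0^\infty(B(x_k,f(x_k)))$ and $\abs{\partial^\alpha\psi_k}\le C_\alpha f(x_k)^{-\abs{\alpha}}$ by the chain rule. To get a genuine partition one wants $\sum_k\psi_k$ bounded below; this is where I would shrink the construction slightly — e.g. redo the greedy choice with radii $f(x_k)/2$ so that the half-balls $B(x_k,f(x_k)/2)$ already cover $\Omega$, and use bumps equal to $1$ on those half-balls. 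Then $S(x):=\sum_k\psi_k(x)\ge1$ on $\Omega$, and $S$ is locally a sum of at most $N_\rho$ terms, so $S\in C^\infty(\Omega)$ with $\abs{\partial^\alpha S(x)}\le C_\alpha f(x)^{-\abs{\alpha}}$ (using comparability of $f(x_k)$ and $f(x)$ on the overlap and bounded overlap). Finally set $\varphi_k=\psi_k/S$. That $\sum_k\varphi_k=1$ on $\Omega$ is immediate; the derivative bound $\abs{\partial^\alpha\varphi_k(x)}\le C_\alpha f(x_k)^{-\abs{\alpha}}$ follows from the quotient rule together with the lower bound $S\ge1$, the derivative bounds on $\psi_k$ and $S$, and once more the comparability $f(x)\sim f(x_k)$ on $\supp\varphi_k$.

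\textbf{Main obstacle.} The only genuinely delicate point is the bounded-overlap estimate and, relatedly, making sure the $\varphi_k$ inherit derivative bounds with the \emph{correct} scale $f(x_k)^{-\abs{\alpha}}$ rather than some worse local scale; both rely crucially on the quantitative comparability of $f$ across a ball, which is exactly what the hypothesis $\abs{\nabla f}\le\rho<1$ (with $\rho$ strictly less than $1$, so radii cannot double across a ball) buys us. Everything else — termination of the greedy algorithm, the quotient-rule bookkeeping — is routine once that comparability and the packing bound are in hand.
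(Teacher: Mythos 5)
The paper does not actually prove this lemma: it cites Lemma~5.4 of \cite{MR1343781} and refers to H\"ormander \cite[Theorem~1.4.10]{MR1996773} for the argument, so there is no in-paper proof to compare against, and your proposal should be judged on its own. Your overall strategy --- a bounded-overlap cover by balls of radius $f$, made possible by the Lipschitz comparability $f(y)\in\bigl[(1-\rho)f(x),(1+\rho)f(x)\bigr]$ for $y\in B(x,f(x))$, followed by rescaled bumps normalised by their sum --- is indeed the standard H\"ormander \enquote{slowly varying metric} construction, and parts of it (the packing bound for the overlap, the quotient-rule bookkeeping for the derivative bounds on $\varphi_k=\psi_k/S$) are correctly set up.

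There is, however, a genuine gap in part $i)$. The greedy rule \enquote{if the balls already cover $\Omega$ stop; otherwise pick any uncovered point} is not by itself guaranteed to produce a cover. The observation that only finitely many centers can land in a fixed compact set controls separation of the centers, but it does not force coverage: the process may run forever while always selecting uncovered points far from some fixed $y\in\Omega$, in which case $y$ is never covered even though every compact set contains only finitely many centers. \enquote{Only finitely many centers in any compact set} plus \enquote{the process picks uncovered points} simply does not imply \enquote{every point is eventually covered.} The standard repair, and what H\"ormander actually does, is to dispense with the greedy dynamics and instead take, by Zorn's lemma, a \emph{maximal} set $\{x_k\}$ satisfying a quantitative separation such as $|x_j-x_k|\ge \tfrac12\max\bigl(f(x_j),f(x_k)\bigr)$ for $j\ne k$. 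Local finiteness (hence countability) follows as in your sketch, and maximality combined with the Lipschitz comparability of $f$ is precisely what forces coverage: if some $y$ could not be added to the family, then for some $k$ one has $|y-x_k|<\tfrac12\max(f(y),f(x_k))$, and the inequality $f(y)\le f(x_k)+\rho|y-x_k|$ then pins $|y-x_k|$ below $f(x_k)$, i.e.\ $y\in B(x_k,f(x_k))$. Once part $i)$ is established this way, your part $ii)$ goes through essentially as written, provided one sets up the separation constant so that the balls of half the radius already cover (as you anticipate), ensuring $S=\sum_k\psi_k\ge1$.
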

This lemma is taken from \cite{MR1343781} where it is Lemma 5.4. The proof is analogous to the proof of \cite[Theorem
1.4.10]{MR1996773}.
\begin{remark}\label{A.con_remark}
  A crucial step in the following proof is scaling of our
  operator. Let $D_f$ and $T_z$, for $f>0$ and $z\in\R^d$, be the unitary dilation and
  translation operators defined by
  \begin{equation*}
    (D_f u)(x) = f^\frac{d}{2} u(fx) ,
  \end{equation*}
  and
  \begin{equation*}
    (T_z u)(x) = u(x+z) ,
  \end{equation*}
  for $u$ in $L^2(\R^d)$. We let $f$ be a positive number and suppose
  $\mathcal{H}$ satisfies Assumption~\ref{A.local_assumptions} with $\Omega$ being
  the open ball $B(z,f)$. We will consider
  the operator
  \begin{equation*}
    \widetilde{\mathcal{H}} = f^{-2} ( T_z U_f) \mathcal{H}( T_z U_f)^{*}.
  \end{equation*}
  The operator $\widetilde{\mathcal{H}}$ is selfadjoint and lower
  semibounded since $\mathcal{H}$ is assumed to be selfadjoint and
  lower semibounded which is the first part of
  Assumption~\ref{A.local_assumptions}. The last part of the
  assumption will be fulfilled with the set $B(0,1)$, the function
  $\widetilde{V}_f(x)=f^{-2} V(fx+z)$ and a scaled $\hbar$ which we will
  call $h$. To see this note that for $\varphi \in C_0^\infty(B(0,1))$
  it holds that $( T_z U_f)^{*} \varphi$ is an element of
  $C_0^\infty(B(z,f))$ since
  \begin{equation*}
    ( T_z U_f)^{*} \varphi(x) = f^{-\frac{d}{2}} \varphi\big( \tfrac{x-z}{f} \big). 
  \end{equation*}
  Hence we have that, using Assumption~\ref{A.local_assumptions} for
  $\mathcal{H}$
  \begin{equation}
    \widetilde{\mathcal{H}} \varphi = -\left(\tfrac{\hbar}{ f^2}\right)^2  \Delta \varphi (x) + f^{-2} V(fx+z) \varphi(x),
  \end{equation}
  This calculation shows that our operator $\widetilde{\mathcal{H}}$
  satisfies Assumption~\ref{A.local_assumptions} with $\Omega=B(0,1)$,
  $V_{loc}=\widetilde{V}_f$ and the new \enquote{Planck's constant}
  $h= \frac{\hbar}{ f^2}$.
\end{remark}
We are now ready to remove the non-critical assumption.
\begin{thm}\label{A.main_local_nc}
  Suppose the operator $\mathcal{H}$ acting in $L^2(\R^d)$ with
  $d\geq2$ obeys Assumption~\ref{A.local_assumptions} with an open set
  $\Omega\subset\R^d$ and let $\hbar_0$ be a strictly positive number.  For $\psi$ in $C_0^\infty(\Omega)$ it holds
  that
  \begin{equation}\label{A.bound_without_nc}
    \norm{[\boldsymbol{1}_{(-\infty,0]}(\mathcal{H}), \psi]}_{1} \leq C\hbar^{1-d}
    \quad\text{and}\quad
    \norm{[\boldsymbol{1}_{(-\infty,0]}(\mathcal{H}),  \psi Q_j]}_{1} \leq  C\hbar^{1-d},
  \end{equation}
  for all $\hbar$ in $(0,\hbar_0]$, where $C$ is a positive constant.
\end{thm}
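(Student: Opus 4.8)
The plan is to reduce Theorem~\ref{A.main_local_nc} to the non-critical local case (Theorem~\ref{A.Main_Local_Theorem _wnc}) by a multiscale partition of unity governed by the function $f(x) = $ some fixed multiple of $\sqrt{|V(x)|^2 + |\nabla V(x)|^4}^{1/2} + \hbar$-type quantity, more precisely a regularized version of $\bigl(|V(x)| + |\nabla V(x)|^2 + \hbar\bigr)^{1/2}$ truncated so that its gradient is bounded by some $\rho \in (0,1)$. On each ball $B(x_k, f(x_k))$ of the resulting Besicovitch-type covering supplied by Lemma~\ref{A.partition_lemma}, one rescales the operator via Remark~\ref{A.con_remark}, i.e. passes to $\widetilde{\mathcal H}_k = f(x_k)^{-2}(T_{x_k} U_{f(x_k)}) \mathcal{H}(T_{x_k} U_{f(x_k)})^*$ with new semiclassical parameter $h_k = \hbar/f(x_k)^2$. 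The point of choosing $f$ this way is that on the rescaled ball $B(0,1)$ the rescaled potential $\widetilde V_{f(x_k)}$ together with $h_k$ satisfies the non-critical condition \eqref{A.noncrit.Main_Local_Theorem_wnc} with a constant $c$ \emph{uniform in $k$} — this is exactly the standard Ivrii--Sj\"ostrand--Zworski balance, and is the technical heart of why $f$ must scale like (potential $+$ gradient$^2$ $+ \hbar)^{1/2}$.

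First I would set up the partition: write $\psi = \sum_k \psi \varphi_k$ with $\varphi_k$ from Lemma~\ref{A.partition_lemma}, so that $[\boldsymbol 1_{(-\infty,0]}(\mathcal H), \psi] = \sum_k [\boldsymbol 1_{(-\infty,0]}(\mathcal H), \psi\varphi_k]$, and note the sum is locally finite with overlap bounded by $N_\rho$. For each $k$, conjugating by the unitary $T_{x_k} U_{f(x_k)}$ leaves the trace norm invariant and turns $[\boldsymbol 1_{(-\infty,0]}(\mathcal H), \psi\varphi_k]$ into $[\boldsymbol 1_{(-\infty,0]}(\widetilde{\mathcal H}_k), \widetilde{\psi\varphi_k}]$ where the rescaled cutoff is supported in a small ball inside $B(0,1)$ and has derivatives controlled by $f(x_k)^{-|\alpha|}$, hence (after absorbing the $f(x_k)^{-1}$ factors) bounded uniformly in $k$ once expressed in the rescaled variable. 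Applying Theorem~\ref{A.Main_Local_Theorem _wnc} to $\widetilde{\mathcal H}_k$ on $B(0,1)$ (with $R$ chosen so $B(0,R/2) \supset$ supp of the rescaled cutoff, adjusting the covering radii by a fixed constant) gives $\|[\boldsymbol 1_{(-\infty,0]}(\widetilde{\mathcal H}_k), \widetilde{\psi\varphi_k}]\|_1 \leq C h_k^{1-d} = C(\hbar/f(x_k)^2)^{1-d} = C\hbar^{1-d} f(x_k)^{2(d-1)}$, with $C$ uniform in $k$. The analogous statement for the momentum commutator $\psi Q_j$ works the same way, noting that under the dilation $Q_j = -i\hbar\partial_{x_j}$ rescales to $f(x_k)^{-1}$ times the rescaled momentum $-ih_k\partial_{x_j}$, and the extra $f(x_k)^{-1}$ is harmless.

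Summing, the bound becomes $\|[\boldsymbol 1_{(-\infty,0]}(\mathcal H), \psi]\|_1 \leq C\hbar^{1-d} \sum_k f(x_k)^{2(d-1)}$, so the remaining task is to show $\sum_k f(x_k)^{2(d-1)} \leq C$. This is where one integrates out the partition: because the balls $B(x_k, f(x_k))$ have bounded overlap and each has volume comparable to $f(x_k)^d$, one has $f(x_k)^{2(d-1)} \lesssim f(x_k)^{d - 2(d-1)} \int_{B(x_k, f(x_k))} f(y)^{-d}\,dy$, wait — more carefully, using that $f(y)$ is comparable to $f(x_k)$ on $B(x_k, f(x_k))$ (a consequence of $|\nabla f| \leq \rho < 1$), one gets $\sum_k f(x_k)^{2(d-1)} \lesssim \sum_k \int_{B(x_k,f(x_k))} f(y)^{2(d-1)-d}\,dy \lesssim N_\rho \int_{\Omega \cap \{\text{relevant region}\}} f(y)^{d-2}\,dy$, and since the integrand only sees the region where $\psi$ is supported (bounded) and $f \sim (|V| + |\nabla V|^2 + \hbar)^{1/2}$ is bounded above there, with the singularity $f^{d-2}$ being integrable for $d \geq 2$ (for $d = 2$ it is $f^0 = 1$; for $d \geq 3$ it is a positive power, hence bounded), the integral is finite uniformly in $\hbar \in (0,\hbar_0]$. \textbf{The main obstacle} I anticipate is precisely this summability/integration step — one must choose the exponent in the definition of $f$ and verify the rescaled non-critical constant is genuinely uniform in $k$, and then check that the geometric series from the multiscale bound, after conversion to an integral via bounded overlap, converges with a constant independent of $\hbar$; the borderline dimension $d=2$ and the contribution of the region $\{f \sim \sqrt\hbar\}$ (where many small balls pile up) are the delicate points, handled by the $+\hbar$ in the definition of $f$ which floors the ball radii at order $\sqrt\hbar$ and keeps their number at most $O(\hbar^{-d/2})$, exactly matched by the $h_k^{1-d}$ gain.
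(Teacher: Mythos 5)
Your proposal follows essentially the same multiscale strategy as the paper: the weight function $f\sim\bigl(|V|^2+|\nabla V|^4+\hbar^2\bigr)^{1/4}$ with small gradient, the Besicovitch-type partition from Lemma~\ref{A.partition_lemma}, unitary rescaling to new Planck constant $h_k=\hbar/f(x_k)^2$ so that the non-critical condition of Theorem~\ref{A.Main_Local_Theorem _wnc} holds uniformly in $k$, and conversion of the resulting sum over balls into an integral using bounded overlap and comparability of $f$ on each ball. The bookkeeping in your $\psi$-case is correct and leads, as you note, to $\sum_k f(x_k)^{2(d-1)}\lesssim\int f^{d-2}<\infty$ for $d\geq 2$.

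One slip: with $W=T_{x_k}U_{f(x_k)}$ one has $WQ_jW^{*}=-i\hbar f(x_k)^{-1}\partial_{x_j}=f(x_k)\,\widetilde Q_j$ where $\widetilde Q_j=-ih_k\partial_{x_j}$, so the momentum picks up a factor of $f(x_k)$, not $f(x_k)^{-1}$. The assertion that the extra $f(x_k)^{-1}$ is ``harmless'' is not immediate, since $f$ can be as small as a multiple of $\sqrt{\hbar}$; with the wrong sign the final integral in $d=2$ becomes $\int f^{-1}$, which still converges but needs an argument. With the correct factor $f(x_k)$ the bound is $C\hbar^{1-d}f(x_k)^{2d-1}$ and the sum reduces to $\int f^{d-1}$, a strictly positive power, so convergence is automatic. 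This matches the paper's remark that one picks up one extra power of $f(x_k)$ for each derivative in the observable.
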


\begin{proof}
  First note that by assumption $\psi$ is in $C_0^\infty(\Omega)$.
  Hence there exists $\varepsilon>0$ such that
  \begin{equation*}
    \text{dist}(\supp(\psi), \partial\Omega) >\varepsilon.
  \end{equation*}
  We define the function $f$ by
  \begin{equation}\label{A.def_of_func_f}
    f(x) = A^{-1} \left[ V(x)^2 + \abs{\nabla_xV(x)}^4 + \hbar^2 \right]^\frac14 \qquad A>0,
  \end{equation}
  where we have to choose a sufficiently large $A$. It can be noted
  that $f$ is a positive function due to $\hbar$ being a fixed positive
  number. We will need to choose $A$ such that
  \begin{equation}\label{A.l_bound}
    f(x) \leq \frac{\varepsilon}{9} \quad\text{and}\quad \abs{\nabla_x f(x)} \leq \rho < \frac18.
  \end{equation}
  Since $V$ is smooth with compact support $A$ can be chosen such that
  \eqref{A.l_bound} is satisfied. The construction of $f$
  allows us to choose $A$ such that the bounds are valid for all
  $\hbar$ in $(0,\hbar_0]$. Hence $A$ will be independent of $\hbar$, for
  $\hbar$ in the interval $(0,\hbar_0]$. Moreover, we observe that this
  construction gives the estimates
  \begin{equation}\label{A.uniform-est-V_f}
    \abs{V(x)} \leq A f(x)^2, 
    \quad\text{and}\quad
    \abs{\partial_{x_i} V(x)} \leq  A f(x).
  \end{equation}
  This observation will prove useful for controlling bounds on
  some derivatives.

  By Lemma~\ref{A.partition_lemma} with the set $\Omega$ and our
  function $f$ there exists a sequence $\{x_k\}_{k=0}^\infty$ in
  $\Omega$ such that
  $\Omega\subset \bigcup_{k=0}^\infty B(x_k, f(x_k))$ and there exists
  a constant $N_{\frac18}$ in $\mathbb{N}$ such that
  \begin{equation*}
    \bigcap_{k \in \mathcal{I}} B(x_k,f(x_k)) = \emptyset,
  \end{equation*}
  for all $\mathcal{I}\subset \mathbb{N}$ such that
  $\#\mathcal{I}>N_{\frac18}$. Moreover, there exists a sequence
  $\{\varphi_k\}_{k=0}^\infty$ such that
  $\varphi_k \in C_0^\infty(B(x_k,f(x_k)))$,
  \begin{equation*}
    \abs{\partial_x^\alpha \varphi_k} \leq C_\alpha f(x_k)^{-\abs{\alpha}} \qquad \forall \alpha\in \N^d,
  \end{equation*}
  and
  \begin{equation*}
    \sum_{k=1}^\infty \varphi_k(x) = 1 \qquad \forall x\in \Omega.
  \end{equation*}
  Since $\supp(\psi)\subset \Omega$ the union
  $\bigcup_{k=0}^\infty B(x_k, f(x_k))$ forms an open cover of
  $\supp(\psi)$ by assumption the support is compact hence there
  exists $\mathcal{I} \subset \mathbb{N}$ such that $\# \mathcal{I}<\infty$
  and
  \begin{equation*}
    \Omega \subset \bigcup_{k \in \mathcal{I}} B(x_k, f(x_k)).
  \end{equation*}
  We can assume that each ball has a nontrivial intersection with
  $\Omega$. Since at most $N_\frac18$ balls intersect nontrivially we can
  without loss of generality assume that
  \begin{equation*}
    \sum_{k\in \mathcal{I}} \varphi_k(x) = 1 \qquad \forall x\in \supp(\psi).
  \end{equation*}
  From this we get the following estimate:
  \begin{equation}\label{A.Esimate_4.1}
    \norm{[\boldsymbol{1}_{(-\infty,0]}(\mathcal{H}), \psi]}_{1} \leq \sum_{k\in \mathcal{I}}\norm{[\boldsymbol{1}_{(-\infty,0]}(\mathcal{H}),\varphi_k \psi]}_{1}.
  \end{equation}
  We will consider each term separately. We can note that the function
  $\varphi_k\psi$ is smooth and supported in the ball
  $B(x_k,f(x_k))$. The idea is now to make a unitary conjugation of
  our commutator such that a non-critical assumption is obtained.

  Let $T_{x_k}$ be the unitary translation with $x_k$ and let
  $U_{f(x_k)}$ be the unitary scaling operator with $f(x_k)$. We will
  use the notation from Remark~\ref{A.con_remark} and let
  \begin{equation*}
    \widetilde{\varphi_k\psi}(x) = \varphi_k\psi(f(x_k)x+x_k).
  \end{equation*}
  Since the trace norm is invariant under unitary conjugation we have
  that
  \begin{align*}
    \MoveEqLeft \norm{[\boldsymbol{1}_{(-\infty,0]}(\mathcal{H}),\varphi_k
      \psi]}_{1}
    \\
    &=f(x_k)^2 \norm{f(x_k)^{-2} (T_{x_k} U_{f(x_k)})
      [\boldsymbol{1}_{(-\infty,0]}(\mathcal{H}),\varphi_k \psi]
      (T_{x_k} U_{f(x_k)})^*}_{1}
    \\
    &=f(x_k)^2
    \norm{[\boldsymbol{1}_{(-\infty,0]}(\widetilde{\mathcal{H}}),\widetilde{\varphi_k
        \psi}]}_{1}.
  \end{align*}
  By Remark~\ref{A.con_remark}, $\widetilde{\mathcal{H}}$
  satisfies Assumption~\ref{A.local_assumptions} with
  $h=\hbar f(x_k)^{-2}$, $\widetilde{V}_f$ and $B(0,8)$, since by
  construction we have that $B(x_k,8f(x_k)) \subset \Omega$.

  For all $x$ in $B(x_k,8f(x_k))$ we have that
  \begin{equation}\label{A.Main_wonc_pr_1}
  	\begin{aligned}
    \MoveEqLeft f(x) = f(x) - f(x_k) + f(x_k)
    \\
    &\geq - \max_{c\in[0,1]} \abs{\nabla_xf(cx+(1-c)x_k)} \abs{x-x_k}
    +f(x_k)
    \\
    &\geq (1- 8\rho)f(x_k).
    \end{aligned}
  \end{equation}
  Analogously we can note that
  \begin{align}\label{A.Main_wonc_pr_2}
    f(x) \leq (1+ 8\rho)f(x_k),
  \end{align}
  for all $x$ in $B(x_k,8 f(x_k))$. We note that the numbers
  $1\pm 8\rho$ are independent of $k$. The aim is to use
  Theorem~\ref{A.Main_Local_Theorem _wnc}. To see that the non-critical assumption \eqref{A.main_local_nc} is satisfied we note that
  \begin{align*}
    \abs{\widetilde{V}_f(x)} +& h + \abs{\nabla_x \widetilde{V}_f(x)}^2
    \\
    &= f(x_k)^{-2}\left( \abs{V(f(x_k)x+x_k)} + \hbar + \abs{(\nabla_x
        V)(f(x_k) x + x_k )}^2 \right)
    \\
    &= f(x_k)^{-2}\left( \sqrt{\abs{V(f(x_k)x+x_k)}^2} +
      \sqrt{\hbar^2} + \sqrt{\abs{(\nabla_x V)(f(x_k) x + x_k )}^4}
    \right)
    \\
    &\geq f(x_k)^{-2}\left(\abs{V(f(x_k)x+x_k)}^2 +\hbar^2 +
      \abs{(\nabla_x V)(f(x_k) x + x_k )}^4 \right)^{\frac12}
    \\
    &= f(x_k)^{-2} A^2 f(f(x_k) x + x_k )^2
    \\
    &\geq c A^2 >0.
  \end{align*}
Here we used \eqref{A.Main_wonc_pr_1} and \eqref{A.Main_wonc_pr_2} to get the cancelation.
  Therefore the assumption \eqref{A.main_local_nc} is valid for the operator $\widetilde{\mathcal{H}}$. In
  order to ensure uniformity of the error terms from
  Theorem~\ref{A.Main_Local_Theorem _wnc} we need the derivatives of
  $\widetilde{V}_f$ and $\widetilde{\varphi_k \psi}$ to be bounded
  uniformly in $k$. We note that
  \begin{equation*}
    \abs{\partial_x^\alpha \widetilde{V}_f} 
    = \abs{ f(x_k)^{\abs{\alpha}-2} (\partial_x^\alpha V)(f(x_k)x+x_k) } \leq C_\alpha,
  \end{equation*}
  where we in the cases of $\alpha=0$ and $\abs{\alpha}=1$ use the
  estimates from equation \eqref{A.uniform-est-V_f}. For
  $\widetilde{\varphi_k \psi}$ we note that
  \begin{align*}
    \abs{\partial_x^\alpha \widetilde{\varphi_k \psi} } &=
    \abs{f(x_k)^{\abs{\alpha}} \sum_{\beta\leq \alpha}
      {\binom{\alpha}{\beta}} (\partial_x^{\beta}
      \varphi_k)(f(x_k)x+x_k) (\partial_x^{\alpha - \beta}
      \psi)(f(x_k)x+x_k) }
    \\
    &\leq \sum_{\beta\leq \alpha} {\binom{\alpha}{\beta}}
    f(x_k)^{\abs{\alpha-\beta}} \abs{(\partial_x^{\alpha - \beta}
      \psi)(f(x_k)x+x_k) } \leq \widetilde{C}_\alpha.
  \end{align*}
  Lastly we need to verify that the new semiclassical parameter is bounded. By the choice of $A$ we have
  \begin{equation*}
  	h_k= \frac{\hbar}{f(x_k)^2} \leq A^2,
  \end{equation*}
where we have used the definition of the function $f$ \eqref{A.def_of_func_f}. Hence we are in a situation where we can use
  Theorem~\ref{A.Main_Local_Theorem _wnc} which implies that
  \begin{equation}\label{A.Main_wonc_pr_3}
  	\begin{aligned}
    \norm{[\boldsymbol{1}_{(-\infty,0]}(\mathcal{H}),\varphi_k
      \psi]}_{1} &=f(x_k)^2
    \norm{[\boldsymbol{1}_{(-\infty,0]}(\widetilde{\mathcal{H}}),\widetilde{\varphi_k
        \psi}]}_{1}
    \\
    &\leq f(x_k)^2 c \left(\frac{\hbar}{f(x_k)^2}\right)^{1-d}
    \\
    &\leq C \hbar^{1-d} \int_{B(x_k,f(x_k))} f(x)^{d} \,d{x},
    \end{aligned}
  \end{equation}
  with $C$ independent of $k$ in $\mathcal{I}$ and where we also have used \eqref{A.Main_wonc_pr_1} and \eqref{A.Main_wonc_pr_2} in the last estimate. Since $f$ is a bounded function and at most $N_{\frac18}$ of the
  balls $B(x_k,f(x_k))$ can intersect non-empty we get the estimate
  \begin{equation}\label{A.Main_wonc_pr_4}
    \sum_{k\in \mathcal{I}} \int_{B(x_k,f(x_k))} f(x)^{d} \;d{x} \leq
    C(N_{\frac18}) \text{Vol}(\Omega).
  \end{equation}
  By combining \eqref{A.Main_wonc_pr_3} and \eqref{A.Main_wonc_pr_4} with \eqref{A.Esimate_4.1} we
  get the estimate
  \begin{equation*}
    \norm{[\boldsymbol{1}_{(-\infty,0]}(\mathcal{H_\hbar}), \psi]}_{1} \leq \sum_{k\in \mathcal{I}}\norm{[\boldsymbol{1}_{(-\infty,0]}(\mathcal{H_\hbar}),\varphi_k \psi]}_{1} \leq C \hbar^{1-d} ,
  \end{equation*}
  where $C$ depends on the set $\Omega$, the number $N_{\frac18}$, the
  derivatives of $\psi$ and the potential $V$.  We now need to prove
  the second bound in \eqref{A.bound_without_nc}. The proof of this
  bound is completely analogous. Notice that when the
  unitary conjugation is made one should multiply by
  $f(x_k)^3 f(x_k)^{-3}$ instead of $f(x_k)^2 f(x_k)^{-2}$ due to the extra derivative. This ends
  the proof.
\end{proof}

\section{Proof of Theorem~\ref{A.Main_Theorem} and Corollary~\ref{A.cor_Main_Theorem}}\label{A.Proof_of_Main}
In this section we will use the results obtained in the previous
sections to prove Theorem~\ref{A.Main_Theorem} and then use this
theorem to prove Corollary~\ref{A.cor_Main_Theorem}. First the proof
of Theorem~\ref{A.Main_Theorem}:
\begin{proof}[Proof of Theorem~\ref{A.Main_Theorem}]
  Recall that we are in the setting with $H_\hbar=-\hbar^2 \Delta + V$
  being a Schr\"odinger operator acting in $L^2(\R^d)$ with $d\geq2$,
  where $V$ satisfies Assumption~\ref{A.general_assumptions_on_V} and $\hbar$ is bounded by a strictly positive number $\hbar_0$. We
  will here prove the following bounds
  \begin{equation}\label{A.cbound_3}
    \norm{[\boldsymbol{1}_{(-\infty,0]}(H_\hbar),x_i]}_{1} \leq C\hbar^{1-d}
    \quad\text{and}\quad 
    \norm{[\boldsymbol{1}_{(-\infty,0]}(H_\hbar),Q_j]}_{1} \leq C\hbar^{1-d} ,
  \end{equation}
  where $Q_j=-i\hbar\partial_{x_i}$ and $j \in\{1,\dots,d\}$.

  Without loss of generality we can assume that $V$ attains negative
  values. If not, then $H_\hbar$ would be a positive operator with
  purely positive spectrum which implies both commutators would be
  zero and hence satisfy the estimate.

  By assumption we have the open set $\Omega_V$ for which
  $V\in C^\infty(\Omega_V)$ and the bounded set $\Omega_\varepsilon$
  satisfying that $\overline{\Omega}_\varepsilon \subset
  \Omega_V$. Hence we can find an open set $U$  satisfying
  that it is bounded and
  \begin{equation*}
  \Omega_\varepsilon \subset  \subset U \subset \subset \Omega_V
  \end{equation*}
where $ \subset \subset$ means compactly imbedded.
  We let $\chi$ be in $C_0^\infty(U)$ such that $0\leq\chi\leq1$ and
  $\chi(x)=1$ for all $x$ in $\overline{\Omega}_\varepsilon$. Moreover we let
  $\widetilde{\chi}$ be in $C_0^\infty(\Omega_V)$ such that
  $0\leq\widetilde{\chi}\leq1$ and $\widetilde{\chi}(x)=1$ for all $x$ in
  $\overline{U}$. With these sets and functions we have that our
  operator $H_\hbar$ satisfies Assumption~\ref{A.local_assumptions}
  with $\Omega=U$ and $V_{loc}=V\widetilde{\chi}$. With this setup we
  are ready to prove the bounds in \eqref{A.cbound_3}.
 
  We will now consider the first commutator in \eqref{A.cbound_3} and
  note that
  \begin{equation}\label{A.proof_main_1}
    \norm{[\boldsymbol{1}_{(-\infty,0]}(H_\hbar),x_i]}_{1} \leq \norm{[\boldsymbol{1}_{(-\infty,0]}(H_\hbar),\chi x_i]}_{1} 
    + \norm{[\boldsymbol{1}_{(-\infty,0]}(H_\hbar),(1-\chi)x_i]}_{1}.
  \end{equation}
  For the first term in \eqref{A.proof_main_1} we are in a situation where we can use
  Theorem~\ref{A.main_local_nc} since $\chi x_i$ is in
  $C_0^\infty(U)$ and $H_\hbar$ satisfies
  Assumption~\ref{A.local_assumptions} with $\Omega=U$. Then the
  theorem gives us the bound:
  \begin{equation}\label{A.finalbound_1}
    \norm{[\boldsymbol{1}_{(-\infty,0]}(H_\hbar),\chi x_i]}_{1} \leq C\hbar^{1-d}.
  \end{equation}
  For the other term we note that
  \begin{align*}
    \norm{[\boldsymbol{1}_{(-\infty,0]}(H_\hbar),(1-\chi)x_i]}_{1}
    &\leq \norm{\boldsymbol{1}_{(-\infty,0]}(H_\hbar)(1-\chi)x_i}_{1}
    +\norm{(1-\chi)x_i\boldsymbol{1}_{(-\infty,0]}(H_\hbar)}_{1}
    \\
    &= 2\norm{\boldsymbol{1}_{(-\infty,0]}(H_\hbar)(1-\chi)x_i}_{1}.
  \end{align*}
 By a Cauchy-Schwarz inequality we have that
  \begin{align}\label{A.finalbound_2}
    \begin{split}
      \norm{\boldsymbol{1}_{(-\infty,0]}(H_\hbar)(1-\chi)x_i}_{1}
      &\leq \norm{\boldsymbol{1}_{(-\infty,0]}(H_\hbar)}_{2}
      \norm{\boldsymbol{1}_{(-\infty,0]}(H_\hbar)(1-\chi)x_i}_{2}
      \\
      &= \Tr(\boldsymbol{1}_{(-\infty,0]}(H_\hbar))^\frac12
      \norm{\boldsymbol{1}_{(-\infty,0]}(H_\hbar)(1-\chi)x_i}_{2}.
    \end{split}
  \end{align}
  The first term squared can be estimated by a constant times
  $\hbar^{-\frac{d}{2}}$ by Remark~\ref{A.clr_bound_remark}. For the
  second term we calculate the trace in a basis of eigenfunctions for
  $H_\hbar$.
  \begin{align}\label{A.finalbound_3}
    \begin{split}
      \norm{\boldsymbol{1}_{(-\infty,0]}(H_\hbar)(1-\chi)x_i}_{2}^2 &=
      \Tr[\boldsymbol{1}_{(-\infty,0]}(H_\hbar)(1-\chi)x_i^2(1-\chi)\boldsymbol{1}_{(-\infty,0]}(H_\hbar)]
      \\
      &= \sum_{\lambda_n \leq \frac{\varepsilon}{4}}
      \scp{\boldsymbol{1}_{(-\infty,0]}(H_\hbar)(1-\chi)x_i^2(1-\chi)\boldsymbol{1}_{(-\infty,0]}(H_\hbar)\psi_n}{\psi_n}
      \\
      &= \sum_{\lambda_n \leq 0} \norm{(1-\chi)x_i\psi_n}_{L^2(\R^d)}^2.
    \end{split}
  \end{align}
  In order to estimate the $L^2(\R^d)$-norm, we let
  $d(x)=\mathrm{dist}(x,\Omega_\varepsilon)$. For all $x$ in the
  support of $1-\chi$ we have that $d(x)>0$ since
  $\Omega_\varepsilon$ is a proper subset of the support of $\chi$. We
  can note that $V$ is an element of $L_{loc}^1(\R^d)$ hence
  Lemma~\ref{A.Agmon_type_lem} gives the existence of a constant $C$
  only depending on $V$ such that for all eigenvectors $\psi_n$ with
  eigenvalue less than $\tfrac{\varepsilon}{4}$ we have the estimate
  \begin{align*}
    \norm{e^{\delta d \hbar^{-1}} \psi_n }_{L^2(\R^d)} \leq C,
  \end{align*}
  where $\delta=\tfrac{\sqrt{\varepsilon}}{8}$. With these
  observations we can note that for all norms in the last sum of
  \eqref{A.finalbound_3} we have for all $N$ in $\N$ the bound
  \begin{equation}\label{A.trick_small_norm_end}
  	\begin{aligned}
  \norm{(1-\chi)x_i\psi_n}_{L^2(\R^d)}^2 &\leq \norm{(1-\chi)x_i e^{-\delta\varphi\hbar^{-1}}}_\infty^2
    \norm{e^{\delta\varphi\hbar^{-1}} \psi_n}_{L^2(\R^d)}^2
    \\
    &\leq C \norm{(1-\chi)x_i (\frac {\hbar}{\delta \varphi})^N
      (\frac{\delta \varphi}{\hbar})^N
      e^{-\delta\varphi\hbar^{-1}}}_\infty^2
    \\
    &\leq C_N \hbar^{2N},
    \end{aligned}
  \end{equation}
  where the constant depends on the choice of the set $U$,
  $\delta(\varepsilon)$ and the power $N$. If we now combine this
  estimate with \eqref{A.finalbound_3} we get
  \begin{equation}\label{A.final_bound_4}
    \norm{\boldsymbol{1}_{(-\infty,0]}(H_\hbar)(1-\chi)x_i}_{2}^2 \leq C \hbar^{2N-d} ,
  \end{equation}
  where we have used Remark~\ref{A.clr_bound_remark} to estimate the
  number of terms in the sum in \eqref{A.finalbound_3}. Combining \eqref{A.final_bound_4}
  with \eqref{A.finalbound_2} we get
  \begin{equation*}
    \norm{\boldsymbol{1}_{(-\infty,0]}(H_\hbar)(1-\chi)x_i}_{1} \leq C_N \hbar^{N-d}.
  \end{equation*}
  Now by combining this bound with \eqref{A.finalbound_1} we get the
  desired bound in \eqref{A.cbound_3}.

  For the second bound in \eqref{A.cbound_3} we take the same $\chi$
  as above and note that
  \begin{equation*}
    \norm{[\boldsymbol{1}_{(-\infty,0]}(H_\hbar),Q_i]}_{1} \leq \norm{[\boldsymbol{1}_{(-\infty,0]}(H_\hbar),\chi Q_i]}_{1} + \norm{[\boldsymbol{1}_{(-\infty,0]}(H_\hbar),(1-\chi)Q_i]}_{1}.
  \end{equation*}
  The first term can as above be estimated by applying
  Theorem~\ref{A.main_local_nc}. The second term will be proven to be
  small as before. We note that
  \begin{align*}
    \norm{[\boldsymbol{1}_{(-\infty,0]}(H_\hbar),(1-\chi)Q_i]}_{1}
    &\leq \norm{\boldsymbol{1}_{(-\infty,0]}(H_\hbar)(1-\chi)Q_i}_{1}
    + \norm{(1-\chi)Q_i\boldsymbol{1}_{(-\infty,0]}(H_\hbar)}_{1}
    \\
    &\leq 2
    \norm{\boldsymbol{1}_{(-\infty,0]}(H_\hbar)(1-\chi)Q_i}_{1} +\hbar
    \norm{\boldsymbol{1}_{(-\infty,0]}(H_\hbar)\partial_{x_i}\chi}_{1}.
  \end{align*}
  The second term is on the same form as the left hand side of \eqref{A.finalbound_2} and hence can be treated as above. For the first term we
  have that
  \begin{equation*}
    \norm{\boldsymbol{1}_{(-\infty,0]}(H_\hbar)(1-\chi)Q_i}_{1}  
    \leq
    \norm{\boldsymbol{1}_{(-\infty,0]}(H_\hbar)}_{2}\norm{\boldsymbol{1}_{(-\infty,0]}(H_\hbar)(1-\chi)Q_i}_{2}.
  \end{equation*}
  The first term can be controlled by
  Remark~\ref{A.clr_bound_remark}. For the second term we have that
  \begin{align*}
    \norm{\boldsymbol{1}_{(-\infty,0]}(H_\hbar)(1-\chi)Q_i}_{2} =&
    \norm{\boldsymbol{1}_{(-\infty,0]}(H_\hbar)(1-\chi)Q_i^2(1-\chi)\boldsymbol{1}_{(-\infty,0]}(H_\hbar)}_{1}^{\frac{1}{2}}
    \\
    \leq&
    \norm{\boldsymbol{1}_{(-\infty,0]}(H_\hbar)(1-\chi)(H_\hbar+c)(1-\chi)\boldsymbol{1}_{(-\infty,0]}(H_\hbar)}_{1}^{\frac{1}{2}},
  \end{align*}
  where 
  \begin{equation}
  c=1-\inf_{x \in\Omega_\varepsilon}(V(x)).
  \end{equation}
  If we now calculate the trace norm by choosing a basis of
  eigenfunctions of $H_\hbar$ we get that
  \begin{align*}
    \lVert\boldsymbol{1}_{(-\infty,0]}(H_\hbar)&(1-\chi)(H_\hbar+c)(1-\chi)\boldsymbol{1}_{(-\infty,0]}(H_\hbar)\rVert_{1}
    \\
    &= \sum_{\lambda_n \leq \frac{\varepsilon}{4}}
    \scp{\boldsymbol{1}_{(-\infty,0]}(H_\hbar)(1-\chi)(H_\hbar+c)(1-\chi)\boldsymbol{1}_{(-\infty,0]}(H_\hbar)\psi_n}{\psi_n}.
  \end{align*}
  If we consider just one of the terms we have by the IMS formula
  that
  \begin{align*}
   \MoveEqLeft \scp{\boldsymbol{1}_{(-\infty,0]}(H_\hbar)(1-\chi)(H_\hbar+c)(1-\chi)\boldsymbol{1}_{(-\infty,0]}(H_\hbar)\psi_n}{\psi_n}
    \\
    ={}& c \scp{(1-\chi)\boldsymbol{1}_{(-\infty,0]}(H_\hbar)\psi_n}{(1-\chi)\boldsymbol{1}_{(-\infty,0]}(H_\hbar)\psi_n} 
	\\	
	&+ \scp{H_\hbar(1-\chi)\boldsymbol{1}_{(-\infty,0]}(H_\hbar)\psi_n}{(1-\chi)\boldsymbol{1}_{(-\infty,0]}(H_\hbar)\psi_n}
    \\
    ={}& c \scp{(1-\chi)\boldsymbol{1}_{(-\infty,0]}(H_\hbar)\psi_n}{(1-\chi)\boldsymbol{1}_{(-\infty,0]}(H_\hbar)\psi_n}
    \\
    &+\scp{(1-\chi)H_\hbar \boldsymbol{1}_{(-\infty,0]}(H_\hbar)\psi_n}{(1-\chi)\boldsymbol{1}_{(-\infty,0]}(H_\hbar)\psi_n}
    \\
    &+ \hbar^2 \int_{\R^d} \abs{\nabla_x\chi}^2 \abs{\psi_n}^2
    \; d{x}
    \\
    \leq{}& (c+\lambda_n)\norm{(1-\chi)\psi_n}_{L^2(\R^d)}^2 + \hbar^2
    \norm{ \abs{\nabla_x\chi} \psi_n}_{L^2(\R^d)}^2.
  \end{align*}
  We can note that the number $ c+\lambda_n$ is less than or equal to
  $c+\frac{\varepsilon}{2}$ for the possible values of
  $\lambda_n$. For the two norms we can use the same trick as in \eqref{A.trick_small_norm_end} and thereby show that they are small in $\hbar$. This completes the proof.
\end{proof}
Now the proof of the corollary:
\begin{proof}[Proof of Corollary~\ref{A.cor_Main_Theorem}]
  We start by observing that the operator
  \begin{equation*}
  	[\boldsymbol{1}_{(-\infty,0]}(H_\hbar),x],
\end{equation*}
 is a trace class
  operator by Theorem~\ref{A.Main_Theorem}, where the commutator is
  interpreted as the sum of the commutators with each entry in the
  vector $x$. Moreover we note that
  \begin{align}\label{A.exp_com}
    \begin{split}
      [\boldsymbol{1}_{(-\infty,0]}(H_\hbar),e^{i \langle t,x\rangle}]
      &= \boldsymbol{1}_{(-\infty,0]}(H_\hbar)e^{i \langle t,x\rangle}
      - e^{i \langle t,x\rangle} \boldsymbol{1}_{(-\infty,0]}(H_\hbar)
      \\
      &= e^{i \langle t,x\rangle}\Big( e^{-i \langle
        t,x\rangle}\boldsymbol{1}_{(-\infty,0]}(H_\hbar)e^{i \langle
        t,x\rangle} - \boldsymbol{1}_{(-\infty,0]}(H_\hbar)\Big).
    \end{split}
  \end{align}
  We define the function $f:\R\rightarrow\mathcal{B}(L^2(\R^d))$,
  where $\mathcal{B}(L^2(\R^d))$ are the bounded operators on
  $L^2(\R^d)$, by
  \begin{equation*}
    f(s) = e^{-i \langle t,x\rangle s}\boldsymbol{1}_{(-\infty,0]}(H_\hbar)e^{i\langle t,x\rangle s}. 
  \end{equation*}
  For this function we note that
  \begin{equation*}
    e^{i \langle t,x\rangle}(f(1) - f(0)) = [\boldsymbol{1}_{(-\infty,0]}(H_\hbar),e^{i \langle t,x\rangle}].
  \end{equation*}
  By \eqref{A.exp_com} we have that
  \begin{align*}
    \frac{d}{ds}f(s) &= - i \langle t,x\rangle e^{-i\langle t,x\rangle
      s} \boldsymbol{1}_{(-\infty,0]}(H_\hbar) e^{i\langle t,x\rangle
      s} + i e^{-\langle t,x\rangle s}
    \boldsymbol{1}_{(-\infty,0]}(H_\hbar) \langle t,x\rangle
    e^{i\langle t,x\rangle s}
    \\
    &=i e^{- i \langle t,x\rangle s}
    [\boldsymbol{1}_{(-\infty,0]}(H_\hbar) ,\langle t,x\rangle ]
    e^{i\langle t,x\rangle s}.
  \end{align*}
  With this we note by the fundamental theorem of calculus that
  \begin{align*}
    \norm{ [\boldsymbol{1}_{(-\infty,0]}(H_\hbar),e^{i \langle
        t,x\rangle}]}_1 &= \norm{\int_0^1 e^{i\langle t,x\rangle(1-
        s)} [\boldsymbol{1}_{(-\infty,0]}(H_\hbar) ,\langle t,x\rangle
      ] e^{i\langle t,x\rangle s} \; ds}_1
    \\
    &\leq \norm{ [\boldsymbol{1}_{(-\infty,0]}(H_\hbar) ,\langle
      t,x\rangle ]}_1 \leq \sum_{j=1}^d \abs{t_j} \norm{
      [\boldsymbol{1}_{(-\infty,0]}(H_\hbar) ,x_j ]}_1.
  \end{align*}
  With this bound the desired result follows from
  Theorem~\ref{A.Main_Theorem}.
\end{proof}

\appendix
\section{Agmon type estimates}

In this appendix we will prove an Agmon type estimate, that is exponential decay of eigenfunctions for a Schr\"{o}dinger operator. Such results were proven by S. Agmon see \cite{MR745286}.

\begin{lemma}\label{A.Agmon_type_lem}
  Let $H_\hbar=-\hbar^2\Delta +V$ be a Schr\"odinger operator acting
  in $L^2(\R^d)$, where $V$ is in $L^1_{loc}(\R^d)$ and suppose that
  there exist an $\varepsilon>0$ and a open bounded sets $U$ such that
  \begin{equation*}
    V(x) \geq {\varepsilon} \quad \text{when } x\in U^c.
  \end{equation*}
  Let $d(x) = \dist(x,\Omega_\varepsilon)$ and $\psi$ be a normalised
  solution to the equation
  \begin{equation*}
    H_\hbar\psi=E\psi,
  \end{equation*}
  with $E<\varepsilon/4$. Then there exists a $C>0$ depending on $V$ and $\varepsilon$ such that
  \begin{equation*}
    \norm{e^{\delta \hbar^{-1} d } \psi }_{L^2(\R^d)} \leq C,
  \end{equation*}
for $\delta=\tfrac{\sqrt{\varepsilon}}{8}$.
\end{lemma}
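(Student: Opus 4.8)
The plan is to prove the Agmon-type decay estimate by the standard Agmon argument: introduce a bounded Lipschitz approximation $\Phi$ to the weight $\delta\hbar^{-1}d$, test the eigenvalue equation against $e^{2\Phi}\psi$, use the IMS-type identity to move the exponential past the Laplacian, and absorb the resulting gradient term of $e^\Phi\psi$ on the left using the strict positivity of $V-E$ outside a compact set.

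\medskip

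First I would fix a parameter and truncate the weight. Since $d$ is Lipschitz with $|\nabla d|\le 1$ a.e., for $\eta>0$ set $\Phi_\eta(x)=\delta\hbar^{-1}d(x)/(1+\eta d(x))$; this is a bounded, Lipschitz function with $|\nabla\Phi_\eta|\le \delta\hbar^{-1}$ a.e. and $\Phi_\eta\nearrow\delta\hbar^{-1}d$ as $\eta\downarrow0$. Working with $\Phi_\eta$ makes $e^{\Phi_\eta}\psi$ an honest element of the form domain (so all integrations by parts below are justified), and the final bound will be uniform in $\eta$, so Fatou/monotone convergence gives the conclusion for $\delta\hbar^{-1}d$ at the end.

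\medskip

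The key computation is the IMS/Agmon identity. Multiply $H_\hbar\psi=E\psi$ by $e^{2\Phi_\eta}\overline{\psi}$ and integrate; with $u=e^{\Phi_\eta}\psi$ one obtains
\begin{equation*}
  \hbar^2\int_{\R^d}|\nabla u|^2\,dx + \int_{\R^d}(V-E)|u|^2\,dx = \hbar^2\int_{\R^d}|\nabla\Phi_\eta|^2|u|^2\,dx \le \delta^2\int_{\R^d}|u|^2\,dx.
\end{equation*}
Now split $\R^d = \Omega_\varepsilon^c$ versus a slightly enlarged compact neighbourhood. On $\Omega_\varepsilon^c$ we have $V-E\ge \varepsilon - \varepsilon/4 = 3\varepsilon/4$; since $\delta^2=\varepsilon/64 < 3\varepsilon/4$, the term $(3\varepsilon/4-\delta^2)\int_{\Omega_\varepsilon^c}|u|^2$ can be kept on the left. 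On the compact piece where $V$ may be very negative we do not have positivity, but there $d(x)$ is bounded (by the diameter of that fixed compact set), so $e^{2\Phi_\eta}\le e^{2\delta\hbar^{-1}\cdot\mathrm{const}}$... which is \emph{not} bounded in $\hbar$. To handle this cleanly I would instead choose $d(x)=\dist(x,\Omega_\varepsilon)$ so that $d\equiv0$ on $\Omega_\varepsilon$ and only grows outside; but one still needs the small collar $\{0<d(x)\}\cap\{V<3\varepsilon/4\}$. The right move: pick the set in the hypothesis so that $\{V<3\varepsilon/4\}\subset\Omega_\varepsilon$ (shrinking $\varepsilon$ if necessary, using Assumption~\ref{A.general_assumptions_on_V}), so that $V-E\ge 3\varepsilon/4$ holds on all of $\Omega_\varepsilon^c\supset\{d>0\}$. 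Then the compact piece is contained in $\{d=0\}$, where $e^{2\Phi_\eta}=1$, and there we bound $\int_{\{d=0\}}(V-E)_-|u|^2 \le \|(V-E)_-\|_{L^{d/2+1}}\|u\mathbf{1}_{\{d=0\}}\|_{L^{2(d/2+1)/d}}^2$ type estimate — or more simply, since $\|u\mathbf{1}_{\{d=0\}}\|_2 = \|\psi\mathbf{1}_{\{d=0\}}\|_2\le\|\psi\|_2 = 1$, we just need $\int (V-E)_-|\psi|^2$ finite, which follows because $\psi$ is in the form domain and $(V-E)_-$ has compact support and is locally integrable, together with a local elliptic/Sobolev bound $\|\psi\|_{L^\infty_{loc}}\le C\hbar^{-d/4}$ coming from $-\hbar^2\Delta\psi=(E-V)\psi$. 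Collecting terms gives $(3\varepsilon/4-\delta^2)\|u\|_2^2 \le C$, i.e. $\|e^{\Phi_\eta}\psi\|_2^2\le C$ uniformly in $\eta$ and $\hbar\in(0,\hbar_0]$, and letting $\eta\to0$ finishes it.

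\medskip

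The main obstacle is the contribution from the region where $V$ is large and negative: there the Agmon weight does no good and one must control $\int(V-E)_-|\psi|^2$ directly. I expect to handle this either by arranging (via a harmless shrinking of $\varepsilon$) that this bad region lies inside $\Omega_\varepsilon$ where the weight is trivial, combined with an a priori $L^2$ bound on $\psi$ there — which is immediate since $\psi$ is normalised — plus the observation that $(V-E)_-$ is in $L^1_{loc}$ with compact support so that, after a Cauchy–Schwarz using local boundedness of $\psi$ (standard elliptic regularity for $-\hbar^2\Delta\psi = (E-V)\psi$ with $V\in L^1_{loc}$ locally, giving $\psi\in L^\infty_{loc}$), the integral is finite with an $\hbar$-dependent constant that is nonetheless $\le C$ for $\hbar\le\hbar_0$. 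Everything else is the textbook Agmon manipulation and the truncation bookkeeping.
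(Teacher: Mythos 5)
Your overall strategy (regularize the Agmon weight, test the eigenvalue equation, use the IMS identity, and absorb the gradient term via the positivity of $V-E$ outside a compact set) is the right one, and your regularization $\Phi_\eta = \delta\hbar^{-1}d/(1+\eta d)$ is a perfectly good alternative to the paper's $\varphi_\gamma = d/(1+\gamma|x|^2)$. However, there is a genuine gap: you test against $e^{2\Phi_\eta}\overline{\psi}$ directly, without inserting a spatial cutoff, and this forces you to confront the term $\int_{\Omega_\varepsilon}(E-V)|\psi|^2$ on the right-hand side. You correctly observe that this integral is finite (because $\psi$ lies in the form domain of $H_\hbar$ and $(V-E)_-$ is compactly supported), but finiteness is not enough: the lemma asserts a bound $\le C$ with $C$ depending only on $V$ and $\varepsilon$, uniform in $\hbar\in(0,\hbar_0]$ and over all admissible eigenpairs, and your argument does not deliver this. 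Your proposed repairs fail under the stated hypotheses: with $V$ only in $L^1_{loc}$ one cannot conclude $\psi\in L^\infty_{loc}$ by elliptic regularity, the $L^{d/2+1}$ integrability of $(V-E)_-$ is not assumed, and even if an estimate $\|\psi\|_{L^\infty_{loc}}\lesssim\hbar^{-d/4}$ were available it would produce a factor $\hbar^{-d/2}$ which is \emph{not} bounded on $(0,\hbar_0]$ --- it blows up as $\hbar\to0$, exactly the regime of interest.

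The paper sidesteps this entirely with a cutoff $\chi$ satisfying $\chi\equiv1$ on $\Omega_\varepsilon^c$ and $\supp\chi\subset U^c$, testing with $e^{2\delta\varphi_\gamma\hbar^{-1}}\chi^2\psi$ rather than $e^{2\delta\varphi_\gamma\hbar^{-1}}\psi$. Two support facts then do all the work: first, $e^{\delta\varphi_\gamma\hbar^{-1}}\chi\psi$ is supported in $U^c$, where $V\geq\varepsilon$, so the quadratic form of $\tfrac{\varepsilon}{2}-H_\hbar$ applied to it is nonpositive and the region where $V$ may be very negative never enters; second, the IMS commutator term involving $\nabla\chi$ is supported in $\Omega_\varepsilon\setminus U$, which lies both in $U^c$ (so still in the good region) and in $\Omega_\varepsilon$ (so $d=0$ there and the exponential weight is identically $1$), making that error term bounded by $C\|\psi\|_2^2 = C$ with no $\hbar$-dependence. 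The contribution from $1-\chi$ is handled separately and trivially, again because the weight is $1$ on $\supp(1-\chi)\subset\Omega_\varepsilon$. You should insert such a cutoff adapted to $U$ and $\Omega_\varepsilon$; without it, the negative part of $V$ cannot be controlled from the hypotheses of the lemma.
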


\begin{proof}
  We start by defining the set $\Omega_\varepsilon$ by
  \begin{equation*}
    \Omega_\varepsilon = \{ x\in\R^d \, |\, \dist(x,U)<1\}.
  \end{equation*}
  For convenience and without loss of generality we assume that
  $0\in U$, which implies that $d(x)\leq\abs{x}$ for all
  $x$ in $\R^d$. For $\gamma\in(0,1]$ we define the function
  $\varphi_\gamma$ by
  \begin{equation*}
    \varphi_\gamma(x) = \frac{d(x)}{1+\gamma \abs{x}^2}.
  \end{equation*}
  Then $\varphi_\gamma$ is a bounded function for all $\gamma$'s by
  construction. Moreover we can note that $d(x)$ is almost everywhere
  differentiable with the norm of the gradient bounded by 1 since it
  is Lipschitz continuous with Lipschitz constant 1. Hence
  $ \varphi_\gamma$ is almost everywhere differentiable. We will prove
  the bound on the 2-norm is uniform in the parameter $\gamma$ for the
  functions $\varphi_\gamma$ and let $\gamma$ tend to zero.
	
  In order to prove the desired bound we need a partition of unity. We
  let $\chi:\R^d\rightarrow\R$ be a smooth function such that
  $0\leq\chi\leq1$, $\chi(x)=1$ for all $x$ in $\Omega_\varepsilon^c$
  and $\mathrm{Supp}(\chi)\subset U^c$. For this function we note that
  \begin{align*}
    \norm{e^{\delta\varphi_\gamma\hbar^{-1}} \psi }_{L^2(\R^d)} \leq&
    \norm{e^{\delta\varphi_\gamma\hbar^{-1}}(1-\chi) \psi
    }_{L^2(\R^d)} + \norm{e^{\delta\varphi_\gamma\hbar^{-1}}\chi\psi
    }_{L^2(\R^d)}
    \\
    \leq& 1 + \norm{e^{\delta\varphi_\gamma\hbar^{-1}}\chi\psi
    }_{L^2(\R^d)},
  \end{align*}
  where we have used that $1-\chi$ is supported in
  $\Omega_\varepsilon$ and $\varphi_\gamma(x)=0$ for
  $x\in\Omega_\varepsilon$. Since $\varphi_\gamma$ is a bounded
  function the left hand side in the above inequality is well defined. What
  remains is to estimate the last term in the above inequality.

  To this end we note that since $\psi$ is an eigenfunction with
  eigenvalue $E$ we have that
  \begin{align*}
    (\tfrac{\varepsilon}{2}-E)\norm{e^{\delta\varphi_\gamma\hbar^{-1}}\chi\psi
    }_{L^2(\R^d)}^2 &= (\tfrac{\varepsilon}{2}-E)\int_{\R^d}
    e^{2\delta\varphi_\gamma\hbar^{-1}}\chi^2\abs{\psi}^2 \;dx
    \\
    &= \langle e^{2\delta\varphi_\gamma\hbar^{-1}}\chi^2 \psi,
    (\tfrac{\varepsilon}{2}-H) \psi \rangle.
  \end{align*}
  Note that the above expression is real, hence we can take the real
  part of the right hand side without changing it. If we do this and
  use the IMS-formula we get that
  \begin{align*}
    \Real(\langle e^{2\delta\varphi_\gamma\hbar^{-1}}\chi^2 \psi,
    (\tfrac{\varepsilon}{2}-H) \psi \rangle)    =& \Real(\langle e^{\delta\varphi_\gamma\hbar^{-1}}\chi \psi ,
    (\tfrac{\varepsilon}{2}-H) e^{\delta\varphi_\gamma\hbar^{-1}}\chi
    \psi \rangle)
    \\
    &\hbox{}+ \hbar^2 \int_{\R^d} \abs{\nabla
      e^{\delta\varphi_\gamma\hbar^{-1}}\chi}^2 \abs{\psi}^2 \; d{x}.
  \end{align*}
  Note that the above gradient is well defined almost everywhere due
  to our previous observations. Since
  $e^{\delta\varphi_\gamma\hbar^{-1}}\chi \psi \in \mathcal{Q}(H)$
  and is supported in $U^c$ we have that
  \begin{equation*}
    \Real(\langle e^{\delta\varphi_\gamma\hbar^{-1}}\chi \psi 
    , (\tfrac{\varepsilon}{2}-H) e^{\delta\varphi_\gamma\hbar^{-1}}\chi \psi  \rangle) \leq 0,
  \end{equation*}
  since $(\tfrac{\varepsilon}{2}-H)$ is a negative operator when
  restricted to $U^c$. From this we obtain the inequality
  \begin{equation*}
    (\tfrac{\varepsilon}{2}-E)\norm{e^{\delta\varphi_\gamma\hbar^{-1}}\chi\psi }_{L^2(\R^d)}^2  
    \leq 
    \hbar^2 \int_{\R^d} \abs{\nabla e^{\delta\varphi_\gamma\hbar^{-1}}\chi}^2\abs{\psi}^2 \; d{x}.
  \end{equation*}
  We note that
  \begin{equation}\label{A.agmont_bd_gra}
    \abs{\nabla e^{\delta\varphi_\gamma\hbar^{-1}}\chi}^2
    \leq
    4\abs{\nabla e^{\delta\varphi_\gamma\hbar^{-1}}}^2 \chi^2 
    +4 e^{2\delta\varphi_\gamma\hbar^{-1}} \abs{\nabla  \chi}^2,
  \end{equation}
  where the gradients are defined almost everywhere with respect to the
  Lebesgue measure. The first term in \eqref{A.agmont_bd_gra} is
  almost everywhere given by
  \begin{align*}
    4\abs{\nabla e^{\delta\varphi_\gamma\hbar^{-1}}}^2 \chi^2 =
    4\frac{\delta^2}{\hbar^2} \abs{\nabla \varphi_\gamma}^2
    e^{2\delta\varphi_\gamma\hbar^{-1}}\chi^2.
  \end{align*}
  We note that for $x$ in $\Omega_\varepsilon$
  $\abs{\nabla \varphi_\gamma(x)}=0$, and for almost all $x$ in
  $\Omega_\varepsilon^c$ 
  \begin{equation*}
    \abs{\nabla \varphi_\gamma(x)} 
    \leq \frac{\abs{\nabla d(x)}}{1+\gamma \abs{x}^2} + 2 \frac{d(x)\gamma\abs{x}}{(1+\gamma\abs{x}^2)^2} 
    \leq 1 +2 \frac{\gamma\abs{x}^2}{(1+\gamma\abs{x}^2)^2} \leq 2.
  \end{equation*}
  Hence for all $x$ in $\R^d$ we have,
  \begin{equation*}
    \abs{\nabla \varphi_\gamma(x)} \leq  2.
  \end{equation*}
  With these estimates we get that
  \begin{align*}
    (\tfrac{\varepsilon}{2}-E)&\norm{e^{\delta\varphi_\gamma\hbar^{-1}}\chi\psi
    }_{L^2(\R^d)}^2
    \\
    &\leq 8 \delta^2 \int_{\R^d}
    e^{2\delta\varphi_\gamma\hbar^{-1}}\chi^2 \abs{\psi}^2 \; d{x} +
    4\int_{\R^d} e^{2\delta\varphi_\gamma\hbar^{-1}} \abs{\nabla
      \chi}^2 \abs{\psi}^2 \; d{x}
    \\
    &= 8 \delta^2 \norm{e^{\delta\varphi_\gamma\hbar^{-1}}\chi\psi
    }_{L^2(\R^d)}^2 + 4\int_{\R^d} e^{2\delta\varphi_\gamma\hbar^{-1}}
    \abs{\nabla \chi}^2 \abs{\psi}^2 \; d{x}.
  \end{align*}
  This implies that
  \begin{equation*}
    (\tfrac{\varepsilon}{2}-E-  8 \delta^2)\norm{e^{\delta\varphi_\gamma\hbar^{-1}}\chi\psi }_{L^2(\R^d)}^2  
    \leq
    4\int_{\R^d} e^{2\delta\varphi_\gamma\hbar^{-1}} \abs{\nabla  \chi}^2 \abs{\psi}^2 \; d{x}	.
  \end{equation*}	
  With our choice of $\delta=\frac{\sqrt{\varepsilon}}{8}$ we have
  that
  \begin{equation*}
    (\tfrac{\varepsilon}{2}-E-  8 \delta^2)\geq \frac{\varepsilon}{2} - \frac{\varepsilon}{4} - 8\frac{\varepsilon}{64} 
    = \frac{\varepsilon}{8},
  \end{equation*}	
  which implies that
  \begin{equation*}
    \norm{e^{\delta\varphi_\gamma\hbar^{-1}}\chi\psi }_{L^2(\R^d)}^2  
    \leq
    \frac{32}{\varepsilon} \int_{\R^d} e^{2\delta\varphi_\gamma\hbar^{-1}} \abs{\nabla  \chi}^2 
    \abs{\psi}^2 \; d{x}	.
  \end{equation*}			
  We note that $\abs{\nabla \chi}^2 $ is supported on the set
  $\Omega_\varepsilon\setminus U$ and hence uniformly bounded by a
  constant which depends on the sets. Hence we get that
  \begin{align*}
    \int_{\R^d} e^{2\delta\varphi_\gamma\hbar^{-1}} \abs{\nabla
      \chi}^2 \abs{\psi}^2 \; d{x} &\leq C
    \int_{\Omega_\varepsilon\setminus U}
    e^{2\delta\varphi_\gamma\hbar^{-1}} \abs{\psi}^2 \; d{x}
    \\
    &\leq C \int_{\Omega_\varepsilon\setminus U} \abs{\psi}^2 \; d{x}
    \leq C,
  \end{align*}
  where we have used that $e^{2\delta\varphi_\gamma\hbar^{-1}}=1$ for
  all $x$ in $\Omega_\varepsilon$. This implies that there exists a
  constant $C>0$ which only depends on the potential $V$ such that
  \begin{equation*}
    \norm{e^{\delta\varphi_\gamma\hbar^{-1}}\chi\psi }_{L^2(\R^d)}^2 \leq C .	
  \end{equation*}	
  This estimate implies that we have the following uniform bound in
  $\gamma$
  \begin{equation*}
    \norm{e^{\delta\varphi_\gamma\hbar^{-1}} \psi }_{L^2(\R^d)} 
    \leq  
    1 + C.
  \end{equation*}
  By monotone convergence we can take $\gamma$ to zero and we obtain
  the desired result:
  \begin{equation*}
    \norm{e^{\delta\varphi\hbar^{-1}} \psi }_{L^2(\R^d)}
    \leq  
    C,
  \end{equation*}
  with a constant only depending on the potential $V$.
\end{proof}

\bibliographystyle{plain} \bibliography{Bib_paperA.bib}

\begin{thebibliography}{10}

\bibitem{MR745286}
S.~Agmon.
\newblock {\em Lectures on exponential decay of solutions of second-order
  elliptic equations: bounds on eigenfunctions of {$N$}-body {S}chr\"{o}dinger
  operators}, volume~29 of {\em Mathematical Notes}.
\newblock Princeton University Press, Princeton, NJ; University of Tokyo Press,
  Tokyo, 1982.

\bibitem{MR3570479}
N.~Benedikter, V.~Jak\v{s}i\'{c}, M.~Porta, C.~Saffirio, and B.~Schlein.
\newblock Mean-field evolution of fermionic mixed states.
\newblock {\em Comm. Pure Appl. Math.}, 69(12):2250--2303, 2016.

\bibitem{MR3202863}
N.~Benedikter, M.~Porta, and B.~Schlein.
\newblock Mean-field dynamics of fermions with relativistic dispersion.
\newblock {\em J. Math. Phys.}, 55(2):021901, 10, 2014.

\bibitem{MR3248060}
N.~Benedikter, M.~Porta, and B.~Schlein.
\newblock Mean-field evolution of fermionic systems.
\newblock {\em Comm. Math. Phys.}, 331(3):1087--1131, 2014.

\bibitem{MR3381147}
N.~Benedikter, M.~Porta, and B.~Schlein.
\newblock Hartree-{F}ock dynamics for weakly interacting fermions.
\newblock In {\em Mathematical results in quantum mechanics}, pages 177--189.
  World Sci. Publ., Hackensack, NJ, 2015.

\bibitem{MR1349825}
E.~B. Davies.
\newblock {\em Spectral theory and differential operators}, volume~42 of {\em
  Cambridge Studies in Advanced Mathematics}.
\newblock Cambridge University Press, Cambridge, 1995.

\bibitem{MR1735654}
M.~Dimassi and J.~Sj\"{o}strand.
\newblock {\em Spectral asymptotics in the semi-classical limit}, volume 268 of
  {\em London Mathematical Society Lecture Note Series}.
\newblock Cambridge University Press, Cambridge, 1999.

\bibitem{MR1996773}
L.~H\"{o}rmander.
\newblock {\em The analysis of linear partial differential operators. {I}}.
\newblock Classics in Mathematics. Springer-Verlag, Berlin, 2003.
\newblock Distribution theory and Fourier analysis, Reprint of the second
  (1990) edition [Springer, Berlin; MR1065993 (91m:35001a)].

\bibitem{MR1631419}
V.~Ivrii.
\newblock {\em Microlocal Analysis and Precise Spectral Asymptotics}.
\newblock Springer Monographs in Mathematics. Springer-Verlag, Berlin, 1998.

\bibitem{MR1240575}
V.~Ja. Ivrii and I.~M. Sigal.
\newblock Asymptotics of the ground state energies of large {C}oulomb systems.
\newblock {\em Ann. of Math. (2)}, 138(2):243--335, 1993.

\bibitem{MR4009687}
N.~Leopold and S.~Petrat.
\newblock Mean-field dynamics for the {N}elson model with fermions.
\newblock {\em Ann. Henri Poincar\'{e}}, 20(10):3471--3508, 2019.

\bibitem{MR2583992}
E.~H. Lieb and R.~Seiringer.
\newblock {\em The stability of matter in quantum mechanics}.
\newblock Cambridge University Press, Cambridge, 2010.

\bibitem{MR0493420}
M.~Reed and B.~Simon.
\newblock {\em Methods of modern mathematical physics. {II}. {F}ourier
  analysis, self-adjointness}.
\newblock Academic Press [Harcourt Brace Jovanovich, Publishers], New
  York-London, 1975.

\bibitem{MR897108}
D.~Robert.
\newblock {\em Autour de l'approximation semi-classique}, volume~68 of {\em
  Progress in Mathematics}.
\newblock Birkh\"{a}user Boston, Inc., Boston, MA, 1987.

\bibitem{MR1343781}
A.~V. Sobolev.
\newblock Quasi-classical asymptotics of local {R}iesz means for the
  {S}chr\"{o}dinger operator in a moderate magnetic field.
\newblock {\em Ann. Inst. H. Poincar\'{e} Phys. Th\'{e}or.}, 62(4):325--360,
  1995.

\bibitem{MR2952218}
M.~Zworski.
\newblock {\em Semiclassical analysis}, volume 138 of {\em Graduate Studies in
  Mathematics}.
\newblock American Mathematical Society, Providence, RI, 2012.

\end{thebibliography}

\end{document}